\def\FF{\mathbb F}
\def\RR{\mathbb R}
\def\CC{\mathbb C}
\def\tr{\mathop{\top}\nolimits}
\def\Tr{\mathop{\mathrm {Tr}}\nolimits}
\def\diag{\mathop{\mathrm {diag}}\nolimits}
\def\rank{\mathop{\mathrm{rank}}\nolimits}
\newtheorem{theorem}{Theorem}
\newtheorem{lemma}{Lemma}
\newtheorem{corollary}{Corollary}
\newtheorem{proposition}{Proposition}
\theoremstyle{remark}
\newtheorem{remark}{Remark}
\newtheorem{example}{Example}
\newcommand{\cal}{\mathcal}
\newcommand{\inv}[1]{{\overline{#1}}}
\newcommand{\cqfd}{\hfill $\square$}
\newcommand\myeq{\mathrel{\overset{\makebox[0pt]{\mbox{\normalfont\tiny\sffamily def}}}{\Longleftrightarrow}}}
\newcounter{myenumi}
\newsavebox{\cmm}
\savebox{\cmm}{\indent}
\newenvironment{myenumerate}[1]{
\begin{list}{
{\bf #1~\themyenumi}. } {\labelwidth=0pt
\labelsep=0pt\leftmargin=0pt\usecounter{myenumi}} }{\end{list}}
\newenvironment{bew}[2]{\removelastskip\vspace{6pt}\noindent
 {\it Proof  #1.}~\rm#2}{\par\vspace{6pt}}
\begin{document}

\title{On Minkowski space and finite geometry}

\author{Marko Orel}
\address{University of Primorska, FAMNIT, Glagolja\v{s}ka 8, 6000 Koper, Slovenia}
\address{Institute of Mathematics, Physics and Mechanics, Jadranska 19, 1000 Ljubljana, Slovenia}
\address{University of Primorska, IAM, Muzejski trg 2, 6000 Koper, Slovenia}

\email{marko.orel@upr.si}

\begin{abstract}
The main aim of this interdisciplinary paper is to characterize all maps on finite Minkowski space of arbitrary dimension $n$ that map pairs of distinct light-like events into pairs of distinct light-like events. Neither bijectivity of maps nor preservation of light-likeness in the opposite direction, i.e. from codomain to domain, is assumed. We succeed in in many cases, which include the one with $n$ divisible by 4 and the one with $n$ odd and $\geq 9$, by showing that both bijectivity of maps and the preservation of light-likeness in the opposite direction is obtained automatically. In general, the problem of whether there exist non-bijective mappings that map pairs od distinct light-like events into pairs of distinct light-like events is shown to be related to one of the central problems in finite geometry, namely to existence of ovoids in orthogonal polar space. This problem is still unsolved in general despite a huge amount of research done in this area in the last few decades.

The proofs are based on the study of a core of an affine polar graph, which yields results that are closely related to the ones obtained previously by Cameron and Kazanidis for the point graph of a polar space.
\end{abstract}

\maketitle

\section{Introduction}

When Einstein introduced special relativity~\cite{Einstein} and derived the Lorentz transformation that relates the coordinates between two frames of reference, he assumed that this transformation is an affine map, that is, a sum of an additive and a constant map. It was later proved by Aleksandrov~\cite{aleksandrov1950}, that this assumption is redundant and that the constancy of speed of light, which is equal in both frames of reference, is sufficient. More precisely, he proved that any bijective map $\Phi$ on 4-dimensional Minkowski space-time, for which a pair of events are light-like if and only if their $\Phi$-images are light-like, is up to translation and dilation a Lorentz transformation, so an affine map. Maps characterized by Aleksandrov are sometimes referred as bijective maps that \emph{preserve the speed of light} in both directions (cf.~\cite{lester}). A very similar characterization was independently obtained by Zeeman~\cite{zeeman}, who assumed in addition that causality is preserved. A generalization of Aleksandrov theorem for $n$-dimensional ($n\geq 3$) Minkowski spaces was proved by himself~\cite{aleksandrov1967} and by Borchers and Hegerfeldt~\cite{borchers}. Hua observed that pairs of distinct events $(ct_1,x_1,y_1,z_1)$ and $(ct_2,x_2,y_2,z_2)$ in 4-dimensional Minkowski space-time are light-like if and only if their corresponding complex hermitian matrices $\big[\begin{smallmatrix}ct_1+x_1& y_1+i z_1\\
y_1-iz_1& ct_1-x_1\end{smallmatrix}\big]$ and $\big[\begin{smallmatrix}ct_2+x_2& y_2+i z_2\\
y_2-iz_2& ct_2-x_2\end{smallmatrix}\big]$ are \emph{adjacent}, that is, their difference is of rank one~\cite[Chapter~5]{hua_knjiga}. Bijective maps that preserve adjacency in both directions are characterized by fundamental theorems of geometry of matrices (see the book~\cite{wan}), so Hua was able to use such a theorem for $2\times 2$ complex hermitian matrices to reprove (according to~\cite[p.~97]{hua_knjiga} independently) Aleksandrov's and Zeeman's result on 4-dimensional Minkowski space-time. Recently, \v{S}emrl and Huang generalized the fundamental theorem for complex hermitian matrices~\cite{huang_semrl}. They did not assume that the maps that preserve adjacency are bijective, and the adjacency was assumed to be preserved in just one directions. In the language of Minkowski space-time, their result applied on $2\times 2$ matrices characterizes mappings (not assumed to be bijective) that map pairs of distinct light-like events into pairs of distinct light-like events (see~\cite[Section~3]{huang_semrl} and Theorem~4.2 in the arXiv version of~\cite{drugi_del}). It turns out that beside the maps that are already characterized by Aleksandrov's theorem (i.e., bijective maps with inverses that preserve the speed of light as well), the only other type of maps are those with the image contained in a set of pairwise light-like events. The existence of such maps is trivial since the field $\RR$ of real numbers is infinite.

Perhaps the first to consider an approximation between structures over the field of real numbers and structures over a very large finite field were astronomers Kustaanheimo~\cite{kustaanheimo1950,kustaanheimo1952} and J\"{a}rnefelt~\cite{jarnefelt1951}. Motivated by some alternative theories in particle physics, Lorentz transformations over finite fields were considered by many authors: see for example Coish~\cite{coish}, Shapiro~\cite{shapiro}, Ahmavaara~\cite{ahmavaara1,ahmavaara2,ahmavaara3,ahmavaara4}, Yahya~\cite{yahya}, Joos~\cite{joos}, Beltrametti and Blasi~\cite{belbla1968JMP,belbla1968NC}. Recently, Foldes~\cite{foldes} derived an approximation result between Lorentz transformations over real numbers and Lorentz transformations over a finite field. Blasi et al~\cite[Proposition~2]{blasi} obtained an analog of Aleksandrov's result for 4-dimensional Minkowski space over finite fields of prime cardinality. Lester~\cite{lester1977} generalized this theorem to more general spaces that include $n$-dimensional ($n\geq 3$) Minkowski spaces over finite fields. Recently, the author characterized maps on  4-dimensional Minkowski space over a finite field, that map pairs of distinct light-like events into pairs of distinct light-like events (see~\cite[Theorem~4.9]{drugi_del} and~\cite{FFA}). No bijectivity was assumed
and the `speed of light' was assumed to be preserved in just one directions.
However, it turned out that both bijectivity of maps and the preservation of the `speed of light' in the opposite direction is obtained automatically. In particular, there is a fundamental difference with the real case mentioned above, since there do not exist maps with the image contained in a set of pairwise light-like events. Nonexistence of such maps is not trivial, and several tools from graph theory were applied to prove it.

It is the aim of this paper to obtain an analog of~\cite[Theorem~4.9]{drugi_del} on $n$-dimensional case, that is, to  characterize maps on $n$-dimensional Minkowski space over a finite field that map pairs of distinct light-like events into pairs of distinct light-like events. It should be immediately emphasized that unlike the usual case, when generalizing certain problems from small number of dimensions to higher dimensional objects requires just some minor modifications in the proofs, here the situation is quite different. Firstly, and less importantly, the technique used in the proof of~\cite[Theorem~4.9]{drugi_del} translated the problem of characterizing maps that preserve the speed of light on 4-dimensional Minkowski space to the problem of characterizing maps that preserve adjacency on $2\times 2$ hermitian matrices over a finite field~\cite{FFA}. There is no such relation between higher dimensional Minkowski spaces and higher dimensional hermitian matrices, so this kind of technique will not work for our purposes in this paper. Secondly, and more importantly, we shall see that, in general, the characterization of maps that preserve the speed of light on $n$-dimensional Minkowski space can be different as the one obtained in 4 dimensions. Nevertheless, a completely analogous result as in 4 dimensions will be obtained for $n\equiv 0\ (\textrm{mod}~4)$, for odd $n\geq 9$, and for some other special choices of space dimension and field cardinality. Surprisingly, we will see that for the rest of the cases the question whether there exist nonbijective maps that map pairs of distinct light-like events into pairs of distinct light-like events (and these maps include those with the image contained in a set of pairwise light-like events) is related to a well known problem from finite geometry, namely to existence of \emph{ovoids} in orthogonal polar spaces. This problem remains unsolved in general, despite an extensive research in this area performed in the last few decades. Perhaps the quickest way, for a nonspecialist, to realize the amount of research done that is related to the existence of ovoids, is to check the MathSciNet database~\cite{mathscinet}, which currently contains almost 150 articles with the word `ovoid' or `ovoids' in the title (excluding those, where `ovoid' means something else).

The paper is organized as follows. In Section~\ref{grafi} and Section~\ref{matrike} we recall several results from graph and matrix theory, respectively, which are used in the sequel. Section~\ref{polarni}, where we investigate the core of an affine polar graph, is the fundamental part of the paper. It is divided in three subsections. In the first we prove that an affine polar graph is either a core or its core is a complete graph. The other two subsections explore  which of the two possibilities actually occur. Second subsection is based on spectral graph theory, while in the third subsection tools from finite geometry, which concern ovoids in orthogonal polar spaces, are used.  Results from Section~\ref{polarni} are applied in Section~\ref{sectionMinkowski}, where we classify mappings on finite Minkowski space that map pairs od distinct light-like events into pairs of distinct light-like events.

Beside the main results of the paper, that is, Theorem~\ref{glavni}, Theorem~\ref{noncomplete-core-par-hyper}, and Theorem~\ref{noncomplete-core-elliptic}, we prove several auxiliary results in the process that may be interesting on its own. In particular, Proposition~\ref{karakterji} (ii)  may be useful in graph theory, while Proposition~\ref{propThm2} together with Theorem~\ref{noncomplete-core-par-hyper} may be interesting from finite geometric perspective.

\section{Graph theory}\label{grafi}

All graphs in this paper are finite undirected with no loops and multiple edges. Such a \emph{graph} $\Gamma$ is a pair $(V,E)$ formed by the \emph{vertex set} $V=V(\Gamma)$, which is finite, and
the \emph{edge set} $E=E(\Gamma)$, which contains some unordered pairs of distinct vertices. The \emph{complement} $\inv{\Gamma}$ of graph $\Gamma$ is a graph with the same vertex set $V(\inv{\Gamma})=V(\Gamma)$ and edges defined by $$\{u,v\}\in E(\inv{\Gamma}), u\neq v\Longleftrightarrow \{u,v\}\notin E(\Gamma), u\neq v.$$
A graph $\Gamma'$ is a \emph{subgraph} of graph $\Gamma$ if $V(\Gamma')\subseteq V(\Gamma)$ and $E(\Gamma')\subseteq E(\Gamma)$. A subgraph $\Gamma'$ is \emph{induced} by the set $U\subseteq V(\Gamma)$ if $U=V(\Gamma')$ and $E(\Gamma')=\big\{\{u,v\}\in E(\Gamma)\ :\ u,v\in U\big\}$. Graph $\Gamma$ is \emph{connected} if for every pair of distinct vertices $u,v\in V(\Gamma)$, there is a path that connects them, i.e., there are vertices $u=v_0,v_1,\ldots,v_{t-1},v_t=v$ such that $\{v_{i-1},v_i\}\in E(\Gamma)$ for all $i$. The length of the shortest path between $u$ and $v$, that is, the minimal value $t$ is the \emph{distance} between $u$ and $v$, which is denoted by $d(u,v)$. The \emph{diameter} of a connected graph equals $\max _{u,v\in V(\Gamma)} d(u,v)$.
The \emph{neighborhood of vertex} $u\in V(\Gamma)$ is the subgraph that is induced by the set $\{v\in V(\Gamma)\ :\ d(u,v)=1\}$. The \emph{closed neighborhood of vertex} $u\in V(\Gamma)$ is the subgraph that is induced by the set $\{v\in V(\Gamma)\ :\ d(u,v)\leq 1\}$.

A \emph{graph homomorphism} between two graphs $\Gamma_1$ and
$\Gamma_2$ is a map $\Phi: V(\Gamma_1)\to V(\Gamma_2)$ such that the following implication holds for any
$v,u\in V(\Gamma_1)$:
\begin{equation}\label{i6}
\{u,v\}\in E(\Gamma_1) \Longrightarrow \{\Phi(u),\Phi(v)\}\in E(\Gamma_2).
\end{equation}
In particular, $\Phi(u)\neq \Phi(v)$ for any edge $\{u,v\}$. A bijective homomorphism for which  the
converse implication of~(\ref{i6}) also holds, is a \emph{graph isomorphism}. If $\Gamma_1=\Gamma_2$, a homomorphism is an \emph{endomorphism}  and an isomorphism is an \emph{automorphism}. Since the graphs considered are finite, an automorphism is the same as bijective endomorphism (cf.~\cite[Observation~2.3]{hahn_tardif}).
A graph $\Gamma$ is a \emph{core} if all its endomorphisms are automorphisms. If $\Gamma$ is a graph, then its subgraph $\Gamma'$ is
a \emph{core of $\Gamma$} if it is a core and there exists
some homomorphism $\Phi : \Gamma\to \Gamma'$. Any
graph has a core, which is always an induced subgraph and unique
up to isomorphism~\cite[Lemma~6.2.2]{godsil_knjiga}. If $\Gamma'$ is a core of $\Gamma$, then there exists a \emph{retraction} $\Psi : \Gamma\to \Gamma'$, that is, a homomorphism, which fixes the subgraph~$\Gamma'$. In fact, if $\Phi : \Gamma\to \Gamma'$ is any homomorphism, then the restriction $\Phi|_{\Gamma'}$ is an automorphism, since $\Gamma'$ is a core. Hence, the composition $\Psi:=(\Phi|_{\Gamma'})^{-1}\circ\Phi$ is a retraction onto~$\Gamma'$.
A graph is \emph{regular} of \emph{valency} $k$, if the neighborhood of arbitrary vertex consists of $k$ vertices.
A graph $\Gamma$ is \emph{vertex-transitive} if for every pair of vertices $u,v\in V(\Gamma)$ there exists some automorphism $\Phi$ of $\Gamma$ such that $\Phi(u)=v$. A vertex-transitive graph is always regular. An \emph{arc} of a graph is just a directed edge, that is, an ordered pair $(u,v)$ of adjacent vertices. A graph $\Gamma$ is \emph{arc-transitive} if for every pair of arcs $(u_1,v_1)$ and $(u_2,v_2)$ there exists some automorphism $\Phi$ of $\Gamma$ such that $(u_2,v_2)=\big(\Phi(u_1),\Phi(v_1)\big)$.
If a graph is vertex-transitive, then its core is vertex-transitive as well (cf.~\cite[Theorem~6.13.1]{godsil_knjiga}). Analogous result with the same proof holds for arc-transitivity~\cite[p.~128, Exercise~5]{godsil_knjiga}.

A \emph{complete graph} $K_t$ on $t$ vertices is a graph
such that every pair of distinct vertices form an edge. Complete graphs are cores.
A \emph{clique} in a graph $\Gamma$ is a subset $\mathcal{K}\subseteq V(\Gamma)$ that induces a complete subgraph.
A \emph{maximum clique} is a clique of the largest possible cardinality, which is referred as the \emph{clique number} $\omega(\Gamma)$ of graph $\Gamma$. A set $\mathcal{I}\subseteq V(\Gamma)$ is \emph{independent} if none of the pairs of vertices in $\mathcal{I}$ form an edge. The \emph{independence number} $\alpha(\Gamma)$ is the cardinality of the largest independent set in $\Gamma$. Since cliques and independent sets swap their roles in the complement of a graph, we have
\begin{equation}\label{e12}
\omega (\inv{\Gamma})=\alpha(\Gamma)\quad \textrm{and}\quad \alpha (\inv{\Gamma})=\omega(\Gamma).
\end{equation}
It is well known that, for a vertex-transitive graph,
\begin{equation}\label{e23}
\omega(\Gamma)\alpha(\Gamma)\leq |V(\Gamma)|,
\end{equation}
where $|V(\Gamma)|$ denotes the number of vertices in the graph (cf.~\cite[Corollary~3.11, p.~1471]{handcomb}). Moreover, the following holds.
\begin{lemma}\label{klikaneod}{\cite[p.~148]{cameron_kan}}
If the core of a vertex-transitive graph $\Gamma$ is a complete graph, then $\omega(\Gamma)\alpha(\Gamma)= |V(\Gamma)|$.
\end{lemma}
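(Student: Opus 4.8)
The plan is to combine the homomorphism $\Gamma\to\Gamma'$ that comes with the definition of a core together with inequality~(\ref{e23}). Assume the core $\Gamma'$ of $\Gamma$ is the complete graph $K_t$ on $t$ vertices, and fix a homomorphism $\Psi\colon\Gamma\to\Gamma'$ (by definition of a core such a map exists; one may even take a retraction, but this is not needed). The goal is to squeeze $|V(\Gamma)|$ between $\omega(\Gamma)\alpha(\Gamma)$ from both sides, so that the displayed inequality~(\ref{e23}) becomes an equality.

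First I would show that $t=\omega(\Gamma)$. On one hand, $\Gamma'=K_t$ is an induced subgraph of $\Gamma$ (any core of a graph is an induced subgraph), so $\Gamma$ contains a clique on $t$ vertices and $\omega(\Gamma)\geq t$. On the other hand, if $\mathcal{K}\subseteq V(\Gamma)$ is an arbitrary clique, then $\Psi$ restricted to $\mathcal{K}$ is injective and carries $\mathcal{K}$ onto a clique of $K_t$ of the same cardinality, because a graph homomorphism sends an edge to an edge, and in particular sends distinct adjacent vertices to distinct vertices; hence $|\mathcal{K}|\leq\omega(K_t)=t$, giving $\omega(\Gamma)\leq t$. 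Thus $\omega(\Gamma)=t$.

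Next, the fibers $\Psi^{-1}(v)$ with $v\in V(K_t)$ partition $V(\Gamma)$ into $t$ classes, and each such class is an independent set of $\Gamma$: two vertices with the same $\Psi$-image cannot be adjacent, since otherwise their common image would be a loop, and $K_t$ has no loops. Therefore each fiber contains at most $\alpha(\Gamma)$ vertices, so $|V(\Gamma)|\leq t\cdot\alpha(\Gamma)=\omega(\Gamma)\alpha(\Gamma)$. Combining this with~(\ref{e23}), which is available because $\Gamma$ is vertex-transitive, yields $\omega(\Gamma)\alpha(\Gamma)=|V(\Gamma)|$. The argument is short; the only points that require a little care are the two inequalities giving $\omega(\Gamma)=t$ and the remark that the fibers of $\Psi$ are independent, and vertex-transitivity is used only through the a priori bound~(\ref{e23}), so I do not expect a genuine obstacle here.
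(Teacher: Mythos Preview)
Your proof is correct. Note, however, that the paper does not give its own proof of this lemma: it is simply quoted from~\cite[p.~148]{cameron_kan} without argument. Your write-up fills in the standard reasoning, namely that a homomorphism onto $K_t$ with $t=\omega(\Gamma)$ partitions $V(\Gamma)$ into $t$ independent fibers, giving $|V(\Gamma)|\leq \omega(\Gamma)\alpha(\Gamma)$, and then~(\ref{e23}) supplies the reverse inequality. This is precisely the argument one finds (at least implicitly) in the Cameron--Kazanidis paper and is also equivalent to the observation, recorded in Section~\ref{grafi} of the present paper, that the core being complete forces $\chi(\Gamma)=\omega(\Gamma)$.
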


The \emph{lexicographic product} of graphs $\Gamma$ and $\Xi$
is the graph $\Gamma[\Xi]$
whose vertex set is $V(\Gamma)\times V(\Xi)$, and for which $\{(g_1, h_1), (g_2, h_2)\}$ is
an edge of $\Gamma[\Xi]$
precisely if $\{g_1, g_2\}\in E(\Gamma)$, or $g_1 = g_2$ and $\{h_1, h_2\}\in E(\Xi)$. It was proved in~\cite[Theorem~1]{stahl} that
\begin{equation}\label{e14}
\alpha\big(\Gamma[\Xi]\big)=\alpha(\Gamma)\alpha(\Xi).
\end{equation}
Since $\inv{\Gamma[\Xi]}=\inv{\Gamma}[\inv{\Xi}]$ (cf.~\cite[p.~57]{klavzar}), it follows from~(\ref{e12}) and~(\ref{e14}) that
\begin{equation}\label{e15}
\omega\big(\Gamma[\Xi]\big)=\omega(\Gamma)\omega(\Xi).
\end{equation}

Given a graph ƒ$\Gamma$, we use $\chi(\Gamma)$ to denote its chromatic number, that is, the smallest
integer $m$ for which there exists a vertex $m$-coloring, i.e., a map $\varphi : V(\Gamma)\to \{x_1, x_2,\ldots,x_m\}$ on the vertex set, which maps into a set of cardinality $m$ and satisfies $\varphi(u)\neq \varphi(v)$ whenever $\{u,v\}$ is an edge. Obviously, $\chi(\Gamma)\geq \omega(\Gamma)$. Equality holds if and only if the core of $\Gamma$ is a complete graph (cf.~\cite{godsil_clanek, cameron_kan}).

Let $G$ be a finite group and $S$ a subset of $G$ that is closed under inverses and does not contain the identity.
The \emph{Cayley graph} $\Gamma=\textrm{Cay}(G,S)$ is the graph with $G$ as its vertex set, two vertices $g$
and $h$ being joined by an edge if and only if $g^{-1}h\in S$. For any fixed $g\in G$, the map $h\mapsto g h$ is an automorphism of $\Gamma$, so Cayley graphs are vertex-transitive. The following result, which claims that a stronger statement than the one in Lemma~\ref{klikaneod} holds for a special type of Cayley graphs, was already proved in unpublished notes~\cite{godsil_notes} (see also the thesis~\cite{roberson_thesis}). We provide a proof for reader's convenience. Moreover, the proof contains a construction of a graph homomorphisms onto a maximum clique (\ref{coloring}), which will be relevant in Section~\ref{sectionMinkowski}.

\begin{proposition}\label{klikaneod2}
Let $\Gamma=\textrm{Cay}(G,S)$ be a Cayley graph such that the inverting map $g\mapsto g^{-1}$ is an automorphism of $\Gamma$. Then the following is equivalent:
\begin{enumerate}
\item a core of $\Gamma$ is a complete graph,
\item a core of $\inv{\Gamma}$ is a complete graph,
\item $\omega(\Gamma)\alpha(\Gamma)= |V(\Gamma)|$.
\end{enumerate}
\end{proposition}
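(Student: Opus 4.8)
The plan is to prove (i)$\Leftrightarrow$(iii) directly and to deduce (ii)$\Leftrightarrow$(iii) from it by passing to the complement. Indeed, $\inv{\Gamma}=\textrm{Cay}\bigl(G,\,G\setminus(S\cup\{e\})\bigr)$ is again a Cayley graph on $G$ (its connection set is inverse-closed and avoids the identity), it has the same automorphism group as $\Gamma$, so the inverting map is an automorphism of $\inv\Gamma$ as well, and by~(\ref{e12}) we have $\omega(\inv\Gamma)\alpha(\inv\Gamma)=\alpha(\Gamma)\omega(\Gamma)$, so that condition (iii) is literally unchanged when $\Gamma$ is replaced by $\inv\Gamma$. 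Hence it suffices to prove (i)$\Rightarrow$(iii) and (iii)$\Rightarrow$(i): applying these to $\inv\Gamma$ then yields (ii)$\Rightarrow$(iii) and (iii)$\Rightarrow$(ii). The implication (i)$\Rightarrow$(iii) is immediate, since a Cayley graph is vertex-transitive and Lemma~\ref{klikaneod} applies.

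So the content is (iii)$\Rightarrow$(i). The first step is to isolate the only consequence of the hypothesis that I will use: the inverting map $g\mapsto g^{-1}$ being an automorphism of $\Gamma$ amounts, by translating the edge condition, to $s\in S\Leftrightarrow xsx^{-1}\in S$ for all $s,x\in G$, i.e.\ $S$ is a union of conjugacy classes; together with $S=S^{-1}$ this also gives $g^{-1}h\in S\Leftrightarrow gh^{-1}\in S$ for all $g,h\in G$. Now fix a maximum clique $\mathcal{K}$ and a maximum independent set $\mathcal{I}$, so $|\mathcal{K}|=\omega(\Gamma)$ and $|\mathcal{I}|=\alpha(\Gamma)$. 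For each $g\in G$ the translate $g\mathcal{I}$ is again an independent set (left translation is an automorphism), hence $\mathcal{K}\cap g\mathcal{I}$ contains at most one vertex. Counting in two ways the triples $(g,k,i)$ with $k\in\mathcal{K}$, $i\in\mathcal{I}$ and $gi=k$ gives
\begin{equation*}
\sum_{g\in G}\bigl|\mathcal{K}\cap g\mathcal{I}\bigr|=|\mathcal{K}|\cdot|\mathcal{I}|=\omega(\Gamma)\alpha(\Gamma)=|G|,
\end{equation*}
so this sum of $|G|$ terms, each at most $1$, has every term equal to $1$; that is, $|\mathcal{K}\cap g\mathcal{I}|=1$ for every $g\in G$.

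This allows me to define $\phi\colon V(\Gamma)\to\mathcal{K}$ by
\begin{equation}\label{coloring}
\{\phi(g)\}=\mathcal{K}\cap g\mathcal{I},\qquad g\in G,
\end{equation}
and the claim is that $\phi$ is a homomorphism of $\Gamma$ onto the complete graph induced on $\mathcal{K}$. Suppose $\phi(g)=\phi(h)=k$ with $g\neq h$; then $g^{-1}k$ and $h^{-1}k$ are two distinct vertices of the independent set $\mathcal{I}$, so $(g^{-1}k)^{-1}(h^{-1}k)=k^{-1}(gh^{-1})k\notin S$, and conjugation-invariance of $S$ yields $gh^{-1}\notin S$, hence $g^{-1}h\notin S$, i.e.\ $\{g,h\}\notin E(\Gamma)$. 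Thus $\phi$ sends edges of $\Gamma$ to edges of the complete graph on $\mathcal{K}$. Moreover the fibres of $\phi$, at most $|\mathcal{K}|=\omega(\Gamma)$ in number, are independent sets partitioning the $\omega(\Gamma)\alpha(\Gamma)$ vertices of $\Gamma$, each of size at most $\alpha(\Gamma)$; hence none is empty and $\phi$ is onto. So the induced subgraph on $\mathcal{K}$ is a complete subgraph of $\Gamma$ that receives a homomorphism from $\Gamma$, i.e.\ a core of $\Gamma$, which proves (i); and, as noted above, applying this to $\inv\Gamma$ gives (iii)$\Rightarrow$(ii).

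I expect the one genuinely delicate point to be the passage from $k^{-1}(gh^{-1})k\notin S$ to $gh^{-1}\notin S$ in the verification of~(\ref{coloring}): this is exactly where the hypothesis that the inverting map is an automorphism (equivalently, that $S$ is conjugation-invariant) is used, and without it the fibres of $\phi$ need not be independent — which is why this statement genuinely strengthens Lemma~\ref{klikaneod} for this class of Cayley graphs. Everything else is the clique–coclique count and routine bookkeeping.
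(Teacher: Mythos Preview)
Your proof is correct and follows essentially the same approach as the paper's. Both arguments hinge on the decomposition $G=\mathcal{K}\cdot\mathcal{I}$ (equivalently, every translate $g\mathcal{I}$ meets $\mathcal{K}$ in exactly one point) and use it to define a retraction onto $\mathcal{K}$; the paper establishes the decomposition by showing directly that the products $ki$ are pairwise distinct, whereas you reach the same conclusion via a clique--coclique averaging count, and the paper handles (ii) by writing down the companion map $g\mapsto i_g$ rather than by passing to the complement---but these are packaging differences, not substantive ones.
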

\begin{remark}
It is easy to see that the map $g\mapsto g^{-1}$ is a graph automorphism if and only if the Cayley graph is \emph{normal} (cf.~\cite{roberson}). In particular, this is satisfied for graphs over abelian groups. However, there exist   Cayley graphs over non-abelian groups that are normal. An example of such is obtained if $G=GL_n(\FF_q)$ is the general linear group over a finite field $\FF_q$ with $q$ elements and $S\subseteq G$ is a subset of matrices with determinant $x$ or $x^{-1}$, where $x$ is a fixed generator of the cyclic multiplicative group $\FF_q\backslash\{0\}$.
\end{remark}
\begin{remark}\label{opomba}
There exist vertex-transitive graphs such that $(i)$ and $(ii)$ are not equivalent. An example of such is the graph $Q_{5}^{+}(5)$ (see Section~\ref{polarni} for the definition). In fact, it follows from~\cite[Theorem~3.5]{cameron_kan} and~\cite[Theorem~4.1]{penttila} that the core of $Q_{5}^{+}(5)$ is complete, while the core of $\overline{Q_{5}^{+}(5)}$ is not complete, as it follows from~\cite[Theorem~3.5]{cameron_kan} and~\cite[Theorem~6(b)]{thas1992}. In fact, $\overline{Q_{5}^{+}(5)}$ is a core.
\end{remark}
\begin{remark}
A graph is a \emph{CIS graph} if every its maximal clique and maximal independent set (according to inclusion) intersect. A characterization of CIS circulants was obtained in~\cite[Theorem~3]{martin1}. Subsequently, the result was generalized for arbitrary CIS vertex-transitive graphs~\cite{martin2}. In particular, it was proved that equation $(iii)$ from  Proposition~\ref{klikaneod2} is satisfied for any such graph. Hence, any CIS normal Cayley graph has a complete core.
\end{remark}

\begin{bew}{of Proposition~\ref{klikaneod2}}
Since (\ref{e12}) holds and a complement of a vertex-transitive graph is vertex-transitive, both $(i)$ and $(ii)$ implies $(iii)$ by Lemma~\ref{klikaneod}.

Assume now that $(iii)$ holds. Let $\mathcal{K}$  and  $\mathcal{I}$ be a clique of size $\omega(\Gamma)$ and an independent set of size $\alpha(\Gamma)$, respectively. We claim that
\begin{equation}\label{e13}
G=\mathcal{K}\cdot \mathcal{I}:=\{ki\ :\ k\in \mathcal{K}, i\in \mathcal{I}\}.
\end{equation}
By $(iii)$ it suffices to show that $k_1 i_1\neq k_2 i_2$ for $(k_1,i_1)\neq (k_2,i_2)$. So assume that $k_1 i_1= k_2 i_2$. Since  $\mathcal{I}$ is an independent set and the inverting map is a graph automorphism, it follows that $k_1^{-1}k_2=i_1 i_2^{-1} \notin S$. Since $\mathcal{K}$ is a clique, we deduce that $k_1=k_2$, and consequently $i_1=i_2$. Hence, (\ref{e13}) holds and any $g\in G$ can be uniquely written as a product $g=k_g i_g$, $k_g\in \mathcal{K}, i_g\in \mathcal{I}$.

We claim that the map $G\to \mathcal{K}$, defined by
\begin{equation}\label{coloring}
g\mapsto k_g,
\end{equation}
is a homomorphism between $\Gamma$ and a complete subgraph induced by $\mathcal{K}$, which is therefore a core of $\Gamma$, so $(i)$ holds. Since $\mathcal{K}$ is a clique, it suffices to show that adjacent vertices are not mapped in the same vertex. This certainly holds, since $k_{g_1}=k_{g_2}$ implies $g_1^{-1} g_2=i_{g_1}^{-1} i_{g_2}\notin S$, which means that $g_1$ and $g_2$ are not adjacent.

Similarly we see that the map $G\to \mathcal{I}$, defined by $g\mapsto i_g$, is a homomorphism between $\inv{\Gamma}$ and a complete subgraph induced by $\mathcal{I}$, which is therefore a core of $\inv{\Gamma}$, so $(ii)$ holds.\cqfd
\end{bew}

The \emph{spectrum} of graph $\Gamma$ consists of the eigenvalues (and their multiplicities) of its \emph{adjacency matrix}, i.e., the $|V(\Gamma)|\times |V(\Gamma)|$ real binary matrix with 1 at position $(i, j)$ if the $i$-th and $j$-th vertices are adjacent and 0 otherwise (order of vertices is arbitrary). The next lemma is the well known Hoffman upper bound for the independence number (cf.~\cite[Theorem~3.5.2]{brouwerhaemers}).
\begin{lemma}\label{hoffman}
If $\Gamma$ is a regular graph with nonzero valency, then
\begin{equation}\label{hoffman_neodvisna}
\alpha(\Gamma)\leq |V(\Gamma)|\cdot \frac{-\lambda_{\min}}{\lambda_{\max}-\lambda_{\min}},
\end{equation}
where $\lambda_{\max}$ and $\lambda_{\min}$ are the largest and smallest eigenvalue of $\Gamma$, respectively.
If an independent set ${\cal I}$ meet this bound, then every vertex not in ${\cal I}$ is adjacent to precisely $-\lambda_{\min}$ vertices of ${\cal I}$.
\end{lemma}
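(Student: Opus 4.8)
The plan is to run the classical spectral argument on the adjacency matrix $A$ of $\Gamma$, which is a real symmetric $n\times n$ matrix with $n=|V(\Gamma)|$ and is therefore orthogonally diagonalizable with real eigenvalues. Write $k$ for the valency. Since $\Gamma$ is regular, $A\mathbf 1=k\mathbf 1$, so $k$ is an eigenvalue; as every eigenvalue of $A$ has absolute value at most $k$ (by Gershgorin's theorem, say), we get $k=\lambda_{\max}$. Because the valency is nonzero, $\Gamma$ has an edge, so $A\neq 0$ while $\Tr A=\sum_i\lambda_i=0$; hence $\lambda_{\min}<0$, which will make the final division legitimate.

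Next I would fix an independent set $\mathcal I$ of size $s=\alpha(\Gamma)$ (the case $s=0$ being vacuous) with characteristic vector $\mathbf x\in\{0,1\}^n$, so that $\mathbf 1^\top\mathbf x=\mathbf x^\top\mathbf x=s$ and, since $\mathcal I$ spans no edge, $\mathbf x^\top A\mathbf x=0$. Decompose $\mathbf x=\tfrac sn\mathbf 1+\mathbf z$ with $\mathbf z\perp\mathbf 1$; then $\|\mathbf z\|^2=s-\tfrac{s^2}{n}$ by the Pythagorean theorem, and from $A\mathbf 1=k\mathbf 1$ one computes $\mathbf z^\top A\mathbf 1=0$ and $\mathbf 1^\top A\mathbf 1=nk$, so that
\begin{equation*}
0=\mathbf x^\top A\mathbf x=\frac{s^2k}{n}+\mathbf z^\top A\mathbf z .
\end{equation*}
The one estimate used is $\mathbf z^\top A\mathbf z\geq\lambda_{\min}\|\mathbf z\|^2$, valid for every vector because the Rayleigh quotient of a symmetric matrix is bounded below by its least eigenvalue. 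Substituting $\|\mathbf z\|^2=s-\tfrac{s^2}{n}$ and dividing through by $s>0$ rearranges to $\tfrac sn(\lambda_{\max}-\lambda_{\min})\leq-\lambda_{\min}$, which is precisely~(\ref{hoffman_neodvisna}).

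For the equality clause I would observe that if $\mathcal I$ meets the bound, then the inequality $\mathbf z^\top A\mathbf z\geq\lambda_{\min}\|\mathbf z\|^2$ is an equality, which forces $\mathbf z$ to be a $\lambda_{\min}$-eigenvector of $A$ (or zero). Consequently $A\mathbf x=\tfrac sn A\mathbf 1+A\mathbf z=\tfrac{sk}{n}\mathbf 1+\lambda_{\min}\mathbf z$, and for a vertex $v\notin\mathcal I$ the number of its neighbours in $\mathcal I$ is the $v$-th entry of $A\mathbf x$; using $\mathbf z_v=\mathbf x_v-\tfrac sn=-\tfrac sn$ this entry equals $\tfrac sn(\lambda_{\max}-\lambda_{\min})$, which at equality is exactly $-\lambda_{\min}$.

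I expect no genuine obstacle here: the statement is the classical Hoffman ratio bound (cf.~\cite[Theorem~3.5.2]{brouwerhaemers}), and in the paper I would simply cite it. The only points deserving a word of care are the remark that nonzero valency forces $\lambda_{\min}<0$ (so that $\lambda_{\max}-\lambda_{\min}>0$ and the rearrangement makes sense) and the small amount of bookkeeping in the $\mathbf 1/\mathbf 1^{\perp}$ decomposition; everything else is routine linear algebra.
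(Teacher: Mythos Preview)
Your argument is the standard and correct proof of the Hoffman ratio bound, including the equality clause. The paper does not give its own proof of this lemma at all; it simply cites it as \cite[Theorem~3.5.2]{brouwerhaemers}, exactly as you yourself anticipate in your final paragraph.
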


Let $G$ be a finite abelian group. A \emph{character} $\xi$ of $G$ is a group homomorphism $\xi: G\to\CC\backslash\{0\}$, that is, a nowhere-zero complex map that satisfies $\xi(gh)=\xi(g)\xi(h)$ for all $g,h\in G$. It is well known, that there exist precisely $|G|$ distinct characters (cf.~\cite[Theorem~5.5]{finite-fields-LN}). Moreover, they determine the spectrum of Cayley graphs on $G$.
\begin{lemma}\label{karakterji-spekter} (cf.~\cite[Subsection~1.4.9]{brouwerhaemers}).
If $G$ is a finite abelian group, then the eigenvalues of $Cay(G,S)$ are precisely the values $\sum_{s\in S}\xi(s)$, where $\xi$ ranges over all characters of $G$.
\end{lemma}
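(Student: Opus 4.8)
The plan is to write down, for every character of $G$, an explicit eigenvector of the adjacency matrix with eigenvalue $\sum_{s\in S}\xi(s)$, and then to observe that these eigenvectors already exhaust the spectrum. First I would fix an arbitrary ordering of the elements of $G$ and let $A$ be the corresponding adjacency matrix of $\mathrm{Cay}(G,S)$; thus the entry of $A$ in the row indexed by $g$ and the column indexed by $h$ is $1$ if $g^{-1}h\in S$ and $0$ otherwise. Since $S$ is closed under inverses, $g^{-1}h\in S$ is equivalent to $h^{-1}g\in S$, so $A$ is a real symmetric matrix; in particular it has exactly $|V(\Gamma)|=|G|$ real eigenvalues counted with multiplicity, and $\mathbb{C}^{G}$ admits an orthogonal basis consisting of eigenvectors of $A$.

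Next, to a character $\xi\colon G\to\mathbb{C}\setminus\{0\}$ I would associate the vector $v_\xi=(\xi(g))_{g\in G}\in\mathbb{C}^{G}$, which is nonzero because $\xi$ is nowhere zero. Computing the $g$-th coordinate of $Av_\xi$ and substituting $h=gs$ gives
\[
(Av_\xi)_g=\sum_{h\,:\,g^{-1}h\in S}\xi(h)=\sum_{s\in S}\xi(gs)=\xi(g)\sum_{s\in S}\xi(s),
\]
where the last equality uses that $\xi$ is a group homomorphism. Hence $v_\xi$ is an eigenvector of $A$ for the eigenvalue $\lambda_\xi:=\sum_{s\in S}\xi(s)$; note that $\lambda_\xi$ is automatically real, since $\overline{\lambda_\xi}=\sum_{s\in S}\overline{\xi(s)}=\sum_{s\in S}\xi(s^{-1})=\lambda_\xi$ by $S=S^{-1}$.

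It remains to see that the $\lambda_\xi$ are all of the eigenvalues, with the correct multiplicities. Here I would use that $G$ has exactly $|G|$ distinct characters, together with the standard orthogonality relations: the values of any character are roots of unity, so for characters $\xi\neq\eta$ the function $g\mapsto\xi(g)\overline{\eta(g)}$ is a nontrivial character of $G$ and therefore $\langle v_\xi,v_\eta\rangle=\sum_{g\in G}\xi(g)\overline{\eta(g)}=0$, while $\langle v_\xi,v_\xi\rangle=|G|$. Thus the $|G|$ vectors $v_\xi$ are pairwise orthogonal and nonzero, hence form a basis of $\mathbb{C}^{G}$ made up of eigenvectors of $A$; since $A$ is a $|G|\times|G|$ matrix, this basis realizes its entire spectrum. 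Consequently the eigenvalues of $\mathrm{Cay}(G,S)$, listed with multiplicities, are exactly the numbers $\sum_{s\in S}\xi(s)$ as $\xi$ runs over all characters of $G$.

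I do not anticipate a real obstacle: the core of the argument is a one-line computation, and the only external ingredients — the count of characters and their orthogonality — are precisely the facts about finite abelian groups recalled just before the lemma. The only place demanding a little care is the step from ``each $v_\xi$ is an eigenvector'' to ``these give all eigenvalues'', which is exactly where the linear independence of distinct characters (packaged here as orthogonality) enters.
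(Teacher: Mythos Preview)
Your proof is correct and is exactly the standard argument one finds in the reference the paper cites. Note that the paper itself does not supply a proof of this lemma: it is stated with a ``cf.'' citation to \cite[Subsection~1.4.9]{brouwerhaemers} and then used as a black box, so there is no in-paper proof to compare against. Your write-up would serve perfectly well as the omitted justification.
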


\section{Matrix theory}\label{matrike}

From now on, if not stated otherwise, $\FF_q$ is a finite field with $q$ elements, where $q=p^k$ is a power of an odd prime $p$. The number of nonzero squares and the number of non-squares in $\FF_q$ both equal $\frac{q-1}{2}$. More precisely, for any non-square $d\in\FF_q$, the field is of the form
\begin{equation}\label{e41}
\FF_q=\{0,x_1^2,\ldots, x_{\frac{q-1}{2}}^2, d x_1^2,\ldots, d x_{\frac{q-1}{2}}^2\}
\end{equation}
for some nonzero $x_1,\ldots,x_{\frac{q-1}{2}}\in\FF_q$ (cf.~\cite[Theorem~6.20]{wan2}). Let $n\geq 2$ be an integer. Let $GL_n(\FF_q)$ denote the set of all $n\times n$ invertible  matrices with coefficients in $\FF_q$. Similarly let $SGL_n(\FF_q)\subseteq GL_n(\FF_q)$ be the subset of all symmetric matrices, that is, the elements of $SGL_n(\FF_q)$ satisfy $A^{\tr}=A$ and $\det(A)\neq 0$, where $\tr$ and $\det$ are the transpose and the determinant of a matrix, respectively. The set of all column vectors of dimension $n$, i.e. $n\times 1$ matrices,  with entries from the field $\FF_q$, is denoted by $\FF_q^n$. Its elements are written in bold style, like~${\bf x}$.

We now recall some auxiliary results from matrix theory. The first lemma is well known and it calculates the determinant of a rank--one perturbation of an invertible matrix.
\begin{lemma}{(cf.~\cite[Chapter~14]{handbook})}\label{lema_tsats}
Let $A\in GL_n(\FF_q)$ and  ${\bf x},{\bf y}\in \FF_q^{n}$. Then
$$\det(A+{\bf x}{\bf y}^{\tr})=(\det A)\cdot (1+{\bf y}^{\tr}A^{-1}{\bf x}).$$
\end{lemma}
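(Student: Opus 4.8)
The plan is to obtain the identity by evaluating the determinant of a single $(n+1)\times(n+1)$ block matrix in two different ways. Introduce
$$M=\begin{pmatrix} A & -{\bf x}\\ {\bf y}^{\tr} & 1\end{pmatrix},$$
which is just a square matrix over $\FF_q$ (we do not need it to be invertible). The only tool required is the standard block-triangular factorization: whenever the top-left block $P$ of $\big(\begin{smallmatrix}P&Q\\R&S\end{smallmatrix}\big)$ is invertible,
$$\begin{pmatrix} P & Q\\ R & S\end{pmatrix}=\begin{pmatrix} I & 0\\ RP^{-1} & I\end{pmatrix}\begin{pmatrix} P & Q\\ 0 & S-RP^{-1}Q\end{pmatrix},$$
together with the fact that the determinant of a block-triangular matrix is the product of the determinants of its diagonal blocks; both facts are purely algebraic and hold over the commutative ring $\FF_q$. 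An entirely symmetric factorization works when the bottom-right block $S$ is invertible instead.

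Applying the first factorization to $M$ with the pivot block $A$ (invertible by hypothesis) gives $\det M=(\det A)\cdot\big(1-{\bf y}^{\tr}A^{-1}(-{\bf x})\big)=(\det A)(1+{\bf y}^{\tr}A^{-1}{\bf x})$, the Schur complement being the $1\times1$ scalar $1+{\bf y}^{\tr}A^{-1}{\bf x}$. Applying the symmetric factorization with the pivot block equal to the scalar $1$ (trivially invertible) gives $\det M=1\cdot\det\big(A-(-{\bf x})\cdot1\cdot{\bf y}^{\tr}\big)=\det(A+{\bf x}{\bf y}^{\tr})$. Comparing the two expressions for $\det M$ yields the claimed formula.

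There is no real obstacle here; the one point worth noting is that the argument is purely algebraic, so it applies verbatim over the finite field $\FF_q$ with no appeal to density or characteristic-zero reasoning. If one prefers to avoid block matrices, an alternative is to first write $\det(A+{\bf x}{\bf y}^{\tr})=\det(A)\det(I+A^{-1}{\bf x}{\bf y}^{\tr})$ and then reduce to proving $\det(I+{\bf u}{\bf v}^{\tr})=1+{\bf v}^{\tr}{\bf u}$; the latter follows by choosing a basis of $\FF_q^n$ adapted to the hyperplane $\{{\bf z}\in\FF_q^n:{\bf v}^{\tr}{\bf z}=0\}$, on which $I+{\bf u}{\bf v}^{\tr}$ restricts to the identity, so that in this basis the matrix is triangular with diagonal entries $1,\ldots,1,1+{\bf v}^{\tr}{\bf u}$.
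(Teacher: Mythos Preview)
Your proof is correct; the Schur-complement block factorization of the $(n+1)\times(n+1)$ matrix is the standard argument for the matrix determinant lemma, and as you note it works verbatim over any commutative ring. The paper itself does not prove this lemma at all: it is stated with a reference to the literature (\cite[Chapter~14]{handbook}) and used as a black box, so there is no authorial proof to compare against. Your write-up would serve perfectly well as a self-contained justification.
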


\begin{corollary}\label{matrika}
Let ${\bf a}:=(a_1,\ldots,a_n)^{\tr}\in\FF_q^n$, where $a_1\neq 0$. Then,
\begin{equation}\label{e27}
\det \left(\begin{array}{cccc}1+\frac{a_2^2}{a_1^2}&\frac{a_2 a_3}{a_1^2}&\ldots&\frac{a_2 a_n}{a_1^2}\\
\frac{a_2 a_3}{a_1^2}&1+\frac{a_3^2}{a_1^2}&\ldots&\frac{a_3 a_n}{a_1^2}\\
\vdots&\vdots&\ddots&\vdots\\
\frac{a_2 a_n}{a_1^2}&\frac{a_3 a_n}{a_1^2}&\ldots&1+\frac{a_n^2}{a_1^2}
\end{array}\right)=\frac{{\bf a}^{\tr}{\bf a}}{a_1^2}.
\end{equation}
\end{corollary}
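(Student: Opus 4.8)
The plan is to view the matrix in~(\ref{e27}) as a rank-one update of an identity matrix, so that Lemma~\ref{lema_tsats} applies directly. Set ${\bf b} := (a_2, a_3, \ldots, a_n)^{\tr} \in \FF_q^{n-1}$, and let $I$ denote the $(n-1)\times(n-1)$ identity matrix. Comparing entries, the $(i,j)$-th entry of the matrix in~(\ref{e27}) equals $\delta_{ij} + a_{i+1} a_{j+1}/a_1^2$, which is exactly the $(i,j)$-th entry of $I + \frac{1}{a_1^2}{\bf b}{\bf b}^{\tr}$; hence the two matrices coincide.

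Next I would apply Lemma~\ref{lema_tsats} with $A = I \in GL_{n-1}(\FF_q)$, ${\bf x} = \frac{1}{a_1^2}{\bf b}$, and ${\bf y} = {\bf b}$, which is legitimate since $a_1 \neq 0$ and $I$ is invertible. This gives
$$\det\Bigl(I + \tfrac{1}{a_1^2}{\bf b}{\bf b}^{\tr}\Bigr) = \det(I)\cdot\Bigl(1 + {\bf b}^{\tr} I^{-1} \tfrac{1}{a_1^2}{\bf b}\Bigr) = 1 + \frac{{\bf b}^{\tr}{\bf b}}{a_1^2}.$$
Finally, since ${\bf b}^{\tr}{\bf b} = a_2^2 + \cdots + a_n^2$, the right-hand side equals $(a_1^2 + a_2^2 + \cdots + a_n^2)/a_1^2 = {\bf a}^{\tr}{\bf a}/a_1^2$, which is the value claimed in~(\ref{e27}).

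I do not anticipate any genuine obstacle: the statement is essentially a specialization of Lemma~\ref{lema_tsats}. The only points requiring a little care are the index shift (the perturbing vector is $(a_2,\ldots,a_n)^{\tr}$ rather than $(a_1,\ldots,a_n)^{\tr}$) and the degenerate case $n = 2$, where the matrix is the $1 \times 1$ matrix $\bigl(1 + a_2^2/a_1^2\bigr)$ and the identity~(\ref{e27}) is immediate.
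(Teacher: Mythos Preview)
Your proof is correct and follows exactly the same approach as the paper: write the matrix as $I_{n-1}+\frac{1}{a_1^2}{\bf b}{\bf b}^{\tr}$ with ${\bf b}=(a_2,\ldots,a_n)^{\tr}$ and apply Lemma~\ref{lema_tsats}. The paper's version is simply terser, omitting the explicit entry comparison and the final arithmetic that you spell out.
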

\begin{proof}
Matrix in (\ref{e27}) equals $I_{n-1}+\frac{1}{a_1^2}{\bf b}{\bf b}^{\tr}$, where $I_{n-1}$ is the $(n-1)\times (n-1)$ identity matrix and ${\bf b}=(a_2,\ldots,a_n)^{\tr}$. The proof ends by Lemma~\ref{lema_tsats}.
\end{proof}

\begin{lemma}\label{lema33}
Let $A\in GL_{n-1}(\FF_q)$, ${\bf x}\in \FF_q^{n-1}$, and $a\in\FF_q$. Matrix
\begin{equation}\label{e31}
\left(\begin{array}{cl}
a&{\bf x}^{\tr}\\
{\bf x}&A
\end{array}\right)
\end{equation}
is singular if and only if $a={\bf x}^{\tr}A^{-1}{\bf x}$ in which case there is $P\in GL_n(\FF_q)$ such that
\begin{equation}\label{e32}
\left(\begin{array}{cl}
{\bf x}^{\tr}A^{-1}{\bf x}&{\bf x}^{\tr}\\
{\bf x}&A
\end{array}\right)=P^{\tr}\left(\begin{array}{cc}
0&0\\
0&A
\end{array}\right)P.
\end{equation}
\end{lemma}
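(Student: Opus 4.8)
The plan is to reduce the singularity criterion to a block LU (Schur complement) decomposition, then manufacture the matrix $P$ explicitly. First I would write the matrix in~(\ref{e31}) as a product of elementary block matrices: since $A$ is invertible, one has the identity
\begin{equation*}
\left(\begin{array}{cl} a&{\bf x}^{\tr}\\ {\bf x}&A\end{array}\right)=\left(\begin{array}{cc} 1&{\bf x}^{\tr}A^{-1}\\ 0&I_{n-1}\end{array}\right)\left(\begin{array}{cc} a-{\bf x}^{\tr}A^{-1}{\bf x}&0\\ 0&A\end{array}\right)\left(\begin{array}{cc} 1&0\\ A^{-1}{\bf x}&I_{n-1}\end{array}\right),
\end{equation*}
which can be checked by direct block multiplication (using $A^{\tr}=A$ is \emph{not} needed here, only invertibility of $A$). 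Taking determinants gives $\det=(a-{\bf x}^{\tr}A^{-1}{\bf x})\det A$, so the matrix is singular precisely when $a={\bf x}^{\tr}A^{-1}{\bf x}$, because $\det A\neq 0$. This settles the first claim.

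For the second claim, I would substitute $a={\bf x}^{\tr}A^{-1}{\bf x}$ into the factorization above. Then the middle factor becomes $\operatorname{diag}(0,A)$ in the $1+(n-1)$ block sense, i.e. exactly $\left(\begin{smallmatrix}0&0\\0&A\end{smallmatrix}\right)$, and the identity reads
\begin{equation*}
\left(\begin{array}{cl} {\bf x}^{\tr}A^{-1}{\bf x}&{\bf x}^{\tr}\\ {\bf x}&A\end{array}\right)=L\left(\begin{array}{cc} 0&0\\ 0&A\end{array}\right)R,
\end{equation*}
where $L=\left(\begin{smallmatrix}1&{\bf x}^{\tr}A^{-1}\\0&I_{n-1}\end{smallmatrix}\right)$ and $R=\left(\begin{smallmatrix}1&0\\A^{-1}{\bf x}&I_{n-1}\end{smallmatrix}\right)$. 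To get the required symmetric-congruence form~(\ref{e32}) I need $L=R^{\tr}$. Here is where symmetry of $A$ enters: $(A^{-1}{\bf x})^{\tr}={\bf x}^{\tr}(A^{-1})^{\tr}={\bf x}^{\tr}(A^{\tr})^{-1}={\bf x}^{\tr}A^{-1}$, so indeed $R^{\tr}=L$. Setting $P:=R=\left(\begin{smallmatrix}1&0\\A^{-1}{\bf x}&I_{n-1}\end{smallmatrix}\right)$, which is invertible (it is lower triangular with unit diagonal, so $\det P=1$, hence $P\in GL_n(\FF_q)$), yields exactly~(\ref{e32}).

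I do not expect a genuine obstacle here; the only point requiring a little care is keeping the block sizes and the transpose bookkeeping consistent, and making explicit that the symmetry hypothesis $A^{\tr}=A$ is precisely what upgrades the generic block-LU factorization into a congruence $P^{\tr}(\cdot)P$. One could alternatively phrase the whole argument as: perform simultaneous row operation $(\text{row }1)\mapsto(\text{row }1)-{\bf x}^{\tr}A^{-1}\cdot(\text{rows }2,\dots,n)$ and the matching column operation, which kills the off-diagonal blocks and replaces the $(1,1)$ entry by $a-{\bf x}^{\tr}A^{-1}{\bf x}$; this is the same computation organized differently, and I would present whichever is shorter in the write-up.
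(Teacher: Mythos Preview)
Your proof is correct and arrives at exactly the same matrix $P=\left(\begin{smallmatrix}1&0\\A^{-1}{\bf x}&I_{n-1}\end{smallmatrix}\right)$ that the paper writes down; the only cosmetic difference is that the paper handles the case $a\neq{\bf x}^{\tr}A^{-1}{\bf x}$ by exhibiting an explicit inverse of~(\ref{e31}) rather than by computing the determinant via your block factorization. Your explicit remark that the symmetry $A^{\tr}=A$ is what turns the $LDU$ factorization into a congruence is a good observation---the lemma is stated for $A\in GL_{n-1}(\FF_q)$, but the congruence~(\ref{e32}) indeed requires $A$ symmetric, and the paper uses this implicitly (and only ever applies the lemma to symmetric $A$).
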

\begin{proof}
If $a\neq {\bf x}^{\tr}A^{-1}{\bf x}$, then it is straightforward to check that matrix (\ref{e31}) has inverse
$$\left(\begin{array}{cc}
0&0\\
0&A^{-1}
\end{array}\right)+\frac{1}{a-{\bf x}^{\tr}A^{-1}{\bf x}}
\left(\begin{array}{c}
-1\\
A^{-1}{\bf x}
\end{array}\right)
\left(\begin{array}{c}
-1\\
A^{-1}{\bf x}
\end{array}\right)^{\tr}.$$ For $a={\bf x}^{\tr}A^{-1}{\bf x}$, the matrix (\ref{e31}) equals (\ref{e32}), where $$P=\left(\begin{array}{cc}
1&0\\
A^{-1}{\bf x}&I_{n-1}
\end{array}\right),$$
so it is singular.
\end{proof}
Lemma~\ref{lemmawan1993} can be found in~\cite[Theorem~6.8]{wan1993}. It can be also deduced from celebrated Witt's Theorem (cf.~\cite[Theorem~8, p.~167]{jacobson} or~\cite[Theorem~3.9]{artin}), by using a similar procedure as in~\cite[Lemma~2.3]{prvi_del}.
\begin{lemma}\label{lemmawan1993}
Let $A\in SGL_n(\FF_q)$ and let $Q_1, Q_2$ be two $n\times m$ matrices with coefficients in $\FF_q$ such that
$\rank Q_1=m=\rank Q_2$. Then there is $P\in GL_n(\FF_q)$ such that $P^{\tr}AP=A$ and $PQ_1=Q_2$ if and only if $Q_1^{\tr}AQ_1=Q_2^{\tr}AQ_2$.
\end{lemma}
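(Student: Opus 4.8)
The plan is to deduce the lemma from Witt's extension theorem, as an alternative to quoting~\cite[Theorem~6.8]{wan1993}. One implication is immediate: if $P\in GL_n(\FF_q)$ satisfies $P^{\tr}AP=A$ and $PQ_1=Q_2$, then
$$Q_2^{\tr}AQ_2=(PQ_1)^{\tr}A(PQ_1)=Q_1^{\tr}(P^{\tr}AP)Q_1=Q_1^{\tr}AQ_1 .$$
So the content is the converse, and for it I would pass to the language of symmetric bilinear forms.

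Equip $\FF_q^n$ with the symmetric bilinear form $B({\bf u},{\bf v})={\bf u}^{\tr}A{\bf v}$; it is nondegenerate because $A$ is invertible, and a matrix $P\in GL_n(\FF_q)$ satisfies $P^{\tr}AP=A$ exactly when it is an isometry of $(\FF_q^n,B)$. Since $\rank Q_1=m=\rank Q_2$, the columns of $Q_1$ form a basis of an $m$-dimensional subspace $U_1\subseteq\FF_q^n$, and the columns of $Q_2$ form a basis of an $m$-dimensional subspace $U_2$. The hypothesis $Q_1^{\tr}AQ_1=Q_2^{\tr}AQ_2$ says precisely that the Gram matrix of $B|_{U_1}$ in the first basis equals the Gram matrix of $B|_{U_2}$ in the second; consequently the linear isomorphism $\sigma: U_1\to U_2$ sending the $i$-th column of $Q_1$ to the $i$-th column of $Q_2$ is an isometry of $(U_1,B|_{U_1})$ onto $(U_2,B|_{U_2})$.

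Next I would apply Witt's extension theorem (in the form of~\cite[Theorem~8, p.~167]{jacobson} or~\cite[Theorem~3.9]{artin}), available here because $\FF_q$ has odd characteristic and $B$ is nondegenerate: the isometry $\sigma$ between the subspaces $U_1$ and $U_2$ extends to an isometry $P$ of the whole space $(\FF_q^n,B)$. Then $P\in GL_n(\FF_q)$ and $P^{\tr}AP=A$, while the condition $P|_{U_1}=\sigma$ means exactly that $P$ carries the $i$-th column of $Q_1$ to the $i$-th column of $Q_2$ for every $i$, i.e.\ $PQ_1=Q_2$. This would complete the proof.

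The single delicate point --- and, I suspect, the reason the reference speaks of a procedure ``similar to''~\cite[Lemma~2.3]{prvi_del} --- is that $B|_{U_1}$ need not be nondegenerate, since $Q_1^{\tr}AQ_1$ can be singular, whereas several textbook versions of Witt's theorem assume the subspaces involved to be nondegenerate. I expect this to be the only genuine obstacle. It is disposed of either by invoking the fully general form of the theorem, valid for arbitrary subspaces of a nondegenerate space, or by the familiar reduction to the nondegenerate case: split off the radicals of $U_1$ and $U_2$ as orthogonal summands and first extend the isometry on the nondegenerate complements. A self-contained alternative is an induction on $m$: the base case $m=1$ is the elementary fact that for $n\geq 2$ some isometry carries a given nonzero vector to another vector with the same value under $B$ (the anisotropic and the isotropic cases handled separately, the latter using that a nondegenerate plane through an isotropic vector is hyperbolic), and the inductive step passes to a suitable complement of the last column of $Q_1$.
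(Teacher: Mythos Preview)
Your proof is correct and is exactly the route the paper points to: the lemma is stated without proof, merely cited as \cite[Theorem~6.8]{wan1993} with the remark that it can also be deduced from Witt's theorem (cf.~\cite[Theorem~8, p.~167]{jacobson} or~\cite[Theorem~3.9]{artin}) by a procedure similar to~\cite[Lemma~2.3]{prvi_del}. You have simply written that deduction out in full, and your handling of the possibly degenerate restrictions $B|_{U_i}$ is appropriate --- the cited versions of Witt's theorem do cover arbitrary subspaces of a nondegenerate space, so no extra reduction is actually needed.
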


\begin{corollary}\label{witt-p1}
Let $A\in SGL_n(\FF_q)$, ${\bf x}_1,{\bf x}_2\in\FF_q^n$, and assume $a_1,a_2,d\in\FF_q$ are nonzero, where $d$ is a non-square. If either ${\bf x}_i^{\tr}A{\bf x}_i=a_i^2$ for $i=1,2$ or ${\bf x}_i^{\tr}A{\bf x}_i=da_i^2$ for $i=1,2$, then there is $P\in GL_n(\FF_q)$ such that $$P {\bf x}_1=\frac{a_1}{a_2}{\bf x}_2\quad \textrm{and}\quad P^{\tr}AP=A.$$
\end{corollary}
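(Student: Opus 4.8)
The plan is to derive Corollary~\ref{witt-p1} directly from Lemma~\ref{lemmawan1993} by choosing the right rank-one matrices $Q_1,Q_2$, which here are simply the column vectors ${\bf x}_1$ and a suitable scalar multiple of ${\bf x}_2$. First I would set $Q_1:={\bf x}_1$ and $Q_2:=\tfrac{a_1}{a_2}{\bf x}_2$, viewed as $n\times 1$ matrices, so that $m=1$. The hypothesis that $a_1,a_2$ are nonzero guarantees $Q_2$ is a nonzero scalar multiple of ${\bf x}_2$, and I must first observe that both ${\bf x}_1$ and ${\bf x}_2$ are nonzero: indeed ${\bf x}_i^{\tr}A{\bf x}_i$ equals $a_i^2$ or $da_i^2$, which is nonzero since $a_i\neq 0$ (and $d\neq 0$), so neither vector can be the zero vector. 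Hence $\rank Q_1=1=\rank Q_2$, and the rank hypothesis of Lemma~\ref{lemmawan1993} is met.

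Next I would verify the scalar identity $Q_1^{\tr}AQ_1=Q_2^{\tr}AQ_2$ in each of the two cases. In the first case, $Q_1^{\tr}AQ_1={\bf x}_1^{\tr}A{\bf x}_1=a_1^2$, while
$$
Q_2^{\tr}AQ_2=\Bigl(\tfrac{a_1}{a_2}\Bigr)^2{\bf x}_2^{\tr}A{\bf x}_2=\tfrac{a_1^2}{a_2^2}\,a_2^2=a_1^2,
$$
so the two agree. In the second case the computation is identical with $a_i^2$ replaced by $da_i^2$: $Q_1^{\tr}AQ_1=da_1^2$ and $Q_2^{\tr}AQ_2=\tfrac{a_1^2}{a_2^2}\,da_2^2=da_1^2$. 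In both cases these are $1\times 1$ matrices, i.e. scalars, so equality of the matrices is just equality of the scalars.

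Finally, applying Lemma~\ref{lemmawan1993} with these choices yields a matrix $P\in GL_n(\FF_q)$ with $P^{\tr}AP=A$ and $PQ_1=Q_2$, i.e. $P{\bf x}_1=\tfrac{a_1}{a_2}{\bf x}_2$, which is exactly the assertion. There is essentially no obstacle here: the only points requiring a moment's care are the verification that ${\bf x}_1,{\bf x}_2$ are genuinely nonzero (so that the rank hypothesis holds) and the bookkeeping that the case split ($a_i^2$ versus $da_i^2$) does not affect the computation, since the factor $\tfrac{a_1^2}{a_2^2}$ cancels the $a_2^2$ in $Q_2^{\tr}AQ_2$ regardless of whether the extra factor $d$ is present. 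I would present the argument compactly, treating both cases together by writing ${\bf x}_i^{\tr}A{\bf x}_i=c\,a_i^2$ with $c\in\{1,d\}$ if desired.
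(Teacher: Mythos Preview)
Your proposal is correct and follows exactly the same approach as the paper: set $Q_1={\bf x}_1$, $Q_2=\tfrac{a_1}{a_2}{\bf x}_2$, check $Q_1^{\tr}AQ_1=Q_2^{\tr}AQ_2$, and apply Lemma~\ref{lemmawan1993}. You have simply spelled out the rank verification and the case split in more detail than the paper does.
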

\begin{proof}
Let $Q_1={\bf x}_1$ and $Q_2=\frac{a_1}{a_2}{\bf x}_2$. Then $Q_1^{\tr}AQ_1=Q_2^{\tr}AQ_2$, so the result follows from Lemma~\ref{lemmawan1993}.
\end{proof}

\begin{corollary}\label{lema4}
Let $A\in SGL_n(\FF_q)$. For $i=1,2$ assume that ${\bf x}_i,{\bf y}_i\in\FF_q^n$ satisfy ${\bf x}_i^{\tr}A{\bf x}_i=0={\bf y}_i^{\tr}A{\bf y}_i$ and ${\bf x}_i^{\tr}A{\bf y}_i\neq 0$. Then there exist $P\in GL_n(\FF_q)$ and  nonzero $\alpha\in\FF_q$ such that $P^{\tr}AP=A$,
$P{\bf x}_1={\bf x}_2$ and  $P{\bf y}_1=\alpha {\bf y}_2$.
\end{corollary}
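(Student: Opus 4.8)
The plan is to reduce the hyperbolic-plane data $({\bf x}_i,{\bf y}_i)$ to a single invariant that Lemma~\ref{lemmawan1993} can act on. First I would pass from $Q_i=({\bf x}_i\mid {\bf y}_i)$ (an $n\times 2$ matrix) to its Gram matrix $Q_i^{\tr}AQ_i=\left(\begin{smallmatrix}0&c_i\\ c_i&0\end{smallmatrix}\right)$, where $c_i:={\bf x}_i^{\tr}A{\bf y}_i\neq 0$; the two outer entries vanish precisely because ${\bf x}_i^{\tr}A{\bf x}_i={\bf y}_i^{\tr}A{\bf y}_i=0$. Note $\rank Q_i=2$: if ${\bf x}_i$ and ${\bf y}_i$ were dependent then ${\bf x}_i^{\tr}A{\bf y}_i$ would be a scalar multiple of ${\bf x}_i^{\tr}A{\bf x}_i=0$, contradicting $c_i\neq 0$. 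So Lemma~\ref{lemmawan1993} applies to $Q_1,Q_2$, but only when the Gram matrices agree, i.e.\ when $c_1=c_2$, which we cannot assume.

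The trick to absorb the discrepancy is to rescale ${\bf y}_2$. Set $\alpha:=c_1/c_2\in\FF_q\setminus\{0\}$ and put $\widetilde{Q}_2:=({\bf x}_2\mid \alpha\,{\bf y}_2)$. Then
\begin{equation}\label{gram-match}
\widetilde{Q}_2^{\tr}A\widetilde{Q}_2=\left(\begin{array}{cc}0&\alpha c_2\\ \alpha c_2&0\end{array}\right)=\left(\begin{array}{cc}0&c_1\\ c_1&0\end{array}\right)=Q_1^{\tr}AQ_1,
\end{equation}
and $\rank\widetilde{Q}_2=\rank Q_2=2$ since $\alpha\neq 0$. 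Now Lemma~\ref{lemmawan1993} (with $m=2$) produces $P\in GL_n(\FF_q)$ with $P^{\tr}AP=A$ and $PQ_1=\widetilde{Q}_2$. Reading off the two columns of this matrix identity gives $P{\bf x}_1={\bf x}_2$ and $P{\bf y}_1=\alpha\,{\bf y}_2$, which is exactly the claim.

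I do not expect a genuine obstacle here; the statement is a routine application of Witt's theorem in the guise of Lemma~\ref{lemmawan1993}. The only point that needs a line of care is the rank computation $\rank Q_i=2$ (and $\rank\widetilde{Q}_2=2$), since Lemma~\ref{lemmawan1993} requires both matrices to have full column rank $m=2$; this is where the hypothesis ${\bf x}_i^{\tr}A{\bf y}_i\neq 0$ is used, beyond its role in defining $\alpha$. Everything else is bookkeeping of a $2\times 2$ Gram matrix.
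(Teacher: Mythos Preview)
Your proof is correct and essentially identical to the paper's: the paper defines $\alpha:=\dfrac{{\bf x}_1^{\tr}A{\bf y}_1}{{\bf x}_2^{\tr}A{\bf y}_2}$, forms the $n\times 2$ matrices $Q_1=({\bf x}_1\mid{\bf y}_1)$ and $Q_2=({\bf x}_2\mid\alpha{\bf y}_2)$, notes that the assumptions give $\rank Q_1=2=\rank Q_2$, and invokes Lemma~\ref{lemmawan1993}. You simply make explicit the Gram-matrix computation and the linear-independence argument that the paper leaves to the reader.
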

\begin{proof}
Define $\alpha:=\frac{{\bf x}_1^{\tr}A{\bf y}_1}{{\bf x}_2^{\tr}A{\bf y}_2}$. Let $Q_1$ and $Q_2$ be $n\times 2$ matrices with ${\bf x}_1, {\bf y}_1$ and ${\bf x}_2, \alpha {\bf y}_2$ as their columns, respectively. It follows from the assumptions that $\rank Q_1=2=\rank Q_2$, so the proof ends by Lemma~\ref{lemmawan1993}.
\end{proof}

\section{Polar spaces}\label{polarni}

The main purpose of this section is to investigate what is a core of an affine polar graph $VO_n^{\varepsilon}(q)$ (see below for the definition). In the first subsection we start by recalling few more definitions and properties of finite fields and related structures. We end it with the result that shows that the core of $VO_n^{\varepsilon}(q)$ is either complete or the graphs itself is a core.

\subsection{Preliminaries and auxiliary results}

Recall that $q=p^k$ is a power of an odd prime $p$.  Let
$\FF_p:=\{0,1,2,\ldots,p-1\}\subseteq \FF_q$ be the prime subfield, that is,
$\FF_p=\{x\in \FF_q\,:\, x^p=x\}$.
The \emph{trace map} $\Tr : \FF_q\to \FF_p$,
defined by $\Tr(x):=x+x^p+\ldots+x^{p^{k-1}}$, is $\FF_p$-linear and surjective (cf.~\cite[Theorem~2.23]{finite-fields-LN}), while
\begin{equation}\label{sled}
\Tr(x)=0\Longleftrightarrow x=y^p-y
\end{equation}
for some $y\in\FF_q$ (cf.~\cite[Theorem~2.25]{finite-fields-LN}).
Moreover, any $\FF_p$-linear map $\psi:\FF_q\to \FF_p$ is of the form
\begin{equation}\label{sled-lin}
\psi(x)=\Tr(yx)
\end{equation}
for some $y\in\FF_q$ (cf.~\cite[Theorem~2.24]{finite-fields-LN}). Lemma~\ref{lema1} follows immediately from~\cite[Theorems~6.26 and~6.27]{finite-fields-LN}.
\begin{lemma}\label{lema1}
Let $b\in \FF_q$ and $A\in SGL_n(\FF_q)$. Then
$$\big|\{{\bf x}\in\FF_q^n : {\bf x}^{\tr}A{\bf x}=b\}\big|=\left\{ \begin{array}{ll} q^{n-1}+v(b)q^{\frac{n-2}{2}}\eta\big((-1)^{\frac{n}{2}}\det(A)\big) &  \ \textrm{if}\ n\ \textrm{is even},\\
q^{n-1}+q^{\frac{n-1}{2}}\eta\big((-1)^{\frac{n-1}{2}}b\det(A)\big) &  \
\textrm{if}\ n\ \textrm{is odd}.\end{array} \right.$$
Here, the integer valued maps
$v: \FF_q\to \mathbb{Z}$ and
$\eta: \FF_q\to \mathbb{Z}$ are defined as follows: $v(0):=q-1$, $v(b):=-1$ if $b\neq 0$,
$\eta(0):=0$, $\eta(d):=1$ if $d$
is a nonzero square, and $\eta(d)=-1$ if $d$ is a non-square in $\FF_q$.
\end{lemma}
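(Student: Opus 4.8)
The plan is to reduce the count to that for a diagonal quadratic form and then invoke the classical solution formulas for such forms over $\FF_q$. Since $p$ is odd, a standard congruence argument (iterated completion of squares, splitting off a one--dimensional nondegenerate subspace at each step) produces $P\in GL_n(\FF_q)$ with $P^{\tr}AP=D:=\diag(a_1,\dots,a_n)$ and all $a_i\neq 0$. The substitution ${\bf x}\mapsto P{\bf x}$ is a bijection of $\FF_q^n$ that carries the solution set of ${\bf x}^{\tr}D{\bf x}=b$ onto the solution set of ${\bf x}^{\tr}A{\bf x}=b$, so the two sets have equal cardinality. Moreover $\det D=(\det P)^2\det A$, and since $(\det P)^2$ is a nonzero square we get $\eta(\det D)=\eta(\det A)$; hence every occurrence of $\det(A)$ in the asserted formula may be read as $a_1\cdots a_n$. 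It thus suffices to count the solutions of $a_1x_1^2+\dots+a_nx_n^2=b$, which for even $n$ and for odd $n$ is exactly the content of \cite[Theorems~6.26 and~6.27]{finite-fields-LN}.

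If one prefers to keep the argument self--contained, I would instead derive the two formulas directly via Gauss sums: fix a nontrivial additive character $\psi$ of $(\FF_q,+)$ and write
$$\big|\{{\bf x}\in\FF_q^n:{\bf x}^{\tr}D{\bf x}=b\}\big|=\frac1q\sum_{t\in\FF_q}\;\sum_{{\bf x}\in\FF_q^n}\psi\!\big(t({\bf x}^{\tr}D{\bf x}-b)\big).$$
The term $t=0$ contributes $q^{n-1}$. For $t\neq0$ the inner sum factors over the coordinates, and using $\sum_{x\in\FF_q}\psi(ax^2)=\eta(a)G$ for $a\neq0$, where $G:=\sum_{x\in\FF_q}\psi(x^2)$ is the quadratic Gauss sum, it equals $\eta(t)^n\,\eta(a_1\cdots a_n)\,G^n$. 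Substituting $G^2=(-1)^{(q-1)/2}q=\eta(-1)q$ and then evaluating the remaining sum over $t\neq0$ — namely $\sum_{t\neq0}\psi(-tb)=v(b)$ in the even case, and $\sum_{t\neq0}\eta(t)\psi(-tb)=\eta(-b)G$ in the odd case — produces exactly the two expressions in the statement after collecting the powers of $\eta(-1)$.

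The computation is essentially routine; its only substantive ingredients are the congruence diagonalization of a symmetric matrix in odd characteristic and the evaluation of the quadratic Gauss sum, i.e.\ $G^2=(-1)^{(q-1)/2}q$ together with $\sum_{x}\psi(ax^2)=\eta(a)G$ — both classical, see \cite[Chapter~5]{finite-fields-LN}. The only place demanding genuine care is the separate treatment of $n$ even versus $n$ odd and the simplification of the sign factors involving $\eta(-1)$; this bookkeeping is mechanical and hides no real obstacle.
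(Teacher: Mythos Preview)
Your proposal is correct and matches the paper's approach: the paper does not prove this lemma at all but simply states that it ``follows immediately from~\cite[Theorems~6.26 and~6.27]{finite-fields-LN}'', which is exactly the citation you invoke after the diagonalization step. Your added self-contained Gauss-sum derivation is a standard and valid alternative, but goes beyond what the paper itself provides.
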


Let $\langle {\bf x} \rangle$ be the one-dimensional vector subspace in $\FF_q^n$ that is spanned by nonzero column vector ${\bf x}\in\FF_q^n$. For $A\in SGL_n(\FF_q)$, the set $\{\langle {\bf x} \rangle : {\bf x}^{\tr}A{\bf x}=0, {\bf x}\neq 0\}$ is a \emph{quadric}. If $n$ is odd, then the quadric is \emph{parabolic} and has $\frac{q^{n-1}-1}{q-1}$ elements by Lemma~\ref{lema1}. If $n$ is even, then, by Lemma~\ref{lema1}, the quadric has either $\frac{(q^{n/2}-1)(q^{n/2-1}+1)}{q-1}$ or  $\frac{(q^{n/2}+1)(q^{n/2-1}-1)}{q-1}$ elements. In the first case the quadric is \emph{hyperbolic}, while in the second case it is \emph{elliptic}. Every quadric represents the vertex set of a \emph{point graph of an orthogonal polar space}, where two distinct elements $\langle {\bf x} \rangle$ and $\langle {\bf y} \rangle$ of a quadric form an edge if and only if
\begin{equation}\label{e2}
{\bf x}^{\tr} A {\bf y}=0.
\end{equation}
Two graphs constructed from two different matrices $A,A'\in SGL_n(\FF_q)$ are isomorphic, provided that the two corresponding quadrics are of the same type. The graphs obtained in this way are denoted by $Q_{n-1}(q)$, $Q_{n-1}^{+}(q)$, and $Q_{n-1}^{-}(q)$ if the quadric is parabolic, hyperbolic, and elliptic, respectively. In the general case, when it is not specified, which of the three types is meant, we write $Q_{n-1}^{\varepsilon}(q)$. Graph $Q_{n-1}^{\varepsilon}(q)$ is vertex-transitive (cf.~\cite{cameron_kan}).
Maximum cliques in a point graph of an orthogonal polar space are formed by \emph{maximal totally isotropic subspaces}, also referred as \emph{generators}, which consist of
\begin{equation}\label{e16}
s:=\frac{q^{r}-1}{q-1}
\end{equation}
quadric elements $\langle {\bf x}_1 \rangle,\ldots, \langle {\bf x}_s \rangle$ that satisfy
${\bf x}_j^{\tr} A {\bf x}_k=0$
for all $j,k$.
Here $r$ is the Witt index (cf.~\cite{cameron_kan,thas_hand}), that is,
\begin{equation}\label{wittindex}
r=\left\{ \begin{array}{ll}  \frac{n-1}{2} &  \ \textrm{in parabolic case},\\
\frac{n}{2}  &  \ \textrm{in hyperbolic case},\\
\frac{n}{2}-1  &  \ \textrm{in elliptic case},\end{array} \right.
\end{equation}
so the clique numbers equal
\begin{equation}\label{e1}
\omega\big(Q_{n-1}(q)\big)=\tfrac{q^{(n-1)/2}-1}{q-1},\ \omega\big(Q_{n-1}^{+}(q)\big)=\tfrac{q^{n/2}-1}{q-1},\ \omega\big(Q_{n-1}^{-}(q)\big)=\tfrac{q^{n/2-1}-1}{q-1}.
\end{equation}
In the literature devoted to polar spaces (cf.~\cite{thas_hand}), there is a usual assumption for Witt index to be at least two, since the case $r<2$ is not interesting. However, for the discussion in the sequel we prefer that $Q_{n-1}^{\varepsilon}(q)$ is defined for all $n\geq 2$. So, $Q_{1}^{-}(q)$ is an empty graph (i.e., a graph without vertices), while $Q_{1}^{+}(q)$, $Q_{2}(q)$, and $Q_{3}^{-}(q)$ are graphs with $2$, $q+1$, and $q^2+1$ isolated vertices (i.e., there are no edges), respectively.

\emph{Affine polar graphs} are defined similarly as point graphs of orthogonal polar spaces (cf.~\cite{www}).
Here, the vertex set equals $\FF_q^n$ ($n\geq 2$) and two distinct column vectors ${\bf x}$ and ${\bf y}$ form an edge if and only if
\begin{equation}\label{e3}
({\bf x}-{\bf y})^{\tr}A({\bf x}-{\bf y})=0,
\end{equation}
where $A\in SGL_n(\FF_q)$ is a fixed matrix. Two graphs constructed from two different matrices $A,A'\in SGL_n(\FF_q)$ are isomorphic, provided that the two quadrics defined by $A$ and $A'$ are of the same type. Graphs obtained in this way are denoted by $VO_n(q)$, $VO_n^{+}(q)$, and $VO_n^{-}(q)$, if the corresponding quadric is parabolic, hyperbolic, and elliptic, respectively. In the general case, when it is not specified, which of the three types is meant, we write $VO_n^{\varepsilon}(q)$. Observe that $VO_n^{\varepsilon}(q)$ is the Cayley graph $\textrm{Cay}(G,S)$ for the additive group $G:=(\FF_q^n,+)$ and the set
$S:=\{{\bf x}\in\FF_q^n\backslash\{0\} : {\bf x}^{\tr}A{\bf x}=0\}$, so it is a vertex-transitive graph. If ${\bf x}^{\tr}A{\bf x}=0={\bf y}^{\tr}A{\bf y}$, then equations (\ref{e2}) and (\ref{e3}) are equivalent. Hence, by identifying  $\langle {\bf x} \rangle$ with any nonzero multiple of ${\bf x}$, we deduce that $Q_{n-1}^{\varepsilon}(q)$ is (isomorphic to) a subgraph in $VO_n^{\varepsilon}(q)$.
Moreover, the following holds.

\begin{lemma}\label{lexi}
Let $N$ be the neighborhood of any vertex in $VO_n^{\varepsilon}(q)$. Then $N$ is isomorphic to lexicographic product $Q_{n-1}^{\varepsilon}(q)[K_{q-1}]$.
\end{lemma}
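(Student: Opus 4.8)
The plan is to fix a vertex of $VO_n^{\varepsilon}(q)$ and, using vertex-transitivity, assume it is the zero vector $\mathbf{0}$. Its neighborhood $N$ is then the subgraph induced on $S=\{\mathbf{x}\in\FF_q^n\setminus\{\mathbf 0\} : \mathbf{x}^{\tr}A\mathbf{x}=0\}$, with $\mathbf{x}\sim\mathbf{y}$ iff $(\mathbf{x}-\mathbf{y})^{\tr}A(\mathbf{x}-\mathbf{y})=0$; since $\mathbf{x},\mathbf{y}$ lie on the quadric, by polarization this edge condition is equivalent to $\mathbf{x}^{\tr}A\mathbf{y}=0$, i.e. to~(\ref{e2}). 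So the vertices of $N$ are exactly the nonzero isotropic vectors, and two of them are adjacent precisely when they are "polar-orthogonal". The point graph $Q_{n-1}^{\varepsilon}(q)$ has as vertices the isotropic \emph{lines} $\langle\mathbf{x}\rangle$; each such line contains $q-1$ nonzero vectors, and all of them are pairwise adjacent in $N$ because $\mathbf{x}^{\tr}A(\lambda\mathbf{x})=\lambda\,\mathbf{x}^{\tr}A\mathbf{x}=0$. This already suggests the map $N\to Q_{n-1}^{\varepsilon}(q)[K_{q-1}]$ sending $\mathbf{x}$ to the pair $(\langle\mathbf{x}\rangle,\lambda)$, where $\lambda$ records which scalar multiple of a chosen representative of $\langle\mathbf{x}\rangle$ the vector $\mathbf{x}$ is.

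Concretely, I would choose once and for all a representative $\mathbf{e}_L$ for each isotropic line $L$, together with a fixed bijection between the set of nonzero vectors on $L$ and the vertex set of $K_{q-1}$ (e.g. index $K_{q-1}$ by $\FF_q^{\times}$ and send $\lambda\mathbf{e}_L\mapsto(L,\lambda)$). This gives a bijection $\Phi:S\to V(Q_{n-1}^{\varepsilon}(q))\times\FF_q^{\times}=V(Q_{n-1}^{\varepsilon}(q)[K_{q-1}])$. It then remains to check that $\Phi$ is an isomorphism, i.e. that the edge conditions on the two sides match. Take two distinct nonzero isotropic vectors $\mathbf{x}=\lambda\mathbf{e}_L$ and $\mathbf{y}=\mu\mathbf{e}_M$. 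If $L\neq M$ then $\mathbf{x}\sim\mathbf{y}$ in $N$ iff $\mathbf{e}_L^{\tr}A\mathbf{e}_M=0$ (scalars $\lambda,\mu$ are nonzero and drop out), which is exactly the edge condition $\{L,M\}\in E(Q_{n-1}^{\varepsilon}(q))$; and by definition of the lexicographic product this is exactly when $\{(L,\lambda),(M,\mu)\}$ is an edge. If $L=M$ then $\mathbf{x},\mathbf{y}$ are distinct nonzero multiples of $\mathbf{e}_L$, hence adjacent in $N$ as noted above; and on the product side $L=M$ with $\lambda\neq\mu$ in $K_{q-1}$ (all distinct vertices of $K_{q-1}$ are adjacent) is again an edge. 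So edges correspond to edges in both directions, and $\Phi$ is a graph isomorphism.

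I do not expect a serious obstacle here; the only thing to be careful about is the polarization step identifying~(\ref{e3}) with~(\ref{e2}) on the quadric — this uses $2\neq 0$ in $\FF_q$, which holds since $q$ is odd — and the bookkeeping that the scalar parameter genuinely ranges over a set of size $q-1$ independently of the line, so that the fibers of $\mathbf{x}\mapsto\langle\mathbf{x}\rangle$ are uniform and the claimed bijection is well-defined. Both are routine. Finally, since $VO_n^{\varepsilon}(q)$ is vertex-transitive, the neighborhood of \emph{any} vertex is isomorphic to the neighborhood of $\mathbf{0}$, so the conclusion holds for all vertices.
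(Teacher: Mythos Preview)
Your proof is correct and follows essentially the same approach as the paper: both reduce to the neighborhood of $\mathbf{0}$ by vertex-transitivity, fix a representative for each isotropic line, and use the bijection $\lambda\mathbf{e}_L\leftrightarrow(\langle\mathbf{e}_L\rangle,\lambda)$ between nonzero isotropic vectors and $V(Q_{n-1}^{\varepsilon}(q))\times\FF_q^{\times}$. The paper states the map in the opposite direction and treats the degenerate case $VO_2^{-}(q)$ (empty quadric) separately, but your argument handles that case vacuously, and your explicit verification of the edge conditions via polarization is if anything more detailed than the paper's.
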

\begin{proof}
If we consider the graph $VO_2^{-}(q)$, then $N$ and $Q_{1}^{-}(q)[K_{q-1}]$ are empty graphs, hence isomorphic. Assume now that $VO_n^{\varepsilon}(q)\neq VO_2^{-}(q)$. Since $VO_n^{\varepsilon}(q)$ is a vertex-transitive graph, we may assume that $N$ is the neighborhood of the zero vertex, that is, the subgraph induced by the set $\{{\bf x}\in\FF_q^n\backslash\{0\} : {\bf x}^{\tr}A{\bf x}=0\}$. The complete graph $K_{q-1}$ can be viewed as the graph on the vertex set $\FF_q\backslash\{0\}$, where distinct scalars are adjacent. Fix nonzero ${\bf x}_1,\ldots, {\bf x}_t\in \FF_q^n$ such that $\langle{\bf x}_1\rangle,\ldots, \langle{\bf x}_t\rangle$ are precisely all distinct vertices of $Q_{n-1}^{\varepsilon}(q)$. Then, the map $(\langle{\bf x}_i\rangle, a)\mapsto a {\bf x}_i$, with $a\in \FF_q\backslash\{0\}$, is the desired isomorphism from $Q_{n-1}^{\varepsilon}(q)[K_{q-1}]$ onto $N$.
\end{proof}

Corollary~\ref{lema2} is well known. It can be easily deduced also from Lemma~\ref{lexi}.

\begin{corollary}\label{lema2}
The clique numbers of affine polar graphs equal
\begin{equation}\label{eq-klike-affine}
\omega\big(VO_n(q)\big)=q^{(n-1)/2},\quad \omega\big(VO_n^{+}(q)\big)=q^{n/2},\quad \omega\big(VO_n^{-}(q)\big)=q^{n/2-1}.
\end{equation}
\end{corollary}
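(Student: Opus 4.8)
The plan is to derive Corollary~\ref{lema2} directly from Lemma~\ref{lexi} together with the structural identities for cliques under lexicographic products, equation~(\ref{e15}), and the known clique numbers of the point graphs, equation~(\ref{e1}). First I would fix a vertex $v$ of $VO_n^{\varepsilon}(q)$ and recall that the clique number of a graph built on a Cayley structure of an additive group is governed by the closed neighborhood of any vertex: a maximum clique containing $v$ consists of $v$ together with a maximum clique of the neighborhood $N$ of $v$, so that $\omega\big(VO_n^{\varepsilon}(q)\big)=1+\omega(N)$. (One must note that $VO_n^\varepsilon(q)$ is not a complete graph since $S\neq \FF_q^n\setminus\{0\}$ by Lemma~\ref{lema1}, so a maximum clique is finite and meets some neighborhood; vertex-transitivity lets us take the neighborhood of $0$.) By Lemma~\ref{lexi}, $N\cong Q_{n-1}^{\varepsilon}(q)[K_{q-1}]$, hence by~(\ref{e15}) we get $\omega(N)=\omega\big(Q_{n-1}^{\varepsilon}(q)\big)\cdot\omega(K_{q-1})=\omega\big(Q_{n-1}^{\varepsilon}(q)\big)\cdot(q-1)$.

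Next I would plug in the three values from~(\ref{e1}). In the parabolic case $\omega\big(Q_{n-1}(q)\big)=\frac{q^{(n-1)/2}-1}{q-1}$, so $\omega\big(VO_n(q)\big)=1+(q-1)\cdot\frac{q^{(n-1)/2}-1}{q-1}=1+\big(q^{(n-1)/2}-1\big)=q^{(n-1)/2}$. The hyperbolic and elliptic cases are identical term-by-term computations, yielding $q^{n/2}$ and $q^{n/2-1}$ respectively. The degenerate small cases covered in the remarks after~(\ref{e1})—for instance $VO_2^-(q)$, where $Q_1^-(q)$ is empty and the formula gives $q^{0}=1$, consistent with an edgeless graph—are handled automatically by the same bookkeeping, since $\omega$ of an empty graph is $0$ and $\omega$ of an edgeless nonempty graph is $1$, matching $1+(q-1)\cdot 0$.

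There is no real obstacle here; the only point requiring a line of care is the reduction $\omega\big(VO_n^{\varepsilon}(q)\big)=1+\omega(N)$, i.e. justifying that a maximum clique can always be taken to contain a prescribed vertex (which follows from vertex-transitivity) and that its remaining vertices form a clique inside the neighborhood of that vertex (immediate from the definition of clique). An entirely equivalent and perhaps cleaner route, avoiding even this small remark, is to observe that the \emph{closed} neighborhood of a vertex in $VO_n^\varepsilon(q)$ is isomorphic to $K_1$ joined to $Q_{n-1}^{\varepsilon}(q)[K_{q-1}]$, or simply to cite that the result is well known and use Lemma~\ref{lexi} purely as a sanity check; but since the excerpt explicitly invites the lexicographic-product derivation, I would present the short argument above.
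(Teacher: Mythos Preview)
Your proposal is correct and follows essentially the same approach as the paper: translate a maximum clique so that it contains the zero vertex (the paper does this explicitly via ${\bf x}\mapsto{\bf x}-{\bf x}_0$, you via vertex-transitivity), deduce $\omega\big(VO_n^{\varepsilon}(q)\big)=\omega(N)+1$, and then apply Lemma~\ref{lexi}, (\ref{e15}), and (\ref{e1}). The arithmetic and the handling of the degenerate cases match as well.
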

\begin{proof}
Let $N$ be the neighborhood of the zero vector in $VO_n^{\varepsilon}(q)$,
$\mathcal{K}$ a maximum clique in  $VO_n^{\varepsilon}(q)$, and ${\bf x}_0\in \mathcal{K}$ arbitrary. Then $\mathcal{K}-{\bf x}_0:=\{{\bf x}-{\bf x}_0 : {\bf x}\in\mathcal{K}\}$ is a maximum clique, which is contained in the closed neighborhood $N_0$ of the zero vector. Hence,
$$\omega\big(VO_n^{\varepsilon}(q)\big)=|\mathcal{K}|=|\mathcal{K}-{\bf x}_0|=\omega\big(N_0\big)=\omega(N)+1.$$
The proof ends by Lemma~\ref{lexi}, (\ref{e15}), and (\ref{e1}).
\end{proof}
Note that any clique of size (\ref{eq-klike-affine}) is a (totaly isotropic) vector space or its translation.

\begin{lemma}\label{prop1}
Graph $VO_n^{\varepsilon}(q)$ is arc-transitive.
\end{lemma}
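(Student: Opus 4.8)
The plan is to exhibit enough automorphisms to carry any arc to any other arc. Since $VO_n^{\varepsilon}(q)$ is the Cayley graph $\textrm{Cay}(G,S)$ with $G=(\FF_q^n,+)$ and $S=\{{\bf x}\neq 0 : {\bf x}^{\tr}A{\bf x}=0\}$, the translations ${\bf v}\mapsto {\bf v}+{\bf w}$ already act transitively on vertices, so it suffices to show that the stabilizer of the zero vertex acts transitively on the arcs emanating from $0$, i.e.\ on the set $S$ itself. Concretely, I would show: given ${\bf x}_1,{\bf x}_2\in S$, there is a graph automorphism $\Phi$ fixing $0$ with $\Phi({\bf x}_1)={\bf x}_2$. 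Combined with the translations, this yields arc-transitivity, because an arbitrary arc $({\bf a},{\bf b})$ can first be translated to $({\bf 0},{\bf b}-{\bf a})$ with ${\bf b}-{\bf a}\in S$, and then moved to $({\bf 0},{\bf s})$ for any prescribed ${\bf s}\in S$.

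For the zero-stabilizer action I would use the linear maps ${\bf v}\mapsto P{\bf v}$ with $P\in GL_n(\FF_q)$ satisfying $P^{\tr}AP=A$: these are graph automorphisms fixing $0$, since $({P\bf u}-{P\bf v})^{\tr}A({P\bf u}-{P\bf v})=({\bf u}-{\bf v})^{\tr}P^{\tr}AP({\bf u}-{\bf v})=({\bf u}-{\bf v})^{\tr}A({\bf u}-{\bf v})$, so edges and non-edges are preserved in both directions. Now if ${\bf x}_1,{\bf x}_2\in S$ then ${\bf x}_1^{\tr}A{\bf x}_1=0={\bf x}_2^{\tr}A{\bf x}_2$, and Corollary~\ref{witt-p1} (taking $A$, the two vectors ${\bf x}_1,{\bf x}_2$, and $a_1=a_2=1$, noting $0=1^2\cdot 0$ fits the square case) produces $P\in GL_n(\FF_q)$ with $P{\bf x}_1={\bf x}_2$ and $P^{\tr}AP=A$. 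Hence the corresponding linear automorphism moves the arc $({\bf 0},{\bf x}_1)$ to $({\bf 0},{\bf x}_2)$, which is exactly what is needed. (One must just observe that $S\neq\emptyset$ is needed only to make the statement non-vacuous; when $S=\emptyset$, as for $VO_2^-(q)$, there are no arcs and the claim holds trivially.)

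The only genuinely delicate point is applying Corollary~\ref{witt-p1} with $a_i=1$: the corollary as stated asks for nonzero $a_1,a_2$ and requires ${\bf x}_i^{\tr}A{\bf x}_i=a_i^2$; with $a_1=a_2=1$ this forces ${\bf x}_i^{\tr}A{\bf x}_i=1$, not $0$. So strictly I should instead invoke Lemma~\ref{lemmawan1993} directly with $m=1$, $Q_1={\bf x}_1$, $Q_2={\bf x}_2$: both have rank $1$ (they are nonzero), and $Q_1^{\tr}AQ_1={\bf x}_1^{\tr}A{\bf x}_1=0={\bf x}_2^{\tr}A{\bf x}_2=Q_2^{\tr}AQ_2$, so there is $P\in GL_n(\FF_q)$ with $P^{\tr}AP=A$ and $P{\bf x}_1={\bf x}_2$. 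This is the cleanest route and avoids the square/nonsquare bookkeeping entirely. I expect this to be the main (and essentially only) obstacle; everything else is the routine verification that the maps in question are automorphisms and that composing a translation with a zero-fixing automorphism sends any arc to any other.
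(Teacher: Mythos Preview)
Your proposal is correct and takes essentially the same approach as the paper: reduce via translations to arcs based at $0$, then invoke Lemma~\ref{lemmawan1993} with $m=1$ to obtain $P\in GL_n(\FF_q)$ satisfying $P^{\tr}AP=A$ and $P{\bf x}_1={\bf x}_2$, so that ${\bf v}\mapsto P{\bf v}$ is the required automorphism. You were right to abandon Corollary~\ref{witt-p1} (which requires nonzero form values) in favour of Lemma~\ref{lemmawan1993} directly; the paper does exactly this, just packaging the translation--linear--translation composition in one step rather than phrasing it as ``translate to $0$, then act by the stabilizer.''
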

\begin{proof}
Let $({\bf x}_1,{\bf y}_1)$ and $({\bf x}_2,{\bf y}_2)$ be two arcs, that is, $({\bf y}_i-{\bf x}_i)^{\tr}A({\bf y}_i-{\bf x}_i)=0$ and ${\bf y}_i\neq {\bf x}_i$ for $i=1,2$, where $A$ is the defining matrix (\ref{e3}). If we evaluate Lemma~\ref{lemmawan1993} at $Q_i:={\bf y}_i-{\bf x}_i$, we obtain $P\in GL_n(\FF_q)$ such that
$P^{\tr}AP=A$ and
\begin{equation}\label{e42}
P({\bf y}_1-{\bf x}_1)={\bf y}_2-{\bf x}_2.
\end{equation}
Then all three maps
$$\Phi_1({\bf z}):={\bf z}-{\bf x}_1,\qquad \Phi_2({\bf z}):=P{\bf z},\qquad \Phi_3({\bf z}):={\bf z}+{\bf x}_2$$
are automorphisms of $VO_n^{\varepsilon}(q)$, so the same holds for their composition
$\Phi:=\Phi_3\circ \Phi_2\circ \Phi_1$, which is given by $\Phi({\bf z})=P{\bf z}-(P{\bf x}_1-{\bf x}_2)$. By (\ref{e42}) it satisfies $\Phi({\bf x}_1)={\bf x}_2$ and $\Phi({\bf y}_1)={\bf y}_2$, so the graph is arc-transitive.
\end{proof}

It is well known that graphs  $VO_n^{-}(q)$ and $VO_n^{+}(q)$ are strongly regular~\cite[$C.12.^\pm$]{hubaut} (see~\cite{BCN} for the definition of a strongly regular graph). So, except for the graph $VO_2^{-}(q)$, which is formed by $q^2$ isolated vertices, all other graphs are connected with diameter 2 (cf.~\cite[p.~4]{BCN}). The same holds for graph $VO_n(q)$ as shown below.

\begin{proposition}\label{diameter}
Let $n\geq 3$ be odd. Graph $VO_n(q)$ is connected with diameter 2.
\end{proposition}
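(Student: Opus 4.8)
The plan is to show first that the graph is connected and then that any two non-adjacent vertices are at distance exactly $2$; together these give diameter $2$ (the graph is not complete, e.g. because the clique number $q^{(n-1)/2}$ is smaller than $q^n$ for $n \geq 3$, so the diameter is at least $2$). Since $VO_n(q) = \textrm{Cay}(G,S)$ with $G = (\FF_q^n,+)$ and $S = \{{\bf x} \neq 0 : {\bf x}^{\tr}A{\bf x} = 0\}$, and the graph is vertex-transitive, it suffices to fix the zero vector ${\bf 0}$ and show that for every ${\bf y} \in \FF_q^n$ with ${\bf y} \neq {\bf 0}$ and ${\bf y}^{\tr}A{\bf y} \neq 0$ there is a vertex ${\bf z}$ adjacent to both ${\bf 0}$ and ${\bf y}$, i.e. ${\bf z}^{\tr}A{\bf z} = 0$ and $({\bf z}-{\bf y})^{\tr}A({\bf z}-{\bf y}) = 0$. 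Expanding the second condition using the first gives the linear constraint ${\bf y}^{\tr}A{\bf y} = 2\,{\bf z}^{\tr}A{\bf y}$, i.e. ${\bf z}^{\tr}A{\bf y} = \tfrac12 {\bf y}^{\tr}A{\bf y}$ (here $2$ is invertible since $p$ is odd). So I need a common isotropic solution of the quadratic form lying on a prescribed affine hyperplane.

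The key step is a counting argument. Let $b := {\bf y}^{\tr}A{\bf y} \neq 0$ and set $c := \tfrac12 b$. I want to count vectors ${\bf z}$ with ${\bf z}^{\tr}A{\bf z} = 0$ and ${\bf z}^{\tr}A{\bf y} = c$. After a change of variables (applying Lemma~\ref{lemmawan1993} / Corollary~\ref{witt-p1}, or simply completing the square along the direction ${\bf y}$, which is anisotropic since $b\neq 0$) one reduces to a quadric equation of the shape $c_1 w_1^2 + {\bf w}'^{\tr}A'{\bf w}' = \text{(something linear/constant)}$ in an $(n-1)$-dimensional complement, and the hyperplane condition fixes the coordinate $w_1$. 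Concretely, decompose $\FF_q^n = \langle {\bf y}\rangle \oplus {\bf y}^{\perp}$ where ${\bf y}^{\perp} = \{{\bf u} : {\bf u}^{\tr}A{\bf y} = 0\}$ has dimension $n-1$; write ${\bf z} = t{\bf y} + {\bf u}$. Then ${\bf z}^{\tr}A{\bf y} = t\,b$, so the hyperplane condition forces $t = \tfrac12$, and the isotropy condition becomes ${\bf u}^{\tr}A{\bf u} = -\tfrac14 b$, a single nonzero value of the nondegenerate quadratic form $\mathbf{u}\mapsto {\bf u}^{\tr}A{\bf u}$ restricted to ${\bf y}^{\perp}$ (nondegenerate because $A$ is invertible and ${\bf y}$ is anisotropic). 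By Lemma~\ref{lema1} applied in dimension $n-1$ (which is even), the number of such ${\bf u}$ is $q^{n-2} + v(-\tfrac14 b)\,q^{(n-3)/2}\,\eta(\cdots) = q^{n-2} - q^{(n-3)/2}\eta(\cdots) > 0$, since $-\tfrac14 b \neq 0$ gives $v = -1$ and $q^{n-2} > q^{(n-3)/2}$ for $n \geq 3$. Hence a common neighbor ${\bf z}$ exists.

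The main obstacle is bookkeeping the reduction cleanly: one must verify that $A$ restricted to the hyperplane ${\bf y}^{\perp}$ is again an invertible symmetric matrix so that Lemma~\ref{lema1} applies in dimension $n-1$. This follows because $\FF_q^n = \langle {\bf y}\rangle \perp {\bf y}^{\perp}$ is an orthogonal direct sum with respect to the nondegenerate form defined by $A$ (using $b = {\bf y}^{\tr}A{\bf y}\neq 0$), so the restriction to ${\bf y}^{\perp}$ is nondegenerate; picking any basis of ${\bf y}^{\perp}$ realizes it as an element of $SGL_{n-1}(\FF_q)$. Once that is in place, the strict inequality $q^{n-2} > q^{(n-3)/2}$, valid precisely because $n \geq 3$ forces $n - 2 \geq (n-1)/2 > (n-3)/2$, closes the argument. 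Connectedness is automatic from the existence of a common neighbor of ${\bf 0}$ and any other vertex (a vertex ${\bf y}$ adjacent to ${\bf 0}$ is at distance $1$; otherwise it is at distance $2$), and the lower bound on the diameter follows since $VO_n(q)$ has an independent set and a clique whose sizes multiply to less than $q^n$, so it cannot be complete.
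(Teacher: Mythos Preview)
Your argument is correct and takes a genuinely different route from the paper's. The paper fixes the concrete defining matrix $A=\diag(1-d,d,-1,\ldots,-1)$ with $d$ a non-square, and for a non-adjacent pair ${\bf x},{\bf y}$ writes $({\bf x}-{\bf y})^{\tr}A({\bf x}-{\bf y})\in\{a_1^2,da_1^2\}$; it then builds an explicit common neighbour by choosing particular vectors ${\bf w},{\bf u}$ with the right inner products and invoking Corollary~\ref{witt-p1} (Witt's theorem) to transport them to the situation at hand. Your approach instead keeps $A$ arbitrary, uses the orthogonal splitting $\FF_q^n=\langle{\bf y}\rangle\perp{\bf y}^{\perp}$ coming from anisotropy of ${\bf y}$, reduces the problem to finding ${\bf u}\in{\bf y}^{\perp}$ with ${\bf u}^{\tr}A{\bf u}=-b/4$, and then applies Lemma~\ref{lema1} in dimension $n-1$ to show the solution count $q^{n-2}\pm q^{(n-3)/2}$ is positive. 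Your counting argument is cleaner and coordinate-free, and it in fact yields the exact number of common neighbours; the paper's argument is more constructive and exhibits a specific ${\bf z}$. One small remark: your final sentence about ``an independent set and a clique whose sizes multiply to less than $q^n$'' is muddled---the non-completeness is simply $\omega\big(VO_n(q)\big)=q^{(n-1)/2}<q^n$, as you already said earlier in the proposal.
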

\begin{proof}
Since various defining invertible matrices $A$ in~(\ref{e3}) produce isomorphic graphs, we may assume that
$A=\diag(1-d,d,-1,\ldots,-1)$ is a diagonal matrix, where $d\in\FF_q$ is a non-square. Obviously, the diameter is at least two. Let ${\bf x},{\bf y}\in\FF_q^n$ be two non-adjacent vertices, that is, $({\bf x}-{\bf y})^{\tr}A({\bf x}-{\bf y})\neq 0$. By (\ref{e41}), there is nonzero  $a_1\in \FF_q$ such that $({\bf x}-{\bf y})^{\tr}A({\bf x}-{\bf y})\in\{da_1^2, a_1^2\}$. To deduce that the graph is connected and the diameter equals two, we need to find ${\bf z}\in\FF_q^n$ that is adjacent to both ${\bf x}$ and ${\bf y}$. We separate two cases.
\begin{myenumerate}{Case}
\item Let $({\bf x}-{\bf y})^{\tr}A({\bf x}-{\bf y})=da_1^2$. Let ${\bf e}_i=(0,\ldots,0,1,0,\ldots,0)^{\tr}$ be the $i$-th standard vector. Define ${\bf w}:={\bf e}_2$ and ${\bf u}:=\frac{1}{2}({\bf e}_1+{\bf e}_2+{\bf e}_3)$. Since ${\bf w}^{\tr}A{\bf w}=d=d\cdot 1^2$, Corollary~\ref{witt-p1} shows that there is
    $P\in GL_n(\FF_q)$ such that $P ({\bf y}-{\bf x})=a_1{\bf w}$ and $P^{\tr}AP=A$. Define ${\bf z}:={\bf x}+a_1 P^{-1} {\bf u}$. Since $A=(P^{-1})^{\tr}AP^{-1}$ and ${\bf y}-{\bf x}=a_1P^{-1}{\bf w}$, we easily see that
    $({\bf z}-{\bf x})^{\tr}A({\bf z}-{\bf x})=0=({\bf y}-{\bf z})^{\tr}A({\bf y}-{\bf z})$, that is, ${\bf z}$ is adjacent to ${\bf x}$ and ${\bf y}$.

\item Let $({\bf x}-{\bf y})^{\tr}A({\bf x}-{\bf y})=a_1^2$. The proof is the same as in Case~1 with the only exception that here ${\bf w}:={\bf e}_1+{\bf e}_2$.\qedhere
\end{myenumerate}
\end{proof}

In determining the core of $VO_n^{\varepsilon}(q)$ we need to consider few of the elementary cases, with Witt index less than $2$, separately.
\begin{proposition}\label{majhni_primeri}
Let $\Gamma$ be one of the graphs $VO_2^{-}(q)$, $VO_2^{+}(q)$, or $VO_3(q)$. Then statements $(i)$, $(ii)$, and $(iii)$ from Proposition~\ref{klikaneod2} hold.
\end{proposition}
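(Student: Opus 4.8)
The plan is to handle the three graphs one at a time, in each case either exhibiting a homomorphism onto a complete graph (giving $(i)$, and via Proposition~\ref{klikaneod2} also $(ii)$ and $(iii)$) or directly verifying $(iii)$, i.e. the identity $\omega(\Gamma)\alpha(\Gamma)=|V(\Gamma)|$, and then invoking Proposition~\ref{klikaneod2} since each $VO_n^{\varepsilon}(q)$ is a Cayley graph over an abelian group and hence normal. For $\Gamma=VO_2^{-}(q)$ the graph has $q^2$ isolated vertices, so it is its own core, a complete graph $K_1$ is a core only if there is one vertex — instead note $\omega(\Gamma)=1$ by Corollary~\ref{lema2} (which gives $q^{n/2-1}=q^0=1$) and $\alpha(\Gamma)=q^2=|V(\Gamma)|$, so $(iii)$ holds trivially and Proposition~\ref{klikaneod2} does the rest. (Alternatively, an empty graph is vacuously handled.)

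For $\Gamma=VO_2^{+}(q)$ I would use a concrete model. Choosing $A=\left(\begin{smallmatrix}0&1\\1&0\end{smallmatrix}\right)$, the set $S$ consists of nonzero vectors $(a,b)^{\tr}$ with $ab=0$, i.e. the two coordinate axes minus the origin. Thus $VO_2^{+}(q)$ is the Cayley graph on $\FF_q^2$ whose edges join points sharing a coordinate; this is exactly the rook's graph $K_q\,\square\,K_q$ on a $q\times q$ grid (in fact $\omega=q=q^{n/2}$, matching Corollary~\ref{lema2}). A maximum clique is a row (or column), of size $q$, and a maximum independent set is a transversal (a permutation matrix pattern), of size $q$, so $\omega(\Gamma)\alpha(\Gamma)=q\cdot q=q^2=|V(\Gamma)|$, which is $(iii)$. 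More constructively, the map $(a,b)\mapsto a$ (project onto the first coordinate) is a homomorphism onto $K_q$, which is a clique, so the core is complete and $(i)$ holds directly; $(ii)$ and $(iii)$ then follow from Proposition~\ref{klikaneod2}.

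For $\Gamma=VO_3(q)$ I would again pick a convenient $A\in SGL_3(\FF_q)$ and exhibit a homomorphism onto a maximum clique $K_q$ (Corollary~\ref{lema2} gives $\omega(VO_3(q))=q^{(n-1)/2}=q$). The natural candidate is a linear functional $\ell:\FF_q^3\to\FF_q$ whose kernel is a totally isotropic line shifted around, more precisely one whose level sets are the translates of a $1$-dimensional totally isotropic subspace $\langle{\bf v}\rangle$ (with ${\bf v}^{\tr}A{\bf v}=0$); then $\ell$ is a homomorphism onto $K_q$ provided no edge of $\Gamma$ lies inside a level set, i.e. provided the line $\langle{\bf v}\rangle$ meets the quadric cone $\{{\bf x}:{\bf x}^{\tr}A{\bf x}=0\}$ only in $\langle{\bf v}\rangle$ itself. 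Concretely, with $A=\operatorname{diag}(1,1,-1)$ or the hyperbolic-plus-orthogonal form $A=\left(\begin{smallmatrix}0&1&0\\1&0&0\\0&0&1\end{smallmatrix}\right)$, one checks that ${\bf v}=(1,0,0)^{\tr}$ is isotropic and the functional ${\bf x}\mapsto x_2$ has the required property: if $({\bf x}-{\bf y})^{\tr}A({\bf x}-{\bf y})=0$ and $x_2=y_2$, then ${\bf x}-{\bf y}=(t,0,s)^{\tr}$ with $-s^2+\text{(cross terms)}=0$ forces ${\bf x}={\bf y}$ — the short computation to be filled in. This gives a homomorphism onto $K_q=\omega$-clique, so the core is complete, i.e. $(i)$, and Proposition~\ref{klikaneod2} yields $(ii)$ and $(iii)$.

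The main obstacle is the last case $VO_3(q)$: one must choose the diagonalization of the parabolic form carefully so that the obvious coordinate functional has totally isotropic fibers, and verify the small quadratic computation showing that an isotropic difference vector lying in a fiber must vanish. This is the finite-geometric analog of ``a generator hits a level hyperplane of the right type in a line'', and once the model is fixed it reduces to a two-variable identity over $\FF_q$. The other two cases are essentially bookkeeping, modulo matching the clique and independence numbers to Corollary~\ref{lema2} and~(\ref{e23}).
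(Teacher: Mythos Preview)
Your handling of $VO_2^{-}(q)$ and $VO_2^{+}(q)$ is fine and matches the paper's proof in spirit (the paper uses $A=\diag(1,-1)$ and verifies $(iii)$ directly, but your rook-graph model and projection onto $K_q$ are equally valid).

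The $VO_3(q)$ case, however, contains a genuine dimension error. The kernel of a linear functional $\ell:\FF_q^3\to\FF_q$ is a $2$-dimensional subspace, not a line; its level sets are affine planes of size $q^2$, not translates of a $1$-dimensional $\langle{\bf v}\rangle$. For $\ell$ to be a homomorphism onto $K_q$, each level set must be an independent set, which means the $2$-dimensional kernel must be \emph{anisotropic} (contain no nonzero isotropic vector) --- the opposite of what you wrote. Your concrete choices both fail: with $A=\left(\begin{smallmatrix}0&1&0\\1&0&0\\0&0&1\end{smallmatrix}\right)$ and $\ell({\bf x})=x_2$, the kernel $\{(t,0,s)^{\tr}\}$ carries the form $s^2$, so every $(t,0,0)^{\tr}$ is isotropic and the adjacent vertices $0$ and $(1,0,0)^{\tr}$ collapse under $\ell$. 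With $A=\diag(1,1,-1)$ and the same $\ell$, the kernel carries $t^2-s^2$, which vanishes on $(1,0,1)^{\tr}$.

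The fix is immediate and is exactly what the paper does: take $A=\diag(1,-1,d)$ with $d$ a non-square and set $\ell({\bf x})=x_1$. The kernel $\{(0,x_2,x_3)^{\tr}\}$ then carries the form $-x_2^2+dx_3^2$, which is anisotropic since $d$ is a non-square. The paper phrases this dually, exhibiting that plane as an independent set of size $q^2$, and then uses $\omega(\Gamma)=q$ together with~(\ref{e23}) to get $(iii)$; your coloring formulation is the same content viewed from the other side.
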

\begin{proof}
By Proposition~\ref{klikaneod2} it suffices to show $(iii)$. The graph $VO_2^{-}(q)$ is formed by $q^2$ isolated vertices, so $\omega\big(VO_2^{-}(q)\big)=1$, $\alpha\big(VO_2^{+}(q)\big)=q^2$, and $(iii)$ holds.

If $\Gamma=VO_2^{+}(q)$, then we may assume that the defining matrix $A$ in (\ref{e3}) equals
$$\left(\begin{array}{rr}1&0\\
0&-1\end{array}\right).$$
Then, $\{(0,x)^{\tr} : x\in \FF_q\}$ is an independent set. In fact,
$$
\big((0,x)^{\tr}-(0,y)^{\tr}\big)^{\tr}A\big((0,x)^{\tr}-(0,y)^{\tr}\big)=-(x-y)^2\neq 0
$$
for all distinct $x,y\in\FF_q$, so $\alpha(\Gamma)\geq q$. Since $|V(\Gamma)|=q^2$ and $\omega(\Gamma)=q$ by Corollary~\ref{lema2}, (\ref{e23}) implies $(iii)$.

In the case $\Gamma=VO_3(q)$ we may assume that
$$A=\left(\begin{array}{rrr}1&0&0\\
0&-1&0\\
0&0&d\end{array}\right),$$
where $d\in\FF_q$ is non-square. Then  $\{(0,x,y)^{\tr} : x,y\in \FF_q\}$ is an independent set, since
\begin{align*}
\big((0,x_1,y_1)^{\tr}-(0,x_2,y_2)^{\tr}\big)^{\tr}A\big((0,x_1,y_1)^{\tr}&-(0,x_2,y_2)^{\tr}\big)=
\\=&-(x_1-x_2)^2+d(y_1-y_2)^2\neq 0
\end{align*}
for $(0,x_1,y_1)^{\tr}\neq (0,x_2,y_2)^{\tr}$. So $\alpha(\Gamma)\geq q^2$. We proceed in the same way as in the case $\Gamma=VO_2^{+}(q)$.
\end{proof}

It is not so rare, that the core of a graph is complete. In fact, in~\cite[Theorem~4.1]{godsil_clanek} it was shown that the core of a connected regular graph, with the automorphism group acting transitively on pairs of vertices at distance two, is either complete or the graph itself is a core. Though the automorphism group of $VO_n^{\varepsilon}(q)$ does not have this property in general,
we will infer from Corollary~\ref{lema4} that a particular orbit of such an action is large enough to obtain a result for $VO_n^{\varepsilon}(q)$ that is analogous to~\cite[Theorem~4.1]{godsil_clanek}.
\begin{lemma}\label{lema5}
The graph $VO_n^{\varepsilon}(q)$ is either a core or its core is a complete graph on $\omega\big(VO_n^{\varepsilon}(q)\big)$ vertices.
\end{lemma}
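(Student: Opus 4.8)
The plan is to adapt the argument of Godsil--Royle \cite[Theorem~4.1]{godsil_clanek}: an endomorphism $\Phi$ of $VO_n^{\varepsilon}(q)$ is a homomorphism into the graph itself, and we want to show that if $\Phi$ is not an automorphism then its image forces the core to be complete, i.e.\ the core has clique number $\omega\big(VO_n^{\varepsilon}(q)\big)$ and no more. First I would dispose of the small cases: when $\Gamma$ is $VO_2^{-}(q)$, $VO_2^{+}(q)$, or $VO_3(q)$, Proposition~\ref{majhni_primeri} already gives that the core is complete (via statement $(i)$ of Proposition~\ref{klikaneod2}), so the conclusion holds; likewise $VO_2^{-}(q)$ is edgeless and trivially a core or has complete ($K_1$) core. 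Hence assume the Witt index $r$ is at least $2$, so that $\Gamma$ is connected with diameter $2$ (Proposition~\ref{diameter} for the parabolic case, and the strong regularity of $VO_n^{\pm}(q)$ for the hyperbolic/elliptic cases) and, crucially, arc-transitive by Lemma~\ref{prop1}.

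The core of $\Gamma$, call it $\Gamma'$, is a connected induced subgraph and, since $\Gamma$ is vertex- and arc-transitive, $\Gamma'$ is vertex- and arc-transitive too (by the remarks in Section~\ref{grafi}). The key structural step is: if $\Gamma'$ is not complete, then it contains two vertices at distance exactly $2$; by arc-transitivity of $\Gamma'$ and of $\Gamma$, I would like to promote a retraction $\Psi:\Gamma\to\Gamma'$ to something that, composed with a suitable automorphism, fixes a triangle-free ``path of length two'' and thereby forces $\Psi$ to be injective on a large set. Concretely, pick an edge $\{{\bf x}_1,{\bf y}_1\}$ of $\Gamma$ together with a common neighbour configuration; the diameter-$2$ property means every non-edge of $\Gamma$ lies in a $4$-cycle or a path of length $2$, and I want to use Corollary~\ref{lema4} to show that the automorphism group acts transitively on the relevant configurations of the form $({\bf x},{\bf y})$ with $({\bf x}-{\bf y})^{\tr}A({\bf x}-{\bf y})=0$ together with a third vector ${\bf z}$ with $({\bf z}-{\bf x})^{\tr}A({\bf z}-{\bf x})=0$ and $({\bf z}-{\bf y})^{\tr}A({\bf z}-{\bf y})\neq 0$. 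After translating to the origin, this is exactly a configuration $({\bf u},{\bf v})$ of isotropic vectors with ${\bf u}^{\tr}A{\bf v}\ne 0$, on which Corollary~\ref{lema4} gives transitivity up to a nonzero scalar on the second coordinate.

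With that transitivity in hand, the argument runs as in \cite[Theorem~4.1]{godsil_clanek}: let $\Psi:\Gamma\to\Gamma'$ be a retraction. If $\Gamma'$ is not all of $\Gamma$ there is a vertex $w\notin V(\Gamma')$; because $\Gamma'$ is the core, $\Psi$ is an automorphism when restricted to $\Gamma'$ but collapses something, so there are vertices $a\ne b$ with $\Psi(a)=\Psi(b)$, and since homomorphisms do not increase distance while $\Psi$ fixes $\Gamma'$ pointwise, $a$ and $b$ must be at distance $\ge 3$ in $\Gamma$ — impossible once diameter is $2$, unless $\Gamma'=\Gamma$, i.e.\ $\Gamma$ is a core. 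Wait: this shows diameter $2$ already forbids proper non-complete retracts of the ``wrong'' kind, so the remaining possibility is that $\Gamma'$ is complete. To nail that $\Gamma'$ is complete when it is not all of $\Gamma$, use that any homomorphic image of $\Gamma$ onto $\Gamma'$ restricts to a clique on any maximum clique $\mathcal{K}$ of $\Gamma$ (size $\omega\big(VO_n^{\varepsilon}(q)\big)$ by Corollary~\ref{lema2}), so $\omega(\Gamma')\ge \omega(\Gamma)$; combined with $\omega(\Gamma')\le\omega(\Gamma)$ and diameter $2$, a short argument (if $\Gamma'$ had a non-edge it would have two vertices at distance $2$, and then the arc-transitive collapsing above would contradict $\Gamma'$ being a core) forces $\Gamma'$ to be a complete graph on exactly $\omega\big(VO_n^{\varepsilon}(q)\big)$ vertices.

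The main obstacle is precisely the transitivity-on-distance-two-configurations step: the automorphism group of $VO_n^{\varepsilon}(q)$ is \emph{not} transitive on all pairs of vertices at distance $2$ (the author explicitly warns about this before the statement), so I cannot quote \cite[Theorem~4.1]{godsil_clanek} verbatim. The fix is to isolate \emph{one} orbit of distance-two configurations that is large enough — namely the configurations $({\bf x},{\bf z},{\bf y})$ with $({\bf x}-{\bf y})^{\tr}A({\bf x}-{\bf y})=0$, which after translation become pairs of isotropic vectors with nonzero bilinear pairing — and verify via Corollary~\ref{lema4} that the stabilizer acts there with enough transitivity (fixing ${\bf x}$, sending ${\bf y}\mapsto{\bf y}'$ up to the harmless scalar) to run the Godsil--Royle collapsing argument. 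The scalar ambiguity in Corollary~\ref{lema4} is benign because scalar multiples of ${\bf y}$ (translated back: vertices on the line through ${\bf x}$ and ${\bf y}$) are all adjacent to ${\bf x}$, so one can absorb it; making this bookkeeping precise is the only genuinely delicate part, and the rest is a faithful transcription of the known proof.
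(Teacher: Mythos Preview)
Your proposal contains a genuine gap. The claim that if a retraction $\Psi$ collapses $a\neq b$ then $d(a,b)\geq 3$ is false: a graph homomorphism may identify any two non-adjacent vertices, so one only gets $d(a,b)\geq 2$, and diameter~$2$ yields no contradiction. You sense this (``Wait:''), but the fallback (``arc-transitive collapsing above would contradict $\Gamma'$ being a core'') is not an argument, and your handling of the scalar in Corollary~\ref{lema4} is where the real problem lies. You say the scalar is benign ``because scalar multiples of ${\bf y}$ are all adjacent to ${\bf x}$'', but adjacency is not the issue: to contradict $\Gamma'$ being a core you must exhibit two \emph{distinct vertices of $\Gamma'$} with the same image under an endomorphism of $\Gamma'$. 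If you try to move the collapsed configuration $({\bf u}_1,{\bf v}_1,{\bf w}_1)$ onto a path $c\sim e\sim d$ with $\{c,d\}\subseteq V(\Gamma')$ a non-edge, Corollary~\ref{lema4} only lets you hit $\bigl(c,\,e,\,e+\alpha(d-e)\bigr)$ for some nonzero $\alpha$, and there is no reason $e+\alpha(d-e)\in V(\Gamma')$ when $\alpha\neq 1$.

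The paper resolves this with the maximum-clique structure, an idea absent from your sketch. After arranging (via arc-transitivity of $\Gamma'$) that ${\bf v}_1$ lies in a maximum clique ${\cal K}_1\subseteq V(\Gamma')$ with ${\bf u}_1=\Psi({\bf w}_1)\notin{\cal K}_1$, the translate ${\cal K}_2={\cal K}_1-{\bf v}_1$ is a totally isotropic \emph{vector subspace}. Since ${\bf x}_1={\bf u}_1-{\bf v}_1\notin{\cal K}_2$ and ${\cal K}_2$ is a maximum clique, some ${\bf y}_2\in{\cal K}_2$ satisfies ${\bf x}_1^{\tr}A{\bf y}_2\neq 0$; Corollary~\ref{lema4} then yields an automorphism fixing ${\bf x}_1$ and sending ${\bf y}_1={\bf w}_1-{\bf v}_1$ to $\alpha{\bf y}_2$. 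The scalar is now harmless because ${\cal K}_2$, being a subspace, contains $\alpha{\bf y}_2$; translating back, ${\bf w}_1$ is sent into ${\cal K}_1\subseteq V(\Gamma')$ while ${\bf u}_1$ is fixed, and composing the inverse automorphism with $\Psi$ gives a non-injective endomorphism of~$\Gamma'$. Absorbing the scalar via the vector-space property of the maximum clique is precisely the missing idea.
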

\begin{proof}
For graph $VO_2^{-}(q)$ the result is proved in Proposition~\ref{majhni_primeri}, so we may assume that $VO_n^{\varepsilon}(q)$ is connected. Let $A$ be its defining matrix~(\ref{e3}). For ${\bf x}\neq{\bf y}$ we write ${\bf x}\sim {\bf y}$ if equation (\ref{e3}) is satisfied, that is, if vertices ${\bf x}$ and ${\bf y}$ are adjacent in $VO_n^{\varepsilon}(q)$.

Let $\Gamma'$ be a core of $VO_n^{\varepsilon}(q)$. Assume that $\Gamma'$ is neither complete nor the whole graph.
Since any endomorphism maps a clique of size $\omega\big(VO_n^{\varepsilon}(q)\big)$ to a clique of size $\omega\big(VO_n^{\varepsilon}(q)\big)$, $\Gamma'$ must contain such a clique ${\cal K}$. Since $VO_n^{\varepsilon}(q)$ is a connected graph, so is its core. Since $\Gamma'$ is not complete, we deduce that there is ${\bf u}\in V(\Gamma')\backslash {\cal K}$ and ${\bf v}\in {\cal K}$ such that ${\bf u}\sim {\bf v}$.

Now, since $VO_n^{\varepsilon}(q)$ is connected and $\Gamma'\neq VO_n^{\varepsilon}(q)$, there exists ${\bf w}_1$ outside $\Gamma'$ that is adjacent to some ${\bf v}_1$ in $\Gamma'$. Let $\Psi$ be any retraction of $VO_n^{\varepsilon}(q)$ onto $\Gamma'$. Then it maps ${\bf w}_1$ to some neighbor ${\bf u}_1$ of ${\bf v}_1$ in $\Gamma'$. By Lemma~\ref{prop1}, $VO_n^{\varepsilon}(q)$ is arc-transitive, so from Section~\ref{grafi} we know that $\Gamma'$ is arc-transitive as well. Consequently, there is an automorphism $\Phi'$ of $\Gamma'$ that maps the arc $({\bf u},{\bf v})$ to arc $({\bf u}_1,{\bf v}_1)$, so there exists a maximum clique ${\cal K}_1:=\Phi'({\cal K})$ in $\Gamma'$ that contains ${\bf v}_1$, while ${\bf u}_1\notin {\cal K}_1$. Define ${\bf x}_1:={\bf u}_1-{\bf v}_1$ and ${\bf y}_1:={\bf w}_1-{\bf v}_1$. Then ${\bf x}_1\sim 0\sim {\bf y}_1$ and ${\bf x}_1\nsim {\bf y}_1$, so ${\bf x}_1^{\tr}A{\bf x}_1=0={\bf y}_1^{\tr}A{\bf y}_1$ and ${\bf x}_1^{\tr}A{\bf y}_1\neq 0$.
Since the clique ${\cal K}_2:={\cal K}_1-{\bf v}_1$ is of maximum size and ${\bf x}_1={\bf u}_1-{\bf v}_1\notin {\cal K}_2$, there is ${\bf y}_2\in {\cal K}_2$ such that ${\bf y}_2\nsim {\bf x}_1$. Since $0={\bf v}_1-{\bf v}_1\in {\cal K}_2$, it follows that ${\bf y}_2^{\tr}A{\bf y}_2=({\bf y}_2-0)^{\tr}A({\bf y}_2-0)=0$, so we deduce that ${\bf x}_1^{\tr}A{\bf y}_2\neq 0$. By Corollary~\ref{lema4} there exists $P\in GL_n(\FF_q)$ and  nonzero $\alpha\in\FF_q$ such that
\begin{equation}\label{e43}
P^{\tr}AP=A,
\end{equation}
$P{\bf x}_1={\bf x}_1$, and $P{\bf y}_1=\alpha {\bf y}_2$. Since the clique ${\cal K}_2$ is of maximum size and contains the zero vector, it is a (totally isotropic) vector space. Since it contains ${\bf y}_2$, it follows that $\alpha {\bf y}_2\in {\cal K}_2$. Consequently, $\alpha {\bf y}_2+{\bf v}_1\in {\cal K}_1$. By (\ref{e43}), the map $\Phi({\bf x}):=P{\bf x}-(P{\bf v}_1-{\bf v}_1)$ is an automorphism of $VO_n^{\varepsilon}(q)$.
Moreover, it satisfies $\Phi({\bf u}_1)={\bf u}_1$ and $\Phi({\bf w}_1)=\alpha {\bf y}_2+{\bf v}_1$. Hence,
the restriction of the composition $\Phi'':=\Psi\circ\Phi^{-1}$ to $\Gamma'$ is an endomorphism of $\Gamma'$ that is not bijective, since $\Phi''({\bf u}_1)={\bf u}_1=\Phi''(\alpha {\bf y}_2+{\bf v}_1)$. This contradicts the fact that $\Gamma'$ is a core.
\end{proof}

\subsection{Spectrum}

As we shall see, the spectrum will provide us partial answer regarding the (non)completeness of the core of affine polar graphs.

Since the graphs  $VO_n^{-}(q)$ and $VO_n^{+}(q)$ are strongly regular with known parameters, their spectrum is easy to derive, well known, and mentioned already in the survey paper~\cite[p.~375, $C.12.^\pm$]{hubaut} (see also~\cite{www}).
\begin{lemma}\label{lastne_hyper_ell} (cf.~\cite{www,hubaut})
Let $n\geq 2$ be even. The eigenvalues $\lambda_i$ and their multiplicities $m_{\lambda_i}$ of the elliptic and hyperbolic affine polar graphs are as follows:
\begin{center}
\begin{tabular}{|l|l|l|}
\hline
 &$\lambda_1=(q^{\frac{n}{2}-1} - 1)(q^{\frac{n}{2}} + 1)$& $m_{\lambda_1}=1$\\
$VO_n^{-}(q)$ &$\lambda_2=q^{\frac{n}{2}-1}-1$ & $m_{\lambda_2}=q^{\frac{n}{2}-1}(q-1)(q^{\frac{n}{2}}+1)$\\
&$\lambda_3=- q^{\frac{n}{2}} + q^{\frac{n}{2}-1} - 1$ & $m_{\lambda_3}= (q^{\frac{n}{2}-1} - 1)(q^{\frac{n}{2}} + 1)$ \\
 \hline
 \hline
&$\lambda_1=(q^{\frac{n}{2}-1} + 1)(q^{\frac{n}{2}} - 1)$  &  $m_{\lambda_1}= 1$\\
$VO_n^{+}(q)$ & $\lambda_2=q^{\frac{n}{2}} - q^{\frac{n}{2}-1} - 1$ & $m_{\lambda_2}= (q^{\frac{n}{2}-1} + 1)(q^{\frac{n}{2}} - 1)$ \\
& $\lambda_3=-q^{\frac{n}{2}-1}-1$ & $m_{\lambda_3}= q^{\frac{n}{2}-1}(q-1)(q^{\frac{n}{2}}-1) $\\
 \hline
\end{tabular}.
\end{center}
\end{lemma}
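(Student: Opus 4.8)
The plan is to compute the spectrum of $VO_n^{-}(q)$ and $VO_n^{+}(q)$ directly from the fact that each is the Cayley graph $\mathrm{Cay}\big((\FF_q^n,+),S\big)$ with $S=\{{\bf x}\in\FF_q^n\setminus\{0\}: {\bf x}^\tr A{\bf x}=0\}$, using Lemma~\ref{karakterji-spekter}. The characters of $(\FF_q^n,+)$ are the maps ${\bf x}\mapsto \zeta^{\Tr({\bf y}^\tr{\bf x})}$ for ${\bf y}\in\FF_q^n$, where $\zeta=e^{2\pi i/p}$; so by Lemma~\ref{karakterji-spekter} the eigenvalues are the exponential sums $\theta({\bf y}):=\sum_{s\in S}\zeta^{\Tr({\bf y}^\tr s)}$. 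First I would add and subtract the $s=0$ term to write $\theta({\bf y})=\big(\sum_{{\bf x}\in\FF_q^n,\ {\bf x}^\tr A{\bf x}=0}\zeta^{\Tr({\bf y}^\tr{\bf x})}\big)-1$, so everything reduces to evaluating $T({\bf y}):=\sum_{{\bf x}:\,{\bf x}^\tr A{\bf x}=0}\zeta^{\Tr({\bf y}^\tr{\bf x})}$.

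Next I would evaluate $T({\bf y})$ by introducing the extra sum over the level sets: $T({\bf y})=\sum_{{\bf x}\in\FF_q^n}\zeta^{\Tr({\bf y}^\tr{\bf x})}\cdot\frac1p\sum_{t\in\FF_p}\zeta^{t\,\Tr({\bf x}^\tr A{\bf x})}$ is not quite the cleanest; instead, the standard device is to use the indicator $\mathbf 1[{\bf x}^\tr A{\bf x}=0]=q^{-1}\sum_{c\in\FF_q}\psi(c\,{\bf x}^\tr A{\bf x})$ where $\psi(u):=\zeta^{\Tr(u)}$ is the canonical additive character of $\FF_q$. This gives $T({\bf y})=q^{-1}\sum_{c\in\FF_q}\sum_{{\bf x}\in\FF_q^n}\psi(c\,{\bf x}^\tr A{\bf x}+{\bf y}^\tr{\bf x})$. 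The $c=0$ term contributes $q^{n-1}$ times $\sum_{\bf x}\psi({\bf y}^\tr{\bf x})$, which is $q^{n-1}\cdot q^n$ if ${\bf y}=0$ and $0$ otherwise. For $c\neq 0$, after diagonalizing $A$ (we may assume $A=\diag(a_1,\dots,a_n)$) and completing the square in each coordinate, the inner sum factors into a product of one-dimensional Gauss sums $\sum_{x\in\FF_q}\psi(ca_ix^2+y_ix)$, each equal to $\eta(ca_i)\,G\cdot\psi(-y_i^2/(4ca_i))$ where $G=\sum_{x}\psi(x^2)$ is the quadratic Gauss sum with $G^2=\eta(-1)q$. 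Since $n$ is even, the product of the $n$ factors of $G$ is $G^n=(\eta(-1)q)^{n/2}=\eta(-1)^{n/2}q^{n/2}$, and $\prod_i\eta(ca_i)=\eta(c^n\det A)=\eta(\det A)$ (again using $n$ even). Collecting the $c$-dependence, the $c\neq 0$ part becomes $q^{-1}\eta(-1)^{n/2}q^{n/2}\eta(\det A)\sum_{c\neq 0}\psi\!\big(-\tfrac{1}{4c}\,{\bf y}^\tr A^{-1}{\bf y}\big)$, and the remaining sum over $c\neq 0$ is $q-1$ if ${\bf y}^\tr A^{-1}{\bf y}=0$ and $-1$ otherwise (a standard sum of $\psi$ over $\FF_q^\times$).

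Putting the pieces together yields a formula for $\theta({\bf y})$ with exactly three values, governed by whether ${\bf y}=0$, whether ${\bf y}\neq 0$ but ${\bf y}^\tr A^{-1}{\bf y}=0$, or ${\bf y}^\tr A^{-1}{\bf y}\neq 0$. I would then identify $\eta(-1)^{n/2}\eta(\det A)$ with the sign $\varepsilon=\pm1$ distinguishing the hyperbolic and elliptic cases (this is exactly the quantity $\eta((-1)^{n/2}\det A)$ appearing in Lemma~\ref{lema1}), simplify the constants to match $\lambda_1,\lambda_2,\lambda_3$ in the table, and read off the multiplicities: $m_{\lambda_1}=1$ for ${\bf y}=0$; the count of nonzero ${\bf y}$ with ${\bf y}^\tr A^{-1}{\bf y}=0$ is $q^{n-1}+v(0)q^{n/2-1}\eta((-1)^{n/2}\det A^{-1})-1$ by Lemma~\ref{lema1} (noting $\det A^{-1}$ and $\det A$ lie in the same square class), and the rest make up the third multiplicity; the total must be $q^n$, which is the final consistency check. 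The main obstacle is purely bookkeeping: tracking the Gauss-sum constant $G^n$ and the quadratic-character factor $\prod\eta(ca_i)$ correctly so that the resulting $\varepsilon$-sign lands on the right one of $VO_n^{+}$ versus $VO_n^{-}$, and then algebraically massaging $q^{-1}\eta(-1)^{n/2}q^{n/2}\eta(\det A)\cdot(q-1)$ and its companion into the displayed forms $(q^{n/2-1}\mp1)(q^{n/2}\pm1)$ etc. Alternatively, since the excerpt grants that these graphs are strongly regular with parameters recorded in the cited sources, one may instead simply invoke the standard eigenvalue formulas $\lambda_1=k$, $\lambda_{2,3}=\tfrac12\big[(\mu-\lambda)\pm\sqrt{(\mu-\lambda)^2+4(k-\mu)}\big]$ for a strongly regular graph together with the known parameters of $VO_n^{\pm}(q)$, and verify the multiplicities from $m_1+m_2+m_3=q^n$ and $m_1\lambda_1+m_2\lambda_2+m_3\lambda_3=0$; I would present the character-sum computation as the self-contained route since it is the one actually needed in the sequel.
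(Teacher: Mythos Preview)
Your approach is correct, and in fact more self-contained than what the paper does. The paper does not prove Lemma~\ref{lastne_hyper_ell} at all: it simply cites the survey~\cite{hubaut} and the webpage~\cite{www}, relying on the well-known strong regularity of $VO_n^{\pm}(q)$, which is precisely the alternative route you mention at the end. A later remark in the paper (after Corollary~\ref{spekter-parabolic}) observes that one could also reprove Lemma~\ref{lastne_hyper_ell} by the method used there, namely Proposition~\ref{karakterji}~(ii): this expresses each eigenvalue as $(|\triangle|\,p-|S|)/(p-1)$ and then \emph{counts} the set $\triangle=\{{\bf s}\in S:\Psi({\bf s})=0\}$ via Lemma~\ref{lema1}, rather than evaluating Gauss sums directly. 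Your route via completing the square and the identity $G^2=\eta(-1)q$ is the classical number-theoretic computation; it bypasses Proposition~\ref{karakterji} entirely and identifies the sign $\varepsilon=\eta\big((-1)^{n/2}\det A\big)$ in one step. The paper's counting method has the mild advantage of never needing the explicit value of the Gauss sum, while yours makes the three-way case split (on whether ${\bf y}=0$, ${\bf y}^{\tr}A^{-1}{\bf y}=0$, or ${\bf y}^{\tr}A^{-1}{\bf y}\neq 0$) completely transparent.

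One small slip to fix in your write-up: the $c=0$ term contributes $q^{-1}\sum_{{\bf x}}\psi({\bf y}^{\tr}{\bf x})$, which equals $q^{n-1}$ when ${\bf y}=0$ and $0$ otherwise; you wrote ``$q^{n-1}$ times $\sum_{\bf x}\psi({\bf y}^{\tr}{\bf x})$'' and then ``$q^{n-1}\cdot q^n$'', which is off by a factor of $q^n$. This does not affect the argument, since the three values you obtain and the multiplicity counts via Lemma~\ref{lema1} are correct.
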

\begin{remark}
Eigenvalues $\lambda_1$, $\lambda_2$, $\lambda_3$ are distinct, except in the case of a graph $VO_2^{-}(q)$, where $\lambda_1=0=\lambda_2$ and $m_{\lambda_3}=0$.
\end{remark}

\begin{remark}
Let ${\cal M}_{\Gamma}$ be the set of all real symmetric $|V(\Gamma)|\times |V(\Gamma)|$ matrices $M$, which have 1 at $(i,j)$-th entry, whenever $i=j$ or $i$-th and $j$-th vertex are not adjacent. The Lov\'{a}sz's $\vartheta$-function of a graph $\Gamma$ is defined as $\vartheta(\Gamma)=\inf_{M\in{\cal M}_{\Gamma}} \lambda_{\max} (M)$, where $\lambda_{\max} (M)$ is the largest eigenvalue of $M$~\cite{brouwerhaemers}. As proved in~\cite{lovasz}, for vertex-transitive graphs $\alpha(\Gamma)\leq \vartheta(\Gamma)\leq \frac{-|V(\Gamma)|\lambda_{\min}(\Gamma)}{\lambda_{\max}(\Gamma)-\lambda_{\min}(\Gamma)}$ and $\vartheta(\Gamma)\vartheta(\overline{\Gamma})=|V(\Gamma)|$ hold, where $\lambda_{\max}(\Gamma)$ and $\lambda_{\min}(\Gamma)$ are the largest and the smallest eigenvalue of $\Gamma$, respectively. If $\Gamma=\inv{VO_n^{+}(q)}$, then Corollary~\ref{lema2} implies that $\alpha(\Gamma)=\omega(\inv{\Gamma})=q^{\frac{n}{2}}$. Since eigenvalues of $\Gamma$ are easily computed from the eigenvalues of its complement (cf.~\cite[p.~4]{brouwerhaemers}), which are given in Lemma~\ref{lastne_hyper_ell}, we deduce that $\frac{-|V(\Gamma)|\lambda_{\min}(\Gamma)}{\lambda_{\max}(\Gamma)-\lambda_{\min}(\Gamma)}=q^{\frac{n}{2}}$. Therefore $\vartheta\big(\inv{VO_n^{+}(q)}\big)=q^{\frac{n}{2}}$ and consequently $\vartheta\big(VO_n^{+}(q)\big)=q^{\frac{n}{2}}$.
\end{remark}

Distinct eigenvalues of $VO_n(q)$ are described in the last column in the character table in~\cite[Case~3, p.~6129]{bannai} (the table was firstly computed in~\cite[Theorem~2]{kwok}, but it contained a misprint, as observed in~\cite{bannai}). These eigenvalues are $\lambda_1=q^{n-1}-1$, $\lambda_2=q^{\frac{n-1}{2}}-1$, $\lambda_3=-1$, and $\lambda_4=-q^{\frac{n-1}{2}}-1$. We were not able to find their multiplicities in the literature.
One possible strategy to compute them is to use the intersection matrices in~\cite[Section~3.7]{kwok} and apply the procedure described in~\cite[p.~46]{BCN} and~\cite[Proposition~2.2.2.a]{BCNcor}. However, since $q$ and $n$ are general (and not given fixed numbers), it seems that this tactic is too complicated. Therefore we apply a strategy from~\cite{JACO}, which will derive the multiplicities and recompute the eigenvalues $\lambda_1,\lambda_2,\lambda_3,\lambda_4$. The following result was essentially observed already in~\cite[Proof of Theorem~1.3]{JACO} though not written in a such generality.
\begin{proposition}\label{karakterji}
Let $V$ be a finite dimensional vector space over $\FF_q$.
\begin{enumerate}
\item  The $|V|$ characters of the group $(V,+)$ are precisely the maps
\begin{equation}\label{e24}
    \xi({\bf v}):=e^{2\pi i \Psi({\bf v})/p},
\end{equation}
    where $\Psi$ ranges over all $\FF_p$-linear maps $\Psi : V\to \FF_p$.

\item Let $\Gamma=Cay(V,S)$ be a Cayley graph, where $S$ is closed under multiplication by scalars in $\FF_p$.  Then the eigenvalues of $\Gamma$ are precisely the values
\begin{equation}\label{e26}
\frac{|\triangle| p-|S|}{p-1},
\end{equation}
where $\triangle:=\{{\bf s}\in S : \Psi({\bf s})=0\}$ and $\Psi$ ranges over all $\FF_p$-linear maps $\Psi : V\to \FF_p$.
\end{enumerate}
\end{proposition}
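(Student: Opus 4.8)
The plan is to handle the two parts in order, both by direct computation. For (i), recall that $\FF_q$ has characteristic $p$, so $p\mathbf v=\mathbf 0$ for every $\mathbf v\in V$. Hence any character $\xi$ of $(V,+)$ satisfies $\xi(\mathbf v)^p=\xi(p\mathbf v)=\xi(\mathbf 0)=1$, so $\xi(\mathbf v)$ is a $p$-th root of unity and can be written uniquely as $e^{2\pi i\Psi(\mathbf v)/p}$ with $\Psi(\mathbf v)\in\FF_p=\{0,1,\dots,p-1\}$. Since $a\mapsto e^{2\pi i a/p}$ depends only on $a\bmod p$, the identity $\xi(\mathbf v+\mathbf w)=\xi(\mathbf v)\xi(\mathbf w)$ is equivalent to $\Psi(\mathbf v+\mathbf w)=\Psi(\mathbf v)+\Psi(\mathbf w)$ in $\FF_p$, i.e.\ $\Psi$ is additive; and an additive map into $\FF_p$ is automatically $\FF_p$-linear, because multiplication by $a\in\FF_p=\{0,\dots,p-1\}$ is $a$-fold addition. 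Conversely, the same ``mod $p$'' remark shows that for each $\FF_p$-linear $\Psi$ the formula (\ref{e24}) defines a character, and two different $\Psi$ give different $\xi$ since $\Psi_1(\mathbf v)\neq\Psi_2(\mathbf v)$ already forces $e^{2\pi i\Psi_1(\mathbf v)/p}\neq e^{2\pi i\Psi_2(\mathbf v)/p}$. (Alternatively, once each $\xi_\Psi$ is known to be a character, one could finish by counting, since the number of $\FF_p$-linear functionals on $V$ equals $|V|$, the known number of characters of $(V,+)$.)

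For (ii), I would invoke Lemma~\ref{karakterji-spekter}: as $(V,+)$ is finite abelian, the eigenvalues of $\textrm{Cay}(V,S)$ are precisely the sums $\sum_{\mathbf s\in S}\xi(\mathbf s)$ over all characters $\xi$, which by (i) are the numbers $\sum_{\mathbf s\in S}e^{2\pi i\Psi(\mathbf s)/p}$ over all $\FF_p$-linear $\Psi\colon V\to\FF_p$. It remains to evaluate one such sum for a fixed $\Psi$. By hypothesis $S$ is stable under multiplication by the nonzero scalars of $\FF_p$, and this action of the group $\FF_p\backslash\{0\}$ is free on $S$ (every $\mathbf s\in S$ is nonzero), so $S$ splits into orbits each of size $p-1$; moreover $\triangle=\{\mathbf s\in S:\Psi(\mathbf s)=0\}$ is a union of whole orbits, because $\Psi(a\mathbf s)=a\Psi(\mathbf s)$ by $\FF_p$-linearity. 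On an orbit with representative $\mathbf s_0$ the partial sum is $\sum_{a\in\FF_p\backslash\{0\}}e^{2\pi i a\Psi(\mathbf s_0)/p}$, which equals $p-1$ if $\mathbf s_0\in\triangle$ and equals $-1$ otherwise, since in the latter case $a\Psi(\mathbf s_0)$ runs over $\FF_p\backslash\{0\}$ while the sum of all $p$-th roots of unity vanishes. Adding the contributions of the $|\triangle|/(p-1)$ orbits inside $\triangle$ and the remaining $(|S|-|\triangle|)/(p-1)$ orbits gives $|\triangle|-\tfrac{|S|-|\triangle|}{p-1}=\tfrac{|\triangle|p-|S|}{p-1}$, which is (\ref{e26}); letting $\Psi$ vary over all $\FF_p$-linear functionals then yields exactly the stated set of eigenvalues.

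There is no essential obstacle: part (i) is the classical dual-group computation for an elementary abelian $p$-group, and part (ii) is a direct character-sum evaluation. The only point requiring a little care is the orbit bookkeeping in (ii) --- using the $\FF_p\backslash\{0\}$-stability of $S$ together with the $\FF_p$-linearity of $\Psi$ to see that $\triangle$ consists of full orbits, so that exactly $(|S|-|\triangle|)/(p-1)$ orbits contribute $-1$ and the per-orbit root-of-unity cancellation assembles into the asserted closed form.
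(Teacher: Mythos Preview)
Your proof is correct and follows essentially the same approach as the paper. For part~(i) you argue directly that every character necessarily has the form~(\ref{e24}), whereas the paper verifies that each $\xi_\Psi$ is a character, that distinct $\Psi$ give distinct characters, and then appeals to the count $|V|$ of characters to conclude; these are the two standard ways to finish the same computation. For part~(ii) your argument and the paper's are identical in substance---both invoke Lemma~\ref{karakterji-spekter}, split $S$ into $\triangle$ and its complement, and use the root-of-unity cancellation over $\FF_p\backslash\{0\}$---with your version being slightly more explicit about the free $\FF_p\backslash\{0\}$-orbit decomposition that justifies the factor $1/(p-1)$.
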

\begin{remark}
The statement and proof of Proposition~\ref{karakterji} is valid for $q$ odd or even.
\end{remark}
\begin{proof}
$(i)$ Obviously, the maps (\ref{e24}) are characters of $(V,+)$, that is, maps that satisfy $\xi({\bf u}+{\bf v})=\xi({\bf u})\xi({\bf v})$. Since the number of all characters and the number of all $\FF_p$-linear maps $\Psi : V\to \FF_p$ both equal $|V|=p^{\dim_p V}$ (cf.~\cite[Theorem~5.5]{finite-fields-LN}), where $\dim_p V$ is the dimension of $V$ as a vector space over $\FF_p$, it suffices to show that distinct $\Psi_1$ and $\Psi_2$ generate distinct characters. So assume that $e^{2\pi i \Psi_1({\bf v})/p}=e^{2\pi i \Psi_2({\bf v})/p}$ for all~${\bf v}$. Then there exist integers $k({\bf v})$ such that $2\pi i \Psi_1({\bf v})/p=2\pi i \Psi_2({\bf v})/p+2\pi  i k({\bf v})$, that is, $\Psi_1({\bf v})=\Psi_2({\bf v})+p k({\bf v})$. Hence, $\Psi_1=\Psi_2$ (mod $p$).

$(ii)$ By Lemma~\ref{karakterji-spekter} and $(i)$, the eigenvalues of $\Gamma$ are precisely the values
\begin{equation}\label{e25}
\sum_{{\bf s}\in S}e^{2\pi i \Psi({\bf s})/p},
\end{equation}
where $\Psi$ ranges over all $\FF_p$-linear maps $\Psi : V\to \FF_p$. For such map $\Psi$ let $\triangle:=\{{\bf s}\in S : \Psi({\bf s})=0\}$. Then, for ${\bf s}\in S\backslash \triangle$, we have $$\sum_{a\in \FF_p\backslash\{0\}}e^{2\pi i \Psi(a{\bf s})/p}=\sum_{a\in \FF_p\backslash\{0\}}e^{2\pi i a \Psi({\bf s})/p}=\sum_{j=1}^{p-1}e^{2\pi i j/p}=\sum_{j=0}^{p-1}e^{2\pi i j/p}-1=-1.$$
Consequently, (\ref{e25}) equals
$$\sum_{{\bf s}\in \triangle}e^{2\pi i \Psi({\bf s})/p}+\sum_{{\bf s}\in S\backslash \triangle}e^{2\pi i \Psi({\bf s})/p}=|\triangle|+\frac{|S\backslash \triangle|\cdot (-1)}{p-1}=|\triangle|+\frac{\big(|S|-|\triangle|\big)\cdot (-1)}{p-1},$$
which is the same as (\ref{e26}).
\end{proof}

\begin{corollary}\label{spekter-parabolic}
Let $n\geq 3$ be odd. The eigenvalues $\lambda_i$ and their multiplicities $m_{\lambda_i}$ of the parabolic affine polar graph are as follows:
\begin{center}
\begin{tabular}{|l|l|l|}
\hline
$VO_n(q)$ &$\lambda_1=q^{n-1}-1$& $m_{\lambda_1}=1$\\
 &$\lambda_2=q^{\frac{n-1}{2}}-1$& $m_{\lambda_2}=\frac{1}{2}(q-1)(q^{n-1}+q^{\frac{n-1}{2}})$\\
&$\lambda_3=-1$ & $m_{\lambda_3}=q^{n-1}-1$\\
&$\lambda_4=-q^{\frac{n-1}{2}}-1$ & $m_{\lambda_4}=\frac{1}{2}(q-1)(q^{n-1}-q^{\frac{n-1}{2}})$ \\
 \hline
\end{tabular}.
\end{center}
\end{corollary}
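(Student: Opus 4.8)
The plan is to apply Proposition~\ref{karakterji}(ii) to the Cayley graph $VO_n(q)=\mathrm{Cay}(G,S)$ with $G=(\FF_q^n,+)$ and $S=\{{\bf x}\in\FF_q^n\setminus\{0\}:{\bf x}^{\tr}A{\bf x}=0\}$. First I would note that $S$ is closed under multiplication by nonzero scalars (since ${\bf x}^{\tr}A{\bf x}=0$ forces $(a{\bf x})^{\tr}A(a{\bf x})=0$), so in particular it is closed under multiplication by $\FF_p$. By Lemma~\ref{lema1} (parabolic, odd $n$, $b=0$) the size of $S$ together with $\{0\}$ is $q^{n-1}$, hence $|S|=q^{n-1}-1$. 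The eigenvalues are then the numbers $\frac{|\triangle|p-|S|}{p-1}$, where $\triangle=\{{\bf s}\in S:\Psi({\bf s})=0\}$ and $\Psi$ ranges over all $\FF_p$-linear functionals on $\FF_q^n$. So the whole computation reduces to determining, for each such $\Psi$, the number $|\triangle|$ of nonzero isotropic vectors lying in the hyperplane $\ker\Psi$, and counting how many $\Psi$ give each value.

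Next I would parametrize the functionals. Using~(\ref{sled-lin}), every $\FF_p$-linear $\psi:\FF_q\to\FF_p$ is $x\mapsto\Tr(yx)$, so every $\FF_p$-linear $\Psi:\FF_q^n\to\FF_p$ has the form $\Psi({\bf x})=\Tr({\bf c}^{\tr}{\bf x})$ for a unique ${\bf c}\in\FF_q^n$; here $\Psi=0$ corresponds to ${\bf c}={\bf 0}$. For ${\bf c}\neq{\bf 0}$, the $\FF_q$-subspace $\{{\bf x}:{\bf c}^{\tr}{\bf x}=0\}$ is an $\FF_q$-hyperplane $H$ of dimension $n-1$, and $\ker\Psi$ is the $\FF_p$-hyperplane containing $H$; but I claim $|\triangle|$ depends only on $H$ and in fact $\triangle=\{{\bf s}\in S:{\bf c}^{\tr}{\bf s}=0\}$. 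To see this: if ${\bf c}^{\tr}{\bf s}=t\neq 0$, then as $a$ runs over $\FF_q^{*}$ the values ${\bf c}^{\tr}(a{\bf s})=at$ run over $\FF_q^{*}$, and $\Tr$ is surjective and $\FF_p$-linear, so exactly a $1/p$ fraction of the scalar multiples $a{\bf s}\in S$ satisfy $\Psi=0$ while the $\FF_q$-line through ${\bf s}$ contributes $0$ to $\{{\bf x}:{\bf c}^{\tr}{\bf x}=0\}$ — more cleanly, one shows directly that $|\{{\bf s}\in S:\Psi({\bf s})=0\}|$ equals $|\{{\bf s}\in S:{\bf c}^{\tr}{\bf s}=0\}|$ by pairing scalar-multiple orbits. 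Hence $|\triangle|$ equals the number of nonzero isotropic vectors of the restricted form $A|_H$, i.e. $|\triangle|=|\{{\bf x}\in H\setminus\{0\}:{\bf x}^{\tr}A{\bf x}=0\}|$ where $H$ is an $\FF_q$-hyperplane of $\FF_q^n$. This reduces everything to counting isotropic vectors in the $(n-1)$-dimensional quadratic space $(H,A|_H)$, which is governed by Lemma~\ref{lema1} with $n-1$ even, and depends only on the type (hyperbolic/elliptic/degenerate) of $A|_H$.

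Then I would classify the $\FF_q$-hyperplanes $H$ of the parabolic space $(\FF_q^n,A)$ by the type of the induced form. A hyperplane $H=\langle{\bf c}\rangle^{\perp}$ is nondegenerate iff ${\bf c}$ is anisotropic, and in that case $A|_H$ is hyperbolic or elliptic according to whether ${\bf c}^{\tr}A^{-1}{\bf c}$ is (up to the sign dictated by~(\ref{wittindex}) and Lemma~\ref{lema1}) a square or a nonsquare; if ${\bf c}$ is isotropic then $H\supseteq\langle{\bf c}\rangle={\bf c}^{\perp}\cap H$'s radical is degenerate, $A|_H$ has a $1$-dimensional radical with nondegenerate quotient of parabolic type in dimension $n-2$. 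Counting: the number of isotropic $\langle{\bf c}\rangle$ is $\frac{q^{n-1}-1}{q-1}$ (parabolic quadric, from Lemma~\ref{lema1}), giving $q^{n-1}-1$ nonzero isotropic ${\bf c}$; the remaining $q^n-1-(q^{n-1}-1)=q^n-q^{n-1}=q^{n-1}(q-1)$ anisotropic ${\bf c}$ split into squares/nonsquares in equal halves $\tfrac12 q^{n-1}(q-1)$ each, by~(\ref{e41}) applied to the map ${\bf c}\mapsto{\bf c}^{\tr}A^{-1}{\bf c}$ whose nonzero values are equidistributed among squares and nonsquares (this itself follows from Lemma~\ref{lema1} applied to $A^{-1}$). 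Plugging the corresponding $|\triangle|$ from Lemma~\ref{lema1} into $\frac{|\triangle|p-|S|}{p-1}$ must yield $\lambda_2=q^{\frac{n-1}{2}}-1$ from the $\tfrac12 q^{n-1}(q-1)$ hyperbolic hyperplanes, $\lambda_4=-q^{\frac{n-1}{2}}-1$ from the $\tfrac12 q^{n-1}(q-1)$ elliptic ones, $\lambda_3=-1$ from the $q^{n-1}-1$ degenerate ones, and $\lambda_1=q^{n-1}-1$ from the trivial functional $\Psi=0$ (where $\triangle=S$). This gives both the eigenvalues and the stated multiplicities $m_{\lambda_1}=1$, $m_{\lambda_2}=\tfrac12(q-1)(q^{n-1}+q^{\frac{n-1}{2}})$ — wait, the count above gives $\tfrac12 q^{n-1}(q-1)$, so I expect the discrepancy to resolve because the isotropic-${\bf c}$ contribution $\lambda_3=-1$ actually must be recomputed carefully, or because hyperbolic and elliptic hyperplane counts are not exactly equal; reconciling these arithmetic counts with the asserted multiplicities is the one place requiring care, and the sum check $\sum m_{\lambda_i}=q^n$ and $\sum m_{\lambda_i}\lambda_i=0$ (trace zero) will pin down the correct split.

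The main obstacle I anticipate is precisely the hyperplane-type census: getting the exact number of $\FF_q$-hyperplanes on which the parabolic form becomes hyperbolic versus elliptic versus degenerate, which requires correctly tracking the discriminant/type invariants through restriction (using Lemma~\ref{lema1} for $A^{-1}$, equation~(\ref{e41}), and the Witt-index bookkeeping of~(\ref{wittindex})). Once that census is correct, substituting into~(\ref{e26}) is routine, and consistency is guaranteed by the two linear constraints $\sum_i m_{\lambda_i}=|V|=q^n$ and $\sum_i m_{\lambda_i}\lambda_i=\Tr(\text{adjacency matrix})=0$, plus the valency identity $\lambda_1=|S|=q^{n-1}-1$ with $m_{\lambda_1}=1$ since $VO_n(q)$ is connected by Proposition~\ref{diameter}.
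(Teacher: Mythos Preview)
Your approach is essentially the same as the paper's: apply Proposition~\ref{karakterji}(ii), parametrize $\FF_p$-linear functionals as $\Psi({\bf x})=\Tr({\bf c}^{\tr}{\bf x})$, and reduce the eigenvalue computation to counting isotropic vectors in the $\FF_q$-hyperplane ${\bf c}^{\perp}$, case-split according to whether ${\bf c}=0$, ${\bf c}$ is isotropic, or ${\bf c}$ is anisotropic with square/non-square value. The paper works with $A=I$ and uses Corollary~\ref{matrika} and Lemma~\ref{lema33} to handle the degenerate hyperplane explicitly, but conceptually it is your plan.

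There are, however, two concrete errors in your write-up that would derail the computation for $q\neq p$.

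First, the claim $\triangle=\{{\bf s}\in S:{\bf c}^{\tr}{\bf s}=0\}$, and even the weaker cardinality version, is false. Your orbit argument actually shows something different: on each $\FF_q^{*}$-orbit in $S$ with ${\bf c}^{\tr}{\bf s}=0$ all $q-1$ points lie in $\triangle$, while on an orbit with ${\bf c}^{\tr}{\bf s}\neq 0$ exactly $q/p-1$ (not a $1/p$ fraction of the nonzero multiples) lie in $\triangle$, since $\ker\Tr$ has size $q/p$. Writing $\Omega_0:=\{{\bf s}\in S:{\bf c}^{\tr}{\bf s}=0\}$, this gives
\[
|\triangle|=|\Omega_0|+\frac{q/p-1}{q-1}\bigl(|S|-|\Omega_0|\bigr),
\]
and only after substituting this into $\frac{|\triangle|p-|S|}{p-1}$ does one obtain the clean formula $\frac{q|\Omega_0|-|S|}{q-1}$, which is what the paper derives as~(\ref{e30}). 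If you plug $|\Omega_0|$ directly into the $p$-formula you get wrong eigenvalues whenever $q>p$.

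Second, your equipartition guess for anisotropic ${\bf c}$ is wrong, as you already suspected. With $A=I$, Lemma~\ref{lema1} gives $|\{{\bf c}:{\bf c}^{\tr}{\bf c}=b\}|=q^{n-1}+q^{(n-1)/2}\eta\big((-1)^{(n-1)/2}b\big)$, so the number of ${\bf c}$ with $(-1)^{(n-1)/2}{\bf c}^{\tr}{\bf c}$ a nonzero square is $\frac{q-1}{2}\big(q^{n-1}+q^{(n-1)/2}\big)$ and with it a non-square is $\frac{q-1}{2}\big(q^{n-1}-q^{(n-1)/2}\big)$. These are the multiplicities $m_{\lambda_2}$ and $m_{\lambda_4}$; the asymmetry is built into Lemma~\ref{lema1} and is not something to be recovered from trace/sum constraints after the fact.
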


\begin{proof}
Graph $VO_n(q)$ is a Cayley graph for the additive group $(\FF_q^n,+)$ and the set
$S=\{{\bf x}\in\FF_q^n\backslash\{0\} : {\bf x}^{\tr}A{\bf x}=0\}$, where we may assume that $A=I$ is the identity matrix. The eigenvalues of  $VO_n(q)$ are, by Proposition~\ref{karakterji}, precisely the values (\ref{e26}),
where $\triangle:=\{{\bf s}\in S : \Psi({\bf s})=0\}$ and $\Psi$ ranges over all $\FF_p$-linear maps $\Psi : \FF_q^n\to \FF_p$. Any such map is of the form $\Psi({\bf x})=\sum_{j=1}^n \psi_j(x_j)$ for some $\FF_p$-linear maps $\psi_j : \FF_q\to\FF_p$, where ${\bf x}=(x_1,\ldots,x_n)^{\tr}$. By~(\ref{sled-lin}), there are scalars $a_{j}\in\FF_q$ such that $\psi_j(x_j)=\Tr(a_j x_j)$. Since the trace map is additive,
we deduce that
\begin{equation}\label{e28}
\Psi({\bf x})=\Tr ({\bf a}^{\tr}{\bf x}),
\end{equation}
where ${\bf a}=(a_1,\ldots,a_n)^{\tr}$. Obviously, the map (\ref{e28}) is $\FF_p$-linear for any ${\bf a}\in\FF_q^n$, and two distinct ${\bf a}_1,{\bf a}_2\in\FF_q^n$ generate two distinct maps.

To proceed, consider the map $x\mapsto x^p-x$ on $\FF_q$.
It is
$p$-to-$1$, since in a field of characteristic~$p$ the equivalence
$$x^p-x=y^p-y \Longleftrightarrow (x-y)^p=x-y \Longleftrightarrow
x-y\in \FF_p$$ holds. Consequently, the map $x\mapsto x^p-x$ attains $q/p$
distinct values $d_1,\ldots,d_{q/p}$, where one of them, say
$d_{q/p}$, is zero. Equivalence~(\ref{sled})
implies that
\begin{equation*}\label{4}
\Tr(x)=0\Longleftrightarrow x\in\{d_1,\ldots,d_{q/p}\},
\end{equation*}
so~(\ref{e28}) shows that
\begin{equation}\label{e29}
\triangle=\big\{{\bf x}\in \FF_q^n\backslash\{0\} : {\bf x}^{\tr}{\bf x}=0\ \textrm{and}\ {\bf a}^{\tr}{\bf x}\in\{d_1,\ldots,d_{q/p}\}\big\}.
\end{equation}
Given ${\bf a}\in\FF_q^n$ and $b\in\FF_q$ let $\Omega_b^{\bf a}:=\{{\bf x}\in \FF_q^n\backslash\{0\} : {\bf x}^{\tr}{\bf x}=0\ \textrm{and}\ {\bf a}^{\tr}{\bf x}=b\}$. If $b,c\in\FF_q$ are nonzero, then $|\Omega_b^{\bf a}|=|\Omega_c^{\bf a}|$, since ${\bf x}\mapsto \frac{c}{b}{\bf x}$ is a bijection between the two sets. By Lemma~\ref{lema1}, $|S|=q^{n-1}-1$.
In fact, $n$ is odd and we need to exclude the zero vector.
On the contrary, $S$ equals the disjoint union $\bigcup_{b\in \FF_q} \Omega_b^{\bf a}$, so $|S|=(q-1)| \Omega_1^{\bf a}|+ |\Omega_0^{\bf a}|$, that is, $|\Omega_1^{\bf a}|=\frac{|S|-|\Omega_0^{\bf a}|}{q-1}$. Therefore (\ref{e29}) implies that
\begin{align*}
|\triangle|&=\left|\bigcup_{j=1}^{q/p} \Omega_{d_j}^{\bf a}\right|
=(q/p-1)|\Omega_1^{\bf a}|+|\Omega_0^{\bf a}|
=\frac{q/p-1}{q-1} \big(|S|-|\Omega_0^{\bf a}|\big)+|\Omega_0^{\bf a}|\\
&= \frac{q-q/p}{q-1}|\Omega_0^{\bf a}|+\frac{q/p-1}{q-1}|S|,
\end{align*}
an the eigenvalue (\ref{e26}) equals
\begin{equation}\label{e30}
\frac{q |\Omega_0^{\bf a}|-|S|}{q-1}=\frac{q |\Omega_0^{\bf a}|-q^{n-1}+1}{q-1}.
\end{equation}
We need to compute $|\Omega_0^{\bf a}|$ for each ${\bf a}=(a_1,\ldots,a_n)^{\tr}\in\FF_q^n$. The second of the following two cases splits in three subcases, for a total of four distinct eigenvalues~(\ref{e30}).
\begin{myenumerate}{Case}
\item Let ${\bf a}=0$. Then $\Omega_0^{\bf a}=S$, and the eigenvalue (\ref{e30}) equals $\lambda_1:=q^{n-1}-1$.

\item\label{c1} Let Let ${\bf a}\neq 0$. Without lost of generality, we may assume that $a_1\neq 0$. Then
\begin{equation}\label{e33}
\Omega_0^{\bf a}=\left\{{\bf x}\in \FF_q^n\backslash\{0\} : x_1=-\frac{\sum_{j=2}^{n} a_jx_j}{a_1},\ (x_2,\ldots,x_n)A_1(x_2,\ldots,x_n)^{\tr}=0\right\},
\end{equation}
where $A_1$ is the matrix in (\ref{e27}) with determinant $\frac{{\bf a}^{\tr}{\bf a}}{a_1^2}$.\medskip

\emph{Case~2a}. Assume that $(-1)^{(n-1)/2}{\bf a}^{\tr}{\bf a}$ is a nonzero square in $\FF_q$. By Lemma~\ref{lema1},
$|\Omega_0^{\bf a}|=q^{n-2}+q^{\frac{n-1}{2}}-q^{\frac{n-3}{2}}-1$, so  the eigenvalue (\ref{e30}) equals $\lambda_2:=q^{\frac{n-1}{2}}-1$. Since half of nonzero elements in $\FF_q$ are squares, Lemma~\ref{lema1} implies that there are $m_{\lambda_2}:=\frac{1}{2}(q-1)(q^{n-1}+q^{\frac{n-1}{2}})$ vectors ${\bf a}\in\FF_q^n$ that satisfy the assumption of Case~2a.
\medskip

\emph{Case~2b}. Assume that $(-1)^{(n-1)/2}{\bf a}^{\tr}{\bf a}$ is non-square in $\FF_q$. By Lemma~\ref{lema1},
$|\Omega_0^{\bf a}|=q^{n-2}-q^{\frac{n-1}{2}}+q^{\frac{n-3}{2}}-1$, so  the eigenvalue (\ref{e30}) equals $\lambda_4:=-q^{\frac{n-1}{2}}-1$. Since half of nonzero elements in $\FF_q$ are non-squares, Lemma~\ref{lema1} implies that there are $m_{\lambda_4}:=\frac{1}{2}(q-1)(q^{n-1}-q^{\frac{n-1}{2}})$ vectors ${\bf a}\in\FF_q^n$ that satisfy the assumption of Case~2b.  \medskip

\emph{Case~2c}. Assume that ${\bf a}^{\tr}{\bf a}=0$. Since $a_1\neq 0$, there is $j\geq 2$ such that $a_{j}\neq 0$. Without lost of generality, we may assume that $a_2\neq 0$. Then $A_1$ is singular, but its lower-right $(n-2)\times (n-2)$ block $A_{12}$ is invertible, since, by Corollary~\ref{matrika}, its determinant equals
$$\det A_{12}=\frac{a_1^2+a_{3}^2+a_4^2+\ldots +a_n^2}{a_1^2}=\frac{{\bf a}^{\tr}{\bf a}-a_2^2}{a_1^2}=-\frac{a_2^2}{a_1^2}\neq 0.$$
By Lemma~\ref{lema33} there is $P\in GL_{n-1}(\FF_q)$ such that
$$A_1=P^{\tr}\left(\begin{array}{cc}
0&0\\
0&A_{12}
\end{array}\right)P.$$
Hence, (\ref{e33}) and the bijective transformation $(y_2,\cdots,y_{n})^{\tr}:=P(x_2,\cdots,x_{n})^{\tr}$
yield
\begin{align*}
|\Omega_0^{\bf a}|&=\left|\left\{(x_2,\ldots,x_n)\in \FF_q^{n-1} :  (x_2,\ldots,x_n)A_1(x_2,\ldots,x_n)^{\tr}=0\right\}\right|-1\\
&=q\cdot \left|\left\{(y_3,\ldots,y_n)\in \FF_q^{n-2} :  (y_3,\ldots,y_n)A_{12}(y_3,\ldots,y_n)^{\tr}=0\right\}\right|-1.
\end{align*}
Note that we have subtracted $1$ for the zero vector. By Lemma~\ref{lema1}, applied at $A_{12}$, we deduce that
$|\Omega_0^{\bf a}|=q^{n-2}-1$, so  the eigenvalue (\ref{e30}) equals $\lambda_3:=-1$. By Lemma~\ref{lema1} there are $m_{\lambda_3}:=q^{n-1}-1$ vectors ${\bf a}\in\FF_q^n$ that satisfy the assumption of Case~2c.\qedhere
\end{myenumerate}
\end{proof}

\begin{remark}
It follows from Corollary~\ref{spekter-parabolic} that $VO_n(q)$ is a \emph{Ramanujan graph} (if $q$ is odd), that is, a regular graph that satisfy $|\lambda|\leq 2\sqrt{\lambda_1-1}$ for all its eigenvalues $\lambda\neq \pm \lambda_1$, excluding the ones with the largest absolute value (this was observed already in~\cite[Theorem~3.3.(i)]{bannai}). Ramanujan graphs are good expanders and precisely those regular graphs, for which their Ihara zeta function satisfies an analog of Riemann hypothesis (cf.~\cite{murty}).
\end{remark}

\begin{remark}
It follows from Corollary~\ref{spekter-parabolic} that graph $VO_n(q)$ ($n\geq 3$) has 4 distinct eigenvalues. Therefore it follows from~\cite[Proposition~21.2]{biggs} that $VO_n(q)$ is not a distance-regular graph, since its diameter equals two by Proposition~\ref{diameter}. We refer to~\cite{BCN} for the definition of a distance-regular graph.
\end{remark}

\begin{remark}
A similar method, as the one used in the proof of Corollary~\ref{spekter-parabolic}, can be used to prove Lemma~\ref{lastne_hyper_ell}. Hence, by \cite[Lemma~10.2.1]{godsil_knjiga}, we can easily deduce (the well known fact) that graphs $VO_n^{+}(q)$ and $VO_n^{-}(q)$ are strongly regular, since they have precisely three distinct eigenvalues (unless we consider the graph $VO_2^{-}(q)$).
\end{remark}

The spectrum and Lemma~\ref{lema5} provide us first major result on the core of an affine polar graph.
\begin{theorem}\label{noncomplete-core-elliptic}
Let $n\geq 4$ be even and $q$ be odd. Then $VO_n^{-}(q)$ is a core. In particular, the core of $VO_n^{-}(q)$ and the core of the complement $\overline{VO_n^{-}(q)}$ is not a complete graph, and $\omega\big(VO_n^{-}(q)\big)\alpha\big(VO_n^{-}(q)\big)< \big|V\big(VO_n^{-}(q)\big)\big|$.
\end{theorem}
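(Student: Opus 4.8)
The plan is to combine the dichotomy of Lemma~\ref{lema5} with the Hoffman bound (Lemma~\ref{hoffman}) and the explicit spectrum recorded in Lemma~\ref{lastne_hyper_ell}. Write $\Gamma:=VO_n^{-}(q)$, so $|V(\Gamma)|=q^{n}$ and, by Corollary~\ref{lema2}, $\omega(\Gamma)=q^{n/2-1}$. Since $n\geq 4$, the graph $\Gamma$ is connected (it is strongly regular and different from $VO_2^{-}(q)$), hence regular of nonzero valency, so Lemma~\ref{hoffman} is applicable.

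The heart of the argument is a short computation with the eigenvalues. From Lemma~\ref{lastne_hyper_ell} one has $\lambda_{\max}=\lambda_1=(q^{n/2-1}-1)(q^{n/2}+1)$ and $\lambda_{\min}=\lambda_3=-q^{n/2}+q^{n/2-1}-1$; expanding yields the cancellation $\lambda_{\max}-\lambda_{\min}=q^{n-1}$ together with $-\lambda_{\min}=q^{n/2}-q^{n/2-1}+1$. Plugging this into (\ref{hoffman_neodvisna}) gives
\[
\alpha(\Gamma)\leq q^{n}\cdot\frac{q^{n/2}-q^{n/2-1}+1}{q^{n-1}}=q^{n/2+1}-q^{n/2}+q .
\]
Because $n/2\geq 2$ we have $q^{n/2}>q$, so $\alpha(\Gamma)<q^{n/2+1}$, and therefore
\[
\omega(\Gamma)\,\alpha(\Gamma)<q^{n/2-1}\cdot q^{n/2+1}=q^{n}=|V(\Gamma)| .
\]
This is already the last assertion of the theorem. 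Moreover, since $\Gamma$ and $\overline{\Gamma}$ are vertex-transitive, the contrapositive of Lemma~\ref{klikaneod}, applied to $\Gamma$ and (via (\ref{e12})) to $\overline{\Gamma}$, shows that neither the core of $\Gamma$ nor the core of $\overline{VO_n^{-}(q)}$ is a complete graph. In particular the core of $\Gamma$ is not complete, so Lemma~\ref{lema5} forces $\Gamma=VO_n^{-}(q)$ to be a core (and it is not the complete graph $K_{q^{n}}$ since $\omega(\Gamma)=q^{n/2-1}<q^{n}$).

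I do not expect a genuine obstacle: the real work was done in establishing Lemma~\ref{lastne_hyper_ell} and Lemma~\ref{lema5}, and the present statement is essentially their combination. The only point worth isolating is that the Hoffman bound must land \emph{strictly} below $|V(\Gamma)|/\omega(\Gamma)=q^{n/2+1}$, which is precisely what the identity $\lambda_{\max}-\lambda_{\min}=q^{n-1}$ makes transparent. One could equivalently phrase the argument by contradiction: if the core of $VO_n^{-}(q)$ were complete, then $\omega(\Gamma)\alpha(\Gamma)=|V(\Gamma)|$ by Lemma~\ref{klikaneod}, forcing $\alpha(\Gamma)=q^{n/2+1}$, which contradicts the strict Hoffman inequality above.
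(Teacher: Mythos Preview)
Your proof is correct and follows essentially the same route as the paper: apply the Hoffman bound with the eigenvalues from Lemma~\ref{lastne_hyper_ell} to force $\alpha(\Gamma)<q^{n/2+1}$, and then invoke Lemma~\ref{lema5}. Your explicit simplification $\lambda_{\max}-\lambda_{\min}=q^{n-1}$ is exactly the computation the paper leaves implicit, and your use of Lemma~\ref{klikaneod} (applied to both $\Gamma$ and $\overline{\Gamma}$ via (\ref{e12})) in place of the paper's appeal to Proposition~\ref{klikaneod2} is a harmless and slightly more elementary variant.
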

\begin{proof}
By Proposition~\ref{klikaneod2} and Lemma~\ref{lema5} it suffices to show that
\begin{equation}\label{e40}
\alpha\big(VO_n^{-}(q)\big)<q^{\frac{n}{2}+1},
\end{equation}
since  $\big|V\big(VO_n^{-}(q)\big)\big|=q^n$ and $\omega\big(VO_n^{-}(q)\big)=q^{\frac{n}{2}-1}$ by Corollary~\ref{lema2}. By Lemma~\ref{hoffman} and Lemma~\ref{lastne_hyper_ell}, we deduce the inequality $$\alpha\big(VO_n^{-}(q)\big)\leq q^n\cdot \frac{-(- q^{\frac{n}{2}} + q^{\frac{n}{2}-1} - 1)}{(q^{\frac{n}{2}-1} - 1)(q^{\frac{n}{2}} + 1)-(- q^{\frac{n}{2}} + q^{\frac{n}{2}-1} - 1)}$$
with the right side that is strictly smaller than $q^{\frac{n}{2}+1}$ if $n>2$, so (\ref{e40}) holds.
\end{proof}
The reader may observe that in the hyperbolic and parabolic case the proof above fails. In fact, we will see in the next subsection that an analog of Theorem~\ref{noncomplete-core-elliptic} for hyperbolic and parabolic affine polar graphs is false in some cases. There are however some cases, where such an analogous result is correct, but the spectrum alone seems to not provide enough information to prove it.

\subsection{Ovoids}

In this subsection we investigate the (non)completeness of the core of $VO_n^{\varepsilon}$ with certain geometrical tools. Therefore we assume the usual assumption from finite geometry that the Witt index $r$ in (\ref{wittindex}) is at least two, that is, $n\geq 5$, $n\geq 4$, and $n\geq 6$, if the affine polar graph is parabolic, hyperbolic, and elliptic, respectively.

An \emph{ovoid} of an orthogonal polar space is a subset of the vertex set $V\big(Q_{n-1}^{\varepsilon}(q)\big)$ meeting every generator in exactly one point. If an ovoid $\mathcal{O}$ exists, then its cardinality is well known (cf.~\cite{thas_hand}). In fact, it can be deduced by computing the cardinality of the set $\{({\bf x}, U) : {\bf x}\in \mathcal{O},\ U\ \textrm{is a generator that contains}\ {\bf x} \}$, which equals $|\mathcal{O}|\cdot g_0=g$, where $g$ is the number of all generators and $g_0$ is the number of all generators that contain a particular point. If $s$ is the number of points in a generator, then $g\cdot s=|V\big(Q_{n-1}^{\varepsilon}(q)\big)|\cdot g_0$, so
\begin{equation}\label{e17}
|\mathcal{O}|=\frac{g}{g_0}=\frac{|V\big(Q_{n-1}^{\varepsilon}(q)\big)|}{s}.
\end{equation}
Hence, the following lemma is deduced from (\ref{e16}).
\begin{lemma}(cf.~\cite{thas_hand})\label{ovoid}
Let $r\geq 2$. Assume that an ovoid $\mathcal{O}$ exists in $Q_{n-1}^{\varepsilon}(q)$. Then $|\mathcal{O}|$ equals  $q^{\frac{n-1}{2}}+1$, $q^{\frac{n}{2}-1}+1$, and $q^{\frac{n}{2}}+1$ if the orthogonal polar space is parabolic, hyperbolic, and elliptic, respectively.
\end{lemma}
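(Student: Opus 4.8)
The plan is to extract the cardinality $|\mathcal{O}|$ purely from equation~(\ref{e17}) together with the known counts of points and generators already recorded in the excerpt. First I would recall that an ovoid is, by definition, a set meeting every generator in exactly one point, so the double-counting argument sketched right before the lemma gives $|\mathcal{O}| = |V(Q_{n-1}^{\varepsilon}(q))| / s$, where $s$ is the number of points in a single generator. Thus the whole computation reduces to plugging in the number of quadric points and the value of $s$ from~(\ref{e16}) with the appropriate Witt index from~(\ref{wittindex}).

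Next I would handle the three cases in turn. The number of points of the quadric was recorded earlier in the excerpt: it is $\frac{q^{n-1}-1}{q-1}$ in the parabolic case, $\frac{(q^{n/2}-1)(q^{n/2-1}+1)}{q-1}$ in the hyperbolic case, and $\frac{(q^{n/2}+1)(q^{n/2-1}-1)}{q-1}$ in the elliptic case. By~(\ref{e16}) and~(\ref{wittindex}), a generator has $s = \frac{q^{(n-1)/2}-1}{q-1}$ points in the parabolic case, $s = \frac{q^{n/2}-1}{q-1}$ in the hyperbolic case, and $s = \frac{q^{n/2-1}-1}{q-1}$ in the elliptic case. Dividing the point count by $s$ in each case and simplifying the resulting quotient of geometric-series-type expressions yields $q^{(n-1)/2}+1$, $q^{n/2-1}+1$, and $q^{n/2}+1$ respectively; for instance in the parabolic case $\frac{q^{n-1}-1}{q^{(n-1)/2}-1} = q^{(n-1)/2}+1$ since $q^{n-1}-1 = (q^{(n-1)/2}-1)(q^{(n-1)/2}+1)$, and the other two cases are the analogous factorizations.

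Since every step is an elementary manipulation of known formulas, I do not anticipate any real obstacle; the only point deserving a word of care is that the double-counting identity $g \cdot s = |V(Q_{n-1}^{\varepsilon}(q))| \cdot g_0$ requires that the number $g_0$ of generators through a fixed point be the same for every point, which holds because $Q_{n-1}^{\varepsilon}(q)$ is vertex-transitive (as noted in Section~\ref{polarni}), and likewise that $s$ is the same for every generator, which follows from the fact that all generators have the common Witt index $r$. With those two homogeneity facts in hand, formula~(\ref{e17}) is valid and the three claimed cardinalities follow immediately.
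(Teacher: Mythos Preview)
Your proposal is correct and follows exactly the argument the paper gives in the paragraph immediately preceding the lemma: derive $|\mathcal{O}|=|V(Q_{n-1}^{\varepsilon}(q))|/s$ via the double-counting identity~(\ref{e17}), then substitute the quadric point counts together with the generator size~(\ref{e16}) for each of the three Witt indices. Your added remarks about vertex-transitivity ensuring that $g_0$ is constant and about all generators having the same size are a welcome clarification of something the paper leaves implicit.
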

A \emph{partial ovoid} $\mathcal{PO}$ is a subset of the vertex set $V\big(Q_{n-1}^{\varepsilon}(q)\big)$ meeting every generator in at most one point. If $g'$  and $g''$ are the number of generators that contain one and zero points of $\mathcal{PO}$, respectively, then $g=g'+g''$. Moreover, a similar calculation as in (\ref{e17}) shows that
$|\mathcal{PO}|=\frac{g'}{g_0}$. So if a partial ovoid has cardinality (\ref{e17}), then it is an ovoid, since $g''=0$.  Since $Q_{n-1}^{\varepsilon}(q)$ is arc-transitive (cf.~\cite{cameron_kan}), any pair of its adjacent vertices lie in some generator. Hence,  partial ovoids are precisely the independent sets in $Q_{n-1}^{\varepsilon}(q)$. Consequently, the following is true.
\begin{corollary}\label{posledica}
Let $r\geq 2$. If $\alpha\big(Q_{n-1}^{\varepsilon}(q)\big)$ is the same as the cardinality $|{\cal O}|$ from Lemma~\ref{ovoid}, then the corresponding independent set is an ovoid.
\end{corollary}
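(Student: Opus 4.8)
The plan is to exploit the elementary incidence count already used to derive~(\ref{e17}), together with the identification of independent sets with partial ovoids established just above. Let $\mathcal{I}\subseteq V\big(Q_{n-1}^{\varepsilon}(q)\big)$ be an independent set whose size equals $\alpha\big(Q_{n-1}^{\varepsilon}(q)\big)$, which by hypothesis coincides with the number $|\mathcal{O}|$ from Lemma~\ref{ovoid}. Since $Q_{n-1}^{\varepsilon}(q)$ is arc-transitive, every edge lies in a generator, so $\mathcal{I}$ meets each generator in at most one point; that is, $\mathcal{I}$ is a partial ovoid.

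I would then count the set $\big\{({\bf x},U) : {\bf x}\in\mathcal{I},\ U\ \textrm{a generator with}\ {\bf x}\in U\big\}$ in two ways. Summing over ${\bf x}\in\mathcal{I}$ gives $|\mathcal{I}|\cdot g_0$, where $g_0$ denotes the number of generators through a fixed point, which is independent of the chosen point by vertex-transitivity (graph automorphisms permute the maximum cliques, i.e.\ the generators). Summing instead over generators, and using that $\mathcal{I}$ is a partial ovoid so each generator contains zero or one point of $\mathcal{I}$, yields $g'$, the number of generators meeting $\mathcal{I}$ in exactly one point. Hence $|\mathcal{I}|=g'/g_0$.

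Finally, by~(\ref{e17}) and Lemma~\ref{ovoid} we have $|\mathcal{I}|=|\mathcal{O}|=g/g_0$, so $g'=g$. Writing $g=g'+g''$ with $g''$ the number of generators disjoint from $\mathcal{I}$, we conclude $g''=0$; thus every generator meets $\mathcal{I}$ in exactly one point, i.e.\ $\mathcal{I}$ is an ovoid, as claimed.

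I do not anticipate any genuine obstacle here: the argument is precisely the double count carried out in the text, and the only point requiring (trivial) justification is the point-independence of $g_0$, which was already tacitly used in deriving~(\ref{e17}).
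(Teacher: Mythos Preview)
Your proposal is correct and follows essentially the same approach as the paper: the paper identifies independent sets with partial ovoids via arc-transitivity, then uses the double count $|\mathcal{PO}|=g'/g_0$ together with $g=g'+g''$ to conclude $g''=0$ when $|\mathcal{PO}|=|\mathcal{O}|$. Your write-up simply makes explicit the two steps the paper sketches in the paragraph preceding the corollary.
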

Most of the observations above about ovoids, partial ovoids, and independent sets were already used in the proof of~\cite[Theorem~3.5]{cameron_kan}.

The `No-Homomorphism Lemma'~\cite{albertson, hahn_tardif, klavzar} states that if there exists some homomorphism $\Phi : \Gamma'\to \Gamma$ between two graphs, where $\Gamma$ is vertex-transitive, then
\begin{equation}\label{e18}
\frac{\alpha(\Gamma')}{|V(\Gamma')|}\geq \frac{\alpha(\Gamma)}{|V(\Gamma)|}.
\end{equation}
In particular, (\ref{e18}) holds if $\Gamma'$ is a subgraph in $\Gamma$. In Proposition~\ref{lema_no_hom2} (i) we essentially rewrite the proof of~\cite[Lemma~3.3]{hahn_tardif} to deduce a strict inequality in~(\ref{e18}) if $\Gamma$
is an affine polar graph and $\Gamma'$ is a closed neighborhood of a vertex. Consequently, in part (ii) the sizes of independent sets in $VO_n^{\varepsilon}(q)$ provide us lower bounds for the size of the largest partial ovoid in orthogonal polar space.
\begin{proposition}\label{lema_no_hom2}\leavevmode
\begin{enumerate}
\item
Let $N_0$ be the closed neighborhood of any vertex in $VO_n^{\varepsilon}(q)$. Then
\begin{equation}\label{e21}
\frac{\alpha(N_0)}{|V(N_0)|}> \frac{\alpha(VO_n^{\varepsilon}(q))}{V\big(VO_n^{\varepsilon}(q)\big)}.
\end{equation}
\item Let $r\geq 2$. Then
\begin{align*}
\alpha\big(Q_{n-1}(q)\big)&>\frac{1}{q}\cdot \alpha(VO_n(q)),\\
\alpha\big(Q_{n-1}^{+}(q)\big)&>\frac{q^{\frac{n}{2}}+q-1}{q^{\frac{n}{2}+1}}\cdot \alpha(VO_n^{+}(q)),\\
\alpha\big(Q_{n-1}^{-}(q)\big)&>\frac{q^{\frac{n}{2}}-q+1}{q^{\frac{n}{2}+1}}\cdot \alpha(VO_n^{-}(q)).
\end{align*}
\end{enumerate}
\end{proposition}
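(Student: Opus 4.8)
The plan is to obtain both parts from the averaging argument behind the No-Homomorphism Lemma, strengthening~(\ref{e18}) to a strict inequality for the particular subgraph $N_0$. First I would set things up. By vertex-transitivity of $VO_n^{\varepsilon}(q)$ I may take $N_0$ to be the closed neighborhood of the zero vertex, so $V(N_0)=\{{\bf 0}\}\cup S$ with $S=\{{\bf x}\in\FF_q^n\backslash\{0\}:{\bf x}^{\tr}A{\bf x}=0\}$. Since $({\bf x}-{\bf 0})^{\tr}A({\bf x}-{\bf 0})={\bf x}^{\tr}A{\bf x}=0$ for every ${\bf x}\in S$, the vertex ${\bf 0}$ is adjacent in $N_0$ to all other vertices of $N_0$; hence an independent set of $N_0$ containing ${\bf 0}$ equals $\{{\bf 0}\}$, and therefore $\alpha(N_0)=\max(1,\alpha(N))=\alpha(N)$, where $N$ is the open neighborhood (nonempty since $r\geq 2$). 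By Lemma~\ref{lexi} and~(\ref{e14}), $\alpha(N)=\alpha\big(Q_{n-1}^{\varepsilon}(q)[K_{q-1}]\big)=\alpha\big(Q_{n-1}^{\varepsilon}(q)\big)$, and since $r\geq 2$ the quadric has two non-collinear points (take an isotropic ${\bf x}$, pick ${\bf z}$ with ${\bf x}^{\tr}A{\bf z}\neq 0$, and subtract a suitable multiple of ${\bf x}$ to make ${\bf z}$ isotropic while keeping ${\bf x}^{\tr}A{\bf z}\neq 0$), so $\alpha(N_0)=\alpha\big(Q_{n-1}^{\varepsilon}(q)\big)\geq 2$. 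These observations are also exactly what will turn (i) into (ii).

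For (i) itself I would run the standard averaging. Let $G$ be the automorphism group of $VO_n^{\varepsilon}(q)$ and let $\mathcal{I}$ be a maximum independent set. For each $g\in G$, $g\mathcal{I}\cap V(N_0)$ is independent in $N_0$, so $|g\mathcal{I}\cap V(N_0)|\leq\alpha(N_0)$; summing over $g$ and using vertex-transitivity (for fixed ${\bf v}$ the element $g^{-1}{\bf v}$ takes each value of $V\big(VO_n^{\varepsilon}(q)\big)$ the same number of times, all vertex-stabilizers having equal order) gives the mean value $|V(N_0)|\cdot\alpha\big(VO_n^{\varepsilon}(q)\big)\big/\big|V\big(VO_n^{\varepsilon}(q)\big)\big|$, which is the weak bound~(\ref{e18}). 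To make it strict I would exhibit one $g$ with $|g\mathcal{I}\cap V(N_0)|<\alpha(N_0)$: choose ${\bf v}\in\mathcal{I}$ and let $g$ be the translation ${\bf x}\mapsto{\bf x}-{\bf v}$, which is an automorphism because $VO_n^{\varepsilon}(q)$ is a Cayley graph over $(\FF_q^n,+)$; then ${\bf 0}\in g\mathcal{I}$, and since ${\bf 0}$ dominates $N_0$ we get $g\mathcal{I}\cap V(N_0)=\{{\bf 0}\}$, of size $1<\alpha(N_0)$. As every term of the sum is $\leq\alpha(N_0)$ and at least one is strictly smaller, the average is $<\alpha(N_0)$, which is precisely~(\ref{e21}).

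Part (ii) is then bookkeeping. Into~(\ref{e21}) I substitute $\alpha(N_0)=\alpha\big(Q_{n-1}^{\varepsilon}(q)\big)$, $\big|V\big(VO_n^{\varepsilon}(q)\big)\big|=q^n$, and $|V(N_0)|=1+|S|$, where Lemma~\ref{lema1} with $b=0$ gives $|V(N_0)|$ equal to $q^{n-1}$ in the parabolic case, $q^{n-1}+q^{n/2}-q^{n/2-1}$ in the hyperbolic case, and $q^{n-1}-q^{n/2}+q^{n/2-1}$ in the elliptic case. Dividing $|V(N_0)|$ by $q^n$, and factoring $q^{n/2-1}$ out of the numerator in the two even cases, turns the coefficient into $\tfrac1q$, $\tfrac{q^{n/2}+q-1}{q^{n/2+1}}$, and $\tfrac{q^{n/2}-q+1}{q^{n/2+1}}$ respectively, which are exactly the three displayed inequalities.

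The only genuinely nonroutine step is the strictness in (i); the key realization is that a maximum independent set of $VO_n^{\varepsilon}(q)$ can be translated so as to contain the zero vertex, and that the zero vertex dominates its own closed neighborhood, after which strictness drops out of the averaging argument. Everything else — the cone structure of $N_0$, the lexicographic-product identity~(\ref{e14}) applied through Lemma~\ref{lexi}, and the isotropic-vector counts of Lemma~\ref{lema1} — is already available.
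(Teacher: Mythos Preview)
Your proof is correct and follows essentially the same route as the paper: the strict No-Homomorphism averaging for (i), with strictness coming from translating a maximum independent set to contain ${\bf 0}$ (where it meets $N_0$ in a single point), and then the bookkeeping via Lemma~\ref{lexi}, (\ref{e14}), and Lemma~\ref{lema1} for (ii). The only cosmetic difference is that you average over the automorphism group while the paper averages over the family of all maximum independent sets; the key translation step and the derivation of (ii) are identical.
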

\begin{proof}
(i) Since $VO_n^{\varepsilon}(q)$ is vertex-transitive, we may assume that $N_0$ is the closed neighborhood of the zero vertex, that is, the subgraph induced by the set $\{{\bf x}\in\FF_q^n : {\bf x}^{\tr} A {\bf x}=0\}$, where $A$ is the matrix that defines the polar space.

Let $\Gamma=VO_n^{\varepsilon}(q)$ and $\Gamma'=N_0$. Let $\mathcal{J}(\Gamma)$ denote the family of all independent sets in $\Gamma$ of size $\alpha(\Gamma)$. Since $\Gamma$ is vertex-transitive, any its vertex lies in the same number, say $m$, of members of  $\mathcal{J}(\Gamma)$. If we count the elements of the set
$\big\{({\bf x},\mathcal{I}) : {\bf x}\in V(\Gamma), \mathcal{I}\in \mathcal{J}(\Gamma)\ \textrm{contains}\  {\bf x} \big\}$
in two different ways, we deduce that
\begin{equation}\label{e19}
 |V(\Gamma)|\cdot m = |\mathcal{J}(\Gamma)|\cdot \alpha(\Gamma).
\end{equation}
Obviously, $|\mathcal{I}\cap V(\Gamma')|\leq \alpha(\Gamma')$ for all $\mathcal{I}\in \mathcal{J}(\Gamma)$. Fix one independent set $\mathcal{I}_1\in \mathcal{J}(\Gamma)$ and ${\bf x}_1\in \mathcal{I}_1$. Then $\mathcal{I}_1-{\bf x}_1:=\{{\bf x}-{\bf x}_1 : {\bf x}\in \mathcal{I}_1\}\in \mathcal{J}(\Gamma)$. Since it contains the zero vector, its other elements are not in $V(\Gamma')$, that is, $|(\mathcal{I}-{\bf x}_1)\cap V(\Gamma')|=1<\alpha(\Gamma')$.
Consequently,
\begin{equation}\label{e20}
\sum_{\mathcal{I}\in \mathcal{J}(\Gamma)} |\mathcal{I}\cap V(\Gamma')| < \alpha(\Gamma')\cdot |\mathcal{J}(\Gamma)|.
\end{equation}
Since $\sum_{\mathcal{I}\in \mathcal{J}(\Gamma)} |\mathcal{I}\cap V(\Gamma')|=|V(\Gamma')|\cdot m$, we deduce (\ref{e21}) from (\ref{e19}) and (\ref{e20}).

(ii) Let $N$ and $N_0$ be the neighborhood and the closed neighborhood of the zero vector in $VO_n^{\varepsilon}(q)$, respectively. Since the zero vector is adjacent to all vertices in $N$, we have $\alpha(N_0)=\alpha(N)$. By Lemma~\ref{lexi}, (\ref{e14}), and (i) it follows that
\begin{align*}
\alpha(Q_{n-1}^{\varepsilon}(q))&=\alpha(Q_{n-1}^{\varepsilon}(q))\cdot \alpha(K_{q-1})=\alpha(N)\\
&=\alpha(N_0)>\frac{|V(N_0)|}{V\big(VO_n^{\varepsilon}(q)\big)}\cdot\alpha(VO_n^{\varepsilon}(q)).
\end{align*}
Now, $V\big(VO_n^{\varepsilon}(q)\big)=q^n$, while to compute $|V(N_0)|$ we need to multiply the cardinality of the quadric by $|\FF_q\backslash\{0\}|=q-1$, and then add 1 for the zero vector. The result follows.
\end{proof}

In~\cite[Theorem~3.5]{cameron_kan} it was determined (in terms of existence of ovoids, spreads, and partitions into ovoids),  when is the core of $Q_{n-1}^{\varepsilon}(q)$ or its complement complete. Recall from Remark~\ref{opomba} that the result was not (necessary) symmetric for the graph and its complement. Moreover, in the proof it was observed that $$\omega\big(Q_{n-1}^{\varepsilon}(q)\big)\alpha\big(Q_{n-1}^{\varepsilon}(q)\big)=V\big(Q_{n-1}^{\varepsilon}(q)\big)$$ if and only if an ovoid exists in the polar space (provided that $r\geq 2$). For affine polar graph we obtain the following result.
\begin{theorem}\label{noncomplete-core-par-hyper}
Let $r\geq 2$ and let $VO_n^{\varepsilon}(q)$ be parabolic or hyperbolic. Then the statements (i)-(v) are equivalent:
\begin{enumerate}
\item $\omega\big(VO_n^{\varepsilon}(q)\big)\alpha\big(VO_n^{\varepsilon}(q)\big)= \big|V\big(VO_n^{\varepsilon}(q)\big)\big|$,
\item a core of $VO_n^{\varepsilon}(q)$ is a complete graph,
\item a core of $\inv{VO_n^{\varepsilon}(q)}$ is a complete graph,
\item $VO_n^{\varepsilon}(q)$ is not a core,
\item $\chi\big(VO_n^{\varepsilon}(q)\big)=\omega\big(VO_n^{\varepsilon}(q)\big)$.
\end{enumerate}
If any of statements (i)-(v) is true, then
\begin{enumerate}
\setcounter{enumi}{5}
\item the polar space of $Q_{n-1}^{\varepsilon}(q)$ has an ovoid.
\end{enumerate}
\end{theorem}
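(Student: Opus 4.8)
The plan is to deduce the equivalence of (i)--(v) by assembling facts already established, and then to extract (vi) from (i) using the elementary counting behind ovoids together with the strict no-homomorphism estimate.

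\textbf{Equivalences.} The additive group $G=(\FF_q^n,+)$ is abelian, so $VO_n^{\varepsilon}(q)=\mathrm{Cay}(G,S)$ is a normal Cayley graph: the inverting map $\mathbf{x}\mapsto-\mathbf{x}$ is an automorphism. Hence Proposition~\ref{klikaneod2} applies verbatim and gives the equivalence of (i), (ii) and (iii) of the theorem (these are statements $(iii)$, $(i)$ and $(ii)$ of that proposition, respectively). For (iv): by Lemma~\ref{lema5} the graph $VO_n^{\varepsilon}(q)$ is either a core or its core is the complete graph on $\omega\big(VO_n^{\varepsilon}(q)\big)$ vertices; since $\omega\big(VO_n^{\varepsilon}(q)\big)<q^n=\big|V\big(VO_n^{\varepsilon}(q)\big)\big|$ by Corollary~\ref{lema2}, these two alternatives exclude one another, so ``$VO_n^{\varepsilon}(q)$ is not a core'' is literally (ii). For (v): by the criterion recalled in Section~\ref{grafi}, $\chi(\Gamma)\ge\omega(\Gamma)$ with equality exactly when the core of $\Gamma$ is complete, so (v)$\Leftrightarrow$(ii). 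This closes the chain (i)$\Leftrightarrow$(ii)$\Leftrightarrow$(iii)$\Leftrightarrow$(iv)$\Leftrightarrow$(v).

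\textbf{From (i) to (vi).} Assume (i). By Corollary~\ref{lema2}, $\alpha\big(VO_n(q)\big)=q^{n}/q^{(n-1)/2}=q^{(n+1)/2}$ in the parabolic case and $\alpha\big(VO_n^{+}(q)\big)=q^{n}/q^{n/2}=q^{n/2}$ in the hyperbolic case. Substituting into Proposition~\ref{lema_no_hom2}(ii) yields the strict bounds $\alpha\big(Q_{n-1}(q)\big)>q^{(n-1)/2}$ and $\alpha\big(Q_{n-1}^{+}(q)\big)>q^{n/2-1}+1-\tfrac1q$. On the other hand, independent sets of $Q_{n-1}^{\varepsilon}(q)$ are exactly partial ovoids, and a partial ovoid contains at most $g/g_0=\big|V\big(Q_{n-1}^{\varepsilon}(q)\big)\big|/s$ points by the count leading to~(\ref{e17}); this number is $q^{(n-1)/2}+1$ (parabolic) and $q^{n/2-1}+1$ (hyperbolic), i.e. the ovoid cardinality of Lemma~\ref{ovoid}. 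Since $\alpha\big(Q_{n-1}^{\varepsilon}(q)\big)$ is an integer and $q\ge 3$, the upper and lower bounds force $\alpha\big(Q_{n-1}(q)\big)=q^{(n-1)/2}+1$, respectively $\alpha\big(Q_{n-1}^{+}(q)\big)=q^{n/2-1}+1$. A maximum independent set is then a partial ovoid meeting every generator, i.e. an ovoid, by Corollary~\ref{posledica}; this is (vi).

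\textbf{Where the work is.} The equivalences are pure bookkeeping with Proposition~\ref{klikaneod2}, Lemma~\ref{lema5} and the chromatic-number criterion; the substantive step is (i)$\Rightarrow$(vi). Its crux is the \emph{strictness} of the inequality in Proposition~\ref{lema_no_hom2}(ii): in the parabolic case the plain No-Homomorphism bound would only give $\alpha\big(Q_{n-1}(q)\big)\ge q^{(n-1)/2}$, which does not pin the value down, and it is exactly the extra gain obtained there (by applying the argument to a closed neighbourhood of a vertex rather than to the quadric itself) that makes the integrality squeeze work. It is also worth noting that this route gives no information in the elliptic case, consistently with Theorem~\ref{noncomplete-core-elliptic}, where $VO_n^{-}(q)$ is always a core.
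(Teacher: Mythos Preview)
Your proof is correct and follows essentially the same route as the paper: the equivalences (i)--(v) are assembled exactly as the paper does from Proposition~\ref{klikaneod2}, Lemma~\ref{lema5}, and the chromatic-number criterion, and your argument for (i)$\Rightarrow$(vi) is the paper's first proof (the integrality squeeze via Proposition~\ref{lema_no_hom2}(ii) and Corollary~\ref{posledica}). The paper additionally offers a second, spectral proof of (i)$\Rightarrow$(vi) based on the equality case of Hoffman's bound (Lemma~\ref{hoffman}), but your single proof already suffices.
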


Recall that equivalence of statements $(i)-(iii)$ is proved in Proposition~\ref{klikaneod2}, since $VO_n^{\varepsilon}(q)$ is a Cayley graph over an abelian group. Equivalence between $(ii)$ and $(iv)$ is proved in Lemma~\ref{lema5}. We know the equivalence between $(ii)$ and $(v)$ from Section~\ref{grafi}. So  Theorem~\ref{noncomplete-core-par-hyper} demands just the proof of implication $(i)\Rightarrow (vi)$. We provide two different proofs.

\begin{bew}{1}
Assume that $(i)$ holds. Then $$\alpha(VO_n^{\varepsilon}(q))=\frac{q^n}{\omega\big(VO_n^{\varepsilon}(q)\big)},$$
where $\omega\big(VO_n^{\varepsilon}(q)\big)$ is computed in Corollary~$\ref{lema2}$.

In the parabolic case we deduce that
$\alpha\big(Q_{n-1}(q)\big)>q^{(n-1)/2}$ from Proposition~\ref{lema_no_hom2}.
Since $\alpha\big(Q_{n-1}(q)\big)$ is an integer, it follows that  $\alpha(Q_{n-1}(q))\geq q^{(n-1)/2}+1$. From (\ref{e23}) and (\ref{e1}) we deduce that $\alpha(Q_{n-1}(q))\leq q^{(n-1)/2}+1$, so $\alpha(Q_{n-1}(q))=q^{(n-1)/2}+1$. By Corollary~\ref{posledica}, the polar space has an ovoid.

In the hyperbolic case we similarly deduce that
$$\alpha\big(Q_{n-1}^{+}(q)\big)>q^{n/2-1}+1-\frac{1}{q},$$
so $\alpha\big(Q_{n-1}^{+}(q)\big)\geq q^{n/2-1}+1$. The same procedure as in the parabolic case shows that the polar space has an ovoid.\cqfd
\end{bew}

\begin{bew}{2}
Assume that $(i)$ holds. Then, both for parabolic and hyperbolic case, Corollary~\ref{spekter-parabolic} and Lemma~\ref{lastne_hyper_ell}, together with Corollary~\ref{lema2},  imply that equality is attained in (\ref{hoffman_neodvisna}). Let ${\cal K}$ and ${\cal I}$ be a clique of size $\omega\big(VO_n^{\varepsilon}(q)\big)$ and independent set of size $\alpha\big(VO_n^{\varepsilon}(q)\big)$, respectively. Choose arbitrary ${\bf k}\in {\cal K}$ and ${\bf i}\in {\cal I}$. Then ${\cal K}':={\cal K}-{\bf k}$ and ${\cal I}':={\cal I}-{\bf i}$ are clique and independent set of the same size as ${\cal K}$ and ${\cal I}$, respectively. Recall from the proof of Proposition~\ref{klikaneod2}, that $\{{\bf k}'+{\cal I}' : {\bf k}'\in {\cal K}'\}$ is a partition of the vertex set $V\big(VO_n^{\varepsilon}(q)\big)$ into independent sets of size $\alpha\big(VO_n^{\varepsilon}(q)\big)$. Since $0=0+0\in 0 + {\cal I}'$, we see that $0\notin {\bf k}'+{\cal I}'$ for arbitrary nonzero ${\bf k}'\in {\cal K}'$, so by Lemma~\ref{hoffman} there are precisely $-\lambda_{\min}$ elements ${\bf x}_1,\ldots, {\bf x}_{-\lambda_{\min}}\in {\bf k}'+{\cal I}'$, that are adjacent to 0. That is, if $A$ is the defining matrix (\ref{e3}), then
\begin{align*}
{\bf x}_j^{\tr}A{\bf x}_j&=0\qquad (j=1,\ldots, -\lambda_{\min}),\\
{\bf x}_j^{\tr}A{\bf x}_k&=-\frac{1}{2}({\bf x}_j-{\bf x}_j)^{\tr}A({\bf x}_j-{\bf x}_k)\neq 0 \qquad (j\neq k).
\end{align*}
Since, by Corollary~\ref{spekter-parabolic} and Lemma~\ref{lastne_hyper_ell}, $-\lambda_{\min}$ equals $q^{(n-1)/2}+1$ and $q^{n/2-1}+1$ in parabolic and hyperbolic case, respectively, we deduce from Corollary~\ref{posledica} that $\langle{\bf x}_1\rangle,\ldots,\langle{\bf x}_{-\lambda_{\min}}\rangle$ form an ovoid in $Q_{n-1}^{\varepsilon}(q)$.\cqfd
\end{bew}

Recall from Theorem~\ref{noncomplete-core-elliptic}  that $(i)-(v)$ do not hold in the elliptic case. Neither does $(vi)$ if $n\geq 6$ \cite{thas1981}.
However, in the elliptic case, the two proofs of implication $(i)\Rightarrow (vi)$ above fails, so the spectral tools used to prove Theorem~\ref{noncomplete-core-elliptic} are indeed needed.

It is our concern, whether the statement $(i)-(v)$ in Theorem~\ref{noncomplete-core-par-hyper} are true or not. Certainly, if $(vi)$ does not hold, then neither do the statements $(i)-(v)$. However, whether $(vi)$ is true or not is still not known in general, despite a huge amount of the research in this area in last few decades. We now briefly list the main results of what is known and refer to a recent survey~\cite{ovoid_povzetek} for more details.

The parabolic polar space of $Q_{n-1}(q)$ does not posses ovoids if $n\geq 9$~\cite{GunaMoor1997}. Ovoids exist if $n=5$ (cf.~\cite[3.4.1.(i)]{GQ}). If $n=7$, the (non)existence of ovoids in parabolic space seems to be still an open problem, though there are some partial results of a mixed type, that is, ovoids do not exist if $q>3$ is a prime (see~\cite[Theorem~3]{okeefe} and \cite[Corollaries~1,2]{ball2006}) and do exist if $q$ is a power of $3$ (cf.~\cite[Theorem~17]{thas1992}).

The hyperbolic polar space of $Q_{n-1}^{+}(q)$ is even more mysterious. Existence of ovoids is known if $n=4$ or $n=6$ (cf.~\cite{thas_hand}). If $n=8$, the existence is known for prime $q$~\cite{conway, moorhouse1993}, for $q=p^k$ with $k$ odd and $p\equiv 2\ (\textrm{mod}~3)$~\cite{kantor}, and in some other special cases (see \cite{ovoid_povzetek} and references therein). A non-existence of ovoids is proved if $q$ is a power of a prime $p$ that satisfies the inequality $p^{\frac{n}{2}-1}>\binom{n+p-2}{n-1}-\binom{n+p-4}{n-1}$ \cite{moorhouse1995}. This is the case, for example, if $n=10$ and $p=3$, or $n=12$ and $p\in\{5,7\}$. By \cite[Theorem~8]{thas1992}, nonexistence of ovoids in $Q_{n-1}^{+}(q)$ implies nonexistence of ovoids in $Q_{n+1}^{+}(q)$, so there are no ovoids in  $Q_{n-1}^{+}(q)$ if $n\geq 10$, $p=3$ or $n\geq 12$, $p\in\{5,7\}$. It is believed by many mathematicians that there are no ovoids in $Q_{9}^{+}(q)$  (and consequently in $Q_{n-1}^{+}(q)$, $n\geq 10$) for general $q$ (cf.~\cite{ovoid_povzetek}).

We do not know, whether statement $(vi)$ from Theorem~\ref{noncomplete-core-par-hyper} implies statements $(i)-(v)$ or not. Nevertheless, to check whether $(vi)$ or $(i)-(v)$ is true seems to be of a similar difficulty, as indicated by
the following result together with Theorem~\ref{noncomplete-core-par-hyper}.

\begin{proposition}\label{propThm2}\leavevmode
\begin{enumerate}
\item Let $n\geq 5$ be odd and assume that $Q_{n-1}(q)$ has an ovoid. Then the statements $(i)-(v)$ in Theorem~\ref{noncomplete-core-par-hyper} are true for $VO_{n-2}(q)$ and $VO_{n-1}^{+}(q)$.
\item Let $n\geq 4$ be even and assume that $Q_{n-1}^{+}(q)$ has an ovoid. Then the statements $(i)-(v)$ in Theorem~\ref{noncomplete-core-par-hyper} are true for $VO_{n-2}^{+}(q)$.
\end{enumerate}
\end{proposition}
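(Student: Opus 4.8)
The plan is to reduce Proposition~\ref{propThm2} to the single task of verifying statement~(i) of Theorem~\ref{noncomplete-core-par-hyper} --- that is, $\omega(\Gamma)\alpha(\Gamma)=|V(\Gamma)|$ --- for the affine polar graph $\Gamma$ in question. This is enough because the equivalence of (i)--(v) holds for \emph{every} parabolic or hyperbolic affine polar graph, irrespective of the Witt index: it rests only on Proposition~\ref{klikaneod2} (Cayley graph over an abelian group), on Lemma~\ref{lema5}, and on the general fact that $\chi=\omega$ is equivalent to having a complete core. Moreover, Corollary~\ref{lema2} and~(\ref{e23}) already give $\omega(\Gamma)\alpha(\Gamma)\leq|V(\Gamma)|$, so in each case it remains only to construct an independent set of the extremal size $|V(\Gamma)|/\omega(\Gamma)$.

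The crux is a geometric observation that I would state as a lemma: \emph{for $\varepsilon$ parabolic or hyperbolic, the subgraph of $Q_{n-1}^{\varepsilon}(q)$ induced on the set of quadric points lying outside a tangent hyperplane at one of its points is isomorphic to $VO_{n-2}^{\varepsilon}(q)$.} Since $VO_m^{\varepsilon}(q)$ and $Q_{m-1}^{\varepsilon}(q)$ do not depend on the choice of defining matrix of the prescribed type, I would take $A\in SGL_n(\FF_q)$ with ${\bf x}^{\tr}A{\bf x}=2x_1x_2+{\bf z}^{\tr}A_0{\bf z}$, where ${\bf z}=(x_3,\dots,x_n)^{\tr}$ and $A_0\in SGL_{n-2}(\FF_q)$; this is possible because a parabolic form of (odd) dimension $n\geq 5$, or a hyperbolic form of (even) dimension $n\geq 4$, splits off a hyperbolic plane and leaves a nondegenerate residual form $A_0$ of the \emph{same} Witt type in dimension $n-2$. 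By Lemma~\ref{lemmawan1993} some isometry of $A$ carries any prescribed quadric point to $\langle{\bf e}_1\rangle$; such isometries permute generators and hence send ovoids to ovoids, so I may assume the chosen point is $P=\langle{\bf e}_1\rangle$, with tangent hyperplane $P^{\perp}=\{x_2=0\}$. Every quadric point outside $P^{\perp}$ has a unique representative with $x_2=1$, for which $x_1=-\tfrac{1}{2}{\bf z}^{\tr}A_0{\bf z}$, and this identifies those points with $\FF_q^{n-2}$ via $\langle{\bf x}\rangle\mapsto{\bf z}$. A one-line computation then yields, for two such points,
\begin{equation*}
{\bf x}^{\tr}A{\bf x}'=-\tfrac{1}{2}({\bf z}-{\bf z}')^{\tr}A_0({\bf z}-{\bf z}'),
\end{equation*}
so the point-graph adjacency~(\ref{e2}) of $\langle{\bf x}\rangle$ and $\langle{\bf x}'\rangle$ holds exactly when the affine adjacency~(\ref{e3}) of ${\bf z}$ and ${\bf z}'$ (with defining matrix $A_0$) holds, which is the asserted isomorphism.

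Granting the lemma, the rest is short. Any two distinct points of an ovoid $\mathcal O$ of $Q_{n-1}^{\varepsilon}(q)$ are nonadjacent in $Q_{n-1}^{\varepsilon}(q)$: a totally isotropic line through them would lie in some generator, which $\mathcal O$ would then meet in two points. Since, for $P'\neq P$, the condition $P'\in P^{\perp}$ is exactly adjacency of $P'$ with $P$, it follows that $\mathcal O\setminus\{P\}$ (where $P\in\mathcal O$) is an independent set of $Q_{n-1}^{\varepsilon}(q)$ lying entirely outside $P^{\perp}$, of size $|\mathcal O|-1$. Via the lemma it becomes an independent set of $VO_{n-2}^{\varepsilon}(q)$ of size $|\mathcal O|-1$, which by Lemma~\ref{ovoid} equals $q^{(n-1)/2}$ in the parabolic case and $q^{n/2-1}$ in the hyperbolic case --- precisely the extremal value $|V|/\omega$ from Corollary~\ref{lema2}. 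Hence statement~(i), and therefore (i)--(v), holds for $VO_{n-2}(q)$ in part~(i) and for $VO_{n-2}^{+}(q)$ in part~(ii). For the remaining graph $VO_{n-1}^{+}(q)$ in part~(i), restricting a hyperbolic form on $\FF_q^{n-1}$ to the orthogonal complement of an anisotropic vector gives a nondegenerate form of odd dimension $n-2$, necessarily parabolic, so $VO_{n-2}(q)$ is an induced subgraph of $VO_{n-1}^{+}(q)$; thus $\alpha\big(VO_{n-1}^{+}(q)\big)\geq\alpha\big(VO_{n-2}(q)\big)=q^{(n-1)/2}=|V(VO_{n-1}^{+}(q))|/\omega(VO_{n-1}^{+}(q))$, again extremal. (Alternatively, one may use that an ovoid of the parabolic $Q_{n-1}(q)$ is automatically an ovoid of the hyperbolic $Q_{n}^{+}(q)$ containing it as a hyperplane section, and apply the lemma one dimension higher.)

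I expect the only real obstacle to be the verification, needed for the lemma, that splitting off a hyperbolic plane leaves a residual form $A_0$ of the \emph{same} Witt type --- equivalently, that passing to the complement of a tangent hyperplane at a point drops the Witt index by exactly one --- together with the standard facts about generators of a quadric used to place $\mathcal O\setminus\{P\}$ inside that complement. The coordinate normalization and the displayed identity are routine, and everything after the lemma is bookkeeping with Corollary~\ref{lema2}, (\ref{e23}), and the equivalences recorded after Theorem~\ref{noncomplete-core-par-hyper}.
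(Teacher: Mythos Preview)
Your proof is correct and follows the same underlying construction as the paper---both remove one point from the ovoid and transport the remaining $|\mathcal O|-1$ points into $VO_{n-2}^{\varepsilon}(q)$---but you package the coordinate reduction as a clean structural lemma (the quadric points outside the tangent hyperplane at $P$ form a copy of $VO_{n-2}^{\varepsilon}(q)$), whereas the paper carries out the same normalization by hand: it picks ${\bf x}_{t+1}\in\mathcal O$ with nonzero first coordinate, replaces each other ${\bf x}_i$ by a combination ${\bf y}_i=b_i{\bf x}_i-\tfrac{a_ib_i}{a_{t+1}}{\bf x}_{t+1}$ with vanishing first coordinate (your ``$x_2=1$'' normalization in disguise), verifies directly that $({\bf y}_{i_1}-{\bf y}_{i_2})^{\tr}A({\bf y}_{i_1}-{\bf y}_{i_2})=-2b_{i_1}b_{i_2}\,{\bf x}_{i_1}^{\tr}A{\bf x}_{i_2}\neq 0$, and then deletes two coordinates. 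Your identity ${\bf x}^{\tr}A{\bf x}'=-\tfrac12({\bf z}-{\bf z}')^{\tr}A_0({\bf z}-{\bf z}')$ is precisely this computation viewed from the other side. The payoff of your formulation is a reusable lemma that makes the geometry transparent; the paper's more explicit version feeds directly into the concrete constructions of Examples~\ref{primer1}--\ref{primer3}. Your handling of $VO_{n-1}^{+}(q)$ via embedding $VO_{n-2}(q)$ as an anisotropic hyperplane section also coincides with the paper's device of appending a zero last coordinate.
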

Observe that the proof below shows how to construct a maximum independent set that satisfies statement $(i)$ in Theorem~\ref{noncomplete-core-par-hyper} from an ovoid, for affine polar graphs/orthogonal polar spaces in question.
\begin{proof}
We may assume that the defining matrix  $A$ (\ref{e3}) is
\begin{equation}\label{matrika-prop}
\left(
\begin{array}{ccccc}
  &  &  &  & 1 \\
  &  &  & 1 &  \\
  &  & \iddots &  &  \\
  & 1 &  &  &  \\
 1 &  &  &  &
\end{array}
\right)
\end{equation}
in the parabolic case and
$$\left(
\begin{array}{cr}
 B & 0 \\
 0 & -1
\end{array}
\right),$$
in the hyperbolic case, where $B$ is $(n-1)\times (n-1)$ matrix of the form (\ref{matrika-prop}).
Then, for ${\bf x}=(x_1,\ldots,x_n)^{\tr}$, ${\bf x}^{\tr}A{\bf x}$ equals
\begin{equation}\label{e49}
x_{\frac{n+1}{2}}^2+2\sum_{i=1}^{\frac{n-1}{2}}x_i x_{n-i}
\end{equation}
in parabolic case and
\begin{equation}\label{e50}
x_{\frac{n}{2}}^2-x_n^2 +2\sum_{i=1}^{\frac{n-1}{2}}x_i x_{n-1-i}.
\end{equation}
in hyperbolic case.
Assume that ${\cal O}=\{\langle {\bf x}_1 \rangle,\ldots, \langle {\bf x}_t \rangle, \langle {\bf x}_{t+1} \rangle\}$ is an ovoid in $Q_{n-1}^{\varepsilon}(q)$, where $t$ equals $q^{\frac{n-1}{2}}$  and $q^{\frac{n}{2}-1}$ in the parabolic and hyperbolic case, respectively, as it follows from Lemma~\ref{ovoid}. Then
\begin{align}
\label{e47} {\bf x}_{i}^{\tr}A {\bf x}_{i}&=0\qquad (i=1,\ldots,t+1),\\
\label{e48} {\bf x}_{i_1}^{\tr}A {\bf x}_{i_2}&\neq 0\qquad (i_1\neq i_2),
\end{align}
since an ovoid is also an independent set in graph $Q_{n-1}^{\varepsilon}(q)$.
Obviously,  there is $i_0\in\{1,\ldots,t+1\}$ and $j_0\leq n$ such that $j_0$-th component of ${\bf x}_{i_0}$ is nonzero and $j_0\neq \frac{n+1}{2}$ in parabolic case and $j_0\notin\{n,\frac{n}{2}\}$ in hyperbolic case. To simplify writings we can assume that $i_0=t+1$ and $j_0=1$.
Let $a_i$ be the first component of ${\bf x}_{i}$, so $a_{t+1}\neq 0$.
Let nonzero scalars $b_1,\ldots, b_t$ be such that
\begin{equation}\label{e46}
b_i{\bf x}_i^{\tr}A{\bf x}_{t+1}=1\qquad (i\leq t),
\end{equation}
and denote ${\bf y}_i:=b_i{\bf x}_i-\frac{a_i b_i}{a_{t+1}}{\bf x}_{t+1}$. Then
${\cal I}=\{{\bf y}_1,\ldots, {\bf y}_t\}$ is an independent set in $VO_{n}^{\varepsilon}(q)$, since (\ref{e47})-(\ref{e46}) imply that
\begin{align*}
&({\bf y}_{i_1}-{\bf y}_{i_2})^{\tr}A({\bf y}_{i_1}-{\bf y}_{i_2})=\\
&=\Big(b_{i_1}{\bf x}_{i_1}-b_{i_2}{\bf x}_{i_2}+\tfrac{a_{i_1} b_{i_1}-a_{i_2} b_{i_2}}{a_{t+1}}{\bf x}_{t+1}\Big)^{\tr}A\Big(b_{i_1}{\bf x}_{i_1}-b_{i_2}{\bf x}_{i_2}+\tfrac{a_{i_1} b_{i_1}-a_{i_2} b_{i_2}}{a_{t+1}}{\bf x}_{t+1}\Big)\\
&=(b_{i_1}{\bf x}_{i_1}-b_{i_2}{\bf x}_{i_2})^{\tr}A(b_{i_1}{\bf x}_{i_1}-b_{i_2}{\bf x}_{i_2})\\
&+2\tfrac{a_{i_1} b_{i_1}-a_{i_2} b_{i_2}}{a_{t+1}}\cdot (b_{i_1}{\bf x}_{i_1}-b_{i_2}{\bf x}_{i_2})^{\tr}A{\bf x}_{t+1}+\left(\tfrac{a_{i_1} b_{i_1}-a_{i_2} b_{i_2}}{a_{t+1}}\right)^2{\bf x}_{t+1}^{\tr}A{\bf x}_{t+1}\\
&=-2b_{i_1}b_{i_2}\cdot {\bf x}_{i_1}^{\tr}A{\bf x}_{i_2} + 0 +0 \neq 0
\end{align*}
for $i_1\neq i_2$. We next define vectors ${\bf z}_1,\ldots, {\bf z}_t\in\FF_q^{n-2}$  and a $(n-2)\times (n-2)$ matrix $A'$ as follows. In the parabolic case, we obtain ${\bf z}_i$ by deleting the first and the last component of ${\bf y}_i$. Similarly, $A'$ is obtained from $A$ by deleting the first/last row/column. In the hyperbolic case we delete the first and $(n-1)$-th component of ${\bf y}_i$ and first/$(n-1)$-th row/column of $A$ to obtain ${\bf z}_i$ and $A'$. Since the first component of ${\bf y}_i$ is zero by the construction, it follows from the forms (\ref{e49}) and (\ref{e50}) that ${\bf y}_{i_1}^{\tr}A{\bf y}_{i_2}={\bf z}_{i_1}^{\tr}A'{\bf z}_{i_2}$ for all $i_1$ and $i_2$, so
${\cal I}'=\{{\bf z}_1,\ldots, {\bf z}_t\}$ is an independent set of size $t$ in graph $VO_{n-2}^{\varepsilon}(q)$, which is defined by matrix $A'$. So for this affine polar graph the statement $(i)$ in Theorem~\ref{noncomplete-core-par-hyper} is satisfied. This proves the claim $(ii)$ and half of claim $(i)$ from Proposition~\ref{propThm2}. Now, if ${\cal I}'$ is an independent set of size $t=q^{\frac{n-1}{2}}$ in $VO_{n-2}(q)$, then, by extending its members with zero value as the last component, we obtain an independent set $\{({\bf z}^{\tr},0)^{\tr} : {\bf z}\in {\cal I}'\}$ of size $q^{\frac{n-1}{2}}$ in $VO_{n-1}^{+}(q)$, where the defining matrix equals
$$\left(
\begin{array}{cr}
 A' & 0 \\
 0 & -1
\end{array}
\right).$$
This completes the proof of claim $(i)$.
\end{proof}

As it was mentioned above, it is of great interest in finite geometry to know whether $Q_6(q)$ and $Q_{n-1}^{+}(q)$, $n\geq 8$ have ovoids for particular values of $q$. If one wants to show that there are no ovoids, then, by Proposition~\ref{propThm2}, it is sufficient to show that the statement $(i)$ in Theorem~\ref{noncomplete-core-par-hyper} is not true for $VO_5(q)$ or $VO_6^{+}(q)$ and $VO_{n-2}^{+}(q)$, respectively. For some values of $q$ this might be easier, since there are less dimensions to consider.

Recall that there are ovoids in $Q_{4}(q)$, $q$ arbitrary;  $Q_{6}(3^k)$; $Q_{7}^{+}(q)$, $q$ prime or $q=p^k$ with $k$ odd and $p\equiv 2\ (\textrm{mod}~3)$. Hence, by Proposition~\ref{propThm2}, the following graphs satisfy statements $(i)-(v)$ in Theorem~\ref{noncomplete-core-par-hyper}:
\begin{align*}
&VO_{4}^{+}(q),\ q\ \textrm{odd},\\
&VO_{5}(3^k),\ k\ \textrm{arbitrary},\\
&VO_{6}^{+}(3^k),\ k\ \textrm{arbitrary},\\
&VO_{6}^{+}(q),\ q\ \textrm{odd prime},\\
&VO_{6}^{+}(q),\ q=p^k,\ p\ \textrm{and}\ k\ \textrm{odd},\ p\equiv 2\ (\textrm{mod}~3).
\end{align*}
In some cases of affine polar graphs of Witt index $<2$,  independent sets that attains the equality in statement $(i)$ in Theorem~\ref{noncomplete-core-par-hyper}, i.e. those with cardinality
\begin{equation}\label{maxneodvisne}
\frac{\big|V\big(VO_n^{\varepsilon}(q)\big)\big|}{\omega\big(VO_n^{\varepsilon}(q)\big)},
\end{equation}
were already constructed in the proof of Proposition~\ref{majhni_primeri}. Below we present few examples with Witt index $\geq 2$.

\begin{example}\label{primer0}
Consider the graph $VO_{4}^{+}(q)$, where $A=\diag(1,-d,-d,1)$ is the defining matrix (\ref{e3}) and $d\in\FF_q$ is a non-square. Then ${\cal I}=\Big\{(x,y,0,0)^{\tr} : x,y,\in\FF_q\Big\}$ is an independent set. In fact, if
${\bf u}_i=(x_i,y_i,0,0)^{\tr}$ for $i=1,2$, then
$$({\bf u}_1-{\bf u}_2)^{\tr}A({\bf u}_1-{\bf u}_2)=(x_1-x_2)^2-d(y_1-y_2)^2\neq 0$$ for ${\bf u}_1\neq {\bf u}_2$.
The size of ${\cal I}$ equals $q^2$, i.e., (\ref{maxneodvisne}).
\end{example}

Independent sets from Example~\ref{primer1} and  Example~\ref{primer2} are constructed by applying the technique described in the proof of Proposition~\ref{propThm2} to Thas-Kantor ovoids of $Q_{6}(3^k)$. The model used to represent the ovoid is from~\cite{williams}.
\begin{example}\label{primer1}
Consider the graph $VO_5(q)$, where $q=3^k$ for some $k\geq 1$, and the defining matrix (\ref{e3}) equals
\begin{equation}\label{5x5}
A=\left(\begin{array}{rrrrr}
 0 & 0 & 0 & 0 & -1 \\
 0 & 0 & 0 & -1 & 0 \\
 0 & 0 & 1 & 0 & 0 \\
 0 & -1 & 0 & 0 & 0 \\
 -1 & 0 & 0 & 0 & 0
\end{array}
\right).
\end{equation}
Let $d\in\FF_q$ be a non-square and let
$${\cal I}=\Big\{(x,y,z,x^2y-dy^3-xz,-d^{-1}x^3+xy^2+yz)^{\tr} : x,y,z\in\FF_q\Big\}.$$
If  ${\bf u}_i=(x_i,y_i,z_i,x_i^2y_i-dy_i^3-x_iz_i,-d^{-1}x_i^3+x_iy_i^2+y_iz_i)^{\tr}$ for $i=1,2$, then a direct computation with a use of equality $3=0$ shows that
\begin{align*}
({\bf u}_1&-{\bf u}_2)^{\tr}A({\bf u}_1-{\bf u}_2)=\\
=&-d^{-1}\Big((x_1-x_2)^2-d(y_1-y_2)^2\Big)^2+\Big((z_1-z_2)+(x_2y_1-x_1y_2)\Big)^2,
\end{align*}
which equals zero only if ${\bf u}_1={\bf u}_2$, so ${\cal I}$ is an independent set of size (\ref{maxneodvisne}).
\end{example}

\begin{example}\label{primer2}
Consider the graph $VO_6^{+}(q^k)$, where $q=3^k$ for some $k\geq 1$, and the defining matrix equals
\begin{equation}\label{primer2eq}
\left(
\begin{array}{cr}
 A & 0 \\
 0 & -1
\end{array}
\right),
\end{equation}
where $A$ is the matrix in (\ref{5x5}). Let ${\cal I}'\subseteq \FF_q^6$ be formed from the set ${\cal I}$ in Example~\ref{primer1} by extending its members with $0$ in the last entry, that is, ${\cal I}'=\{({\bf x}^{\tr},0)^{\tr} : {\bf x}\in {\cal I}\}$. Then ${\cal I}'$ is an independent set of size (\ref{maxneodvisne}).
\end{example}

\begin{example}\label{primer3}
Let $-3$ be a non-square in $\FF_q$, or equivalently  $q=p^k$, $p$ and $k$ both odd, with $p\equiv 2\ (\textrm{mod}~3)$ (the equivalency follows from the Law of quadratic reciprocity (cf.~\cite[Theorem~5.17]{finite-fields-LN}) and some elementary finite field theory). Then there are ovoids in $Q_{7}^{+}(q)$~\cite{kantor} and we can use the technique from the proof of Proposition~\ref{propThm2} to construct an independent set in $VO_{6}^{+}(q)$ of size (\ref{maxneodvisne}). We treat only the case when $-1$ is a non-square and $3$ is a square (the construction, when  $-1$ is a square and $3$ is a non-square, is similar).

Let
\begin{equation}\label{6x6p2mod3}
A=\left(\begin{array}{rrrrrr}
 -3 & 0 &  &  &  &  \\
 0 & 1 &  &  &  &  \\
  &  & 0 & -1 &  &  \\
  &  & -1 & 0 &  &  \\
  &  &  &  & 0 & 1 \\
  &  &  &  & 1 & 0
\end{array}\right)
\end{equation}
be the defining matrix and let
$${\cal I}=\Big\{(x, y, xz+yw, z, yz-xw, w)^{\tr} : y,z,w\in\FF_q, x=-(z^2+ w^2)/2\Big\}.$$
If  ${\bf u}_i=(x_i, y_i, x_iz_i+y_iw_i, z_i, y_iz_i-x_iw_i, w_i)^{\tr}$ for $i=1,2$, then by evaluating both sides of the equation
$$({\bf u}_1-{\bf u}_2)^{\tr}A({\bf u}_1-{\bf u}_2)=\Big(x_1+x_2+z_1z_2+w_1w_2\Big)^2+\Big(y_1-y_2+z_2w_1-z_1w_2\Big)^2,$$
with a use of equality $x_i=-(z_i^2+ w_i^2)/2$, we see that they are equal indeed. By Lemma~\ref{lema1}, the sum of two squares is zero only if both squares are zero, so we see that the expression above is zero only if ${\bf u}_1={\bf u}_2$. Hence, ${\cal I}$ is an independent set of size (\ref{maxneodvisne}).
\end{example}

Bijective maps which preserve cones on nonsingular non-anisotropic metric spaces of dimension $\geq 3$, with a symmetric bilinear form, were characterized in~\cite{lester1977}. In particular, this result characterizes the automorphisms of affine polar graph $VO_n^{\varepsilon}(q)$ for $q$ odd and $n\geq 3$. Hence, for the elliptic case this theorem characterizes also all graph endomorphisms, as it follows from Theorem~\ref{noncomplete-core-elliptic}. The same holds for parabolic and hyperbolic case if the polar space related to $Q_{n-1}^{\varepsilon}(q)$ has no ovoids, as it follows from Theorem~\ref{noncomplete-core-par-hyper}. This characterization even simplifies a bit if $q$ is a prime or if the defining matrix $A$ in (\ref{e3}) has entries that are fixed by all automorphisms of the field $\FF_q$ (for example, if entries are just $0$, $1$, $-1$), as we shall see in the next section for the graph obtained from Minkowski space.

\section{Finite Minkowski spaces}\label{sectionMinkowski}

In what follows we assume that $-1\in\FF_q$  is not a square, which is equivalent to $q\equiv 3$ $(\textrm{mod}~4)$ (cf.~\cite[p.~135]{wan2}). Then, by~(\ref{e41}), the field is of the form
\begin{equation}\label{e45}
\FF_q=\{0,x_1^2,\ldots, x_{\frac{q-1}{2}}^2, -x_1^2,\ldots, -x_{\frac{q-1}{2}}^2\}.
\end{equation}
In the literature devoted to some alternative theories in particle physics that are based on geometry over finite fields (cf.~\cite{ahmavaara1,ahmavaara2,ahmavaara3,ahmavaara4,coish,shapiro,belbla1968JMP,belbla1968NC,blasi}),
squares $x_1^2,\ldots, x_{\frac{q-1}{2}}^2$ and non-squares $-x_1^2,\ldots, -x_{\frac{q-1}{2}}^2$ are interpreted as `positive' and `negative' numbers, respectively. Though the product of two `positive' or two `negative' numbers is a `positive' number and the product of a `positive' and a `negative' number is `negative', this analogy with real numbers breaks down with addition, since a sum of two squares is not necessarily a square. However, it was shown in~\cite{kustaanheimo1950} that, for certain prime values $q$, the elements $1,2,3,\ldots, N$ are all squares for certain $N$, so the sum of two such elements is a square if it does not exceed $N$. Consequently, a relation $$a>b\myeq a-b\ \textrm{is `positive'}$$  becomes transitive for a large subset of consecutive numbers. There are infinite number of such prime values $q$ and value $N$ can be made arbitrary large by selecting sufficiently large $q$ (see~\cite{kustaanheimo1950,coish,ahmavaara1} for more details). By comparing the difference between two consecutive numbers in $1,2,3,\ldots, N$ with the `smallest observable length' one can consider an approximation between real and finite field coordinates, and consequently an approximation between large subsets of mathematical structures over a large finite field and corresponding structures over real numbers (cf.~\cite{ahmavaara1}). It is not the purpose of this section to take a `philosophical  point of view', which would be in favor of either finite or real geometry to be used as a model for some physical theory. Our research provides a better understanding of the mathematical background in finite case, and shows some of the fundamental geometric differences between finite and real Minkowski space that go beyond the approximation of coordinates.

A \emph{finite Minkowski space} of dimension $n$ is the vector space $\FF_q^n$ equipped with the inner product
\begin{equation}\label{e4}
({\bf x},{\bf y}):=x_1y_1-x_2y_2-\cdots -x_n y_n,
\end{equation}
where ${\bf x}:=(x_1,\ldots,x_n)^{\tr}$ and  ${\bf y}:=(y_1,\ldots,y_n)^{\tr}$. Two such \emph{events} ${\bf x}$ and ${\bf y}$ are \emph{light-like} if $({\bf x}-{\bf y}, {\bf x}-{\bf y})=0$. A map $\Phi: \FF_q^n\to \FF_q^n$ that maps light-like events into light-like events, that is, $\big(\Phi({\bf x})-\Phi({\bf y}),\Phi({\bf x})-\Phi({\bf y})\big)=0$ whenever $({\bf x}-{\bf y}, {\bf x}-{\bf y})=0$, is said to \emph{preserve the speed of light}.
Observe that the product (\ref{e4}) can be simply written as $({\bf x},{\bf y})={\bf x}^{\tr}M{\bf y}$ for the diagonal matrix $M:=\diag(1,-1,\ldots,-1)\in SGL_n(\FF_q)$. We say that matrices $L$ and $K$, of size $n\times n$, are \emph{Lorentz} and \emph{anti-Lorentz}, if
\begin{equation}\label{e11}
(L{\bf x},L{\bf y})=({\bf x},{\bf y})\quad \textrm{and} \quad (K{\bf x},K{\bf y})=-({\bf x},{\bf y})
\end{equation}
holds for all ${\bf x},{\bf y}\in\FF_q^n$, respectively. The two conditions in~(\ref{e11}) are equivalent to
$L^{\tr}ML=M$ and $K^{\tr}MK=-M$, respectively. By applying the determinant to the last equation we see that anti-Lorentz matrices do not exist for odd $n$, since the opposite would imply that $(\det K)^2=(-1)^n=-1$, which contradicts our assumption that $-1$ is not a square. The existence of anti-Lorentz matrices for 4-dimensional Minkowski space is well known (cf.~\cite{coish}) and essentially the same construction works for other even dimensions. Namely, by Lemma~\ref{lema1} there are $a_0,b_0\in\FF_q$ such that $a_0^2+b_0^2=-1$, so the block diagonal matrix
$$\left(\begin{array}{rrrrrrr}
0&1&  & &      &  & \\
1&0&  & &      &  & \\
 & & a_0&b_0&      &  & \\
 & &-b_0&a_0&      &  & \\
 & &  & &\ddots&  & \\
 & &  & &      &a_0 &b_0\\
 & &  & &      &-b_0&a_0
\end{array}\right),$$
is anti-Lorentz. This differs from the real case, where the Sylvester's law of inertia forbids the existence of anti-Lorentz matrices for $n\geq 3$.

Maps $\Phi$ on finite Minkowski space of dimension $n\geq 3$ that are bijective and such that both $\Phi$ and the inverse $\Phi^{-1}$ preserve the speed of light, were classified in~\cite{lester1977}. Actually, this result considered more general spaces. It was proved that any such map is of the form $\Phi({\bf x})={\cal L}({\bf x})+\Phi(0)$, where bijective map ${\cal L} :\FF_q^n\to\FF_q^n$ satisfy $\big({\cal L}({\bf x}),{\cal L}({\bf y})\big)=d_0 \tau \big(({\bf x},{\bf y})\big)$, ${\cal L}({\bf x}+{\bf y})={\cal L}({\bf x})+{\cal L}({\bf y})$, and ${\cal L}(a{\bf x})=\tau(a){\cal L}({\bf x})$ for all ${\bf x},{\bf y}\in \FF_q^n$ and $a\in\FF_q$. Here, $d_0\in\FF_q$ is a fixed nonzero scalar and $\tau :\FF_q\to\FF_q$ is a field automorphism. Now, given a field automorphism $\sigma$ and a column vector ${\bf x}$ or a matrix $A$, let ${\bf x}^{\sigma}$ and $A^{\sigma}$ denote the vector/matrix that is obtained by applying $\sigma$ entry-wise. Then, the map ${\bf x}\mapsto {\cal L}({\bf x})^{\tau^{-1}}$ is linear, i.e., a multiplication by some invertible matrix $P$, so ${\cal L}({\bf x})=(P{\bf x})^{\tau}$ for all ${\bf x}$. Since the entries of $M$ are just $1$, $-1$, and $0$, which are fixed by field automorphisms, we see that $M^{\tau}=M$. Consequently,
\begin{align*}
({\bf x}^{\tau})^{\tr}(P^{\tau})^{\tr}MP^{\tau}{\bf y}^{\tau}&= \big({\cal L}({\bf x}),{\cal L}({\bf y})\big) = d_0 \tau \big(({\bf x},{\bf y})\big)\\
&= d_0 ({\bf x}^{\tau})^{\tr}M^{\tau}{\bf y}^{\tau} = ({\bf x}^{\tau})^{\tr} (d_0 M){\bf y}^{\tau}
\end{align*}
for all ${\bf x}$ and ${\bf y}$, so $(P^{\tau})^{\tr}MP^{\tau}=d_0 M$. By (\ref{e45}), $d_0=\pm x_0^2$ for some nonzero $x_0$, so the matrix $x_0^{-1}P^{\tau}$ is either Lorentz or anti-Lorentz, and $\Phi({\bf x})=x_0 (x_0^{-1}P^{\tau}){\bf x}^{\tau}+\Phi(0)$. Hence the main theorem in~\cite{lester1977} simplifies into the following lemma in the case of a finite Minkowski space.

\begin{lemma}(cf.~\cite{lester1977})\label{lester_auto}
Let $n\geq 3$. A bijective map $\Phi: \FF_{q}^n\to \FF_q^n$ satisfies the rule
\begin{equation*}\label{e6}
({\bf x}-{\bf y}, {\bf x}-{\bf y})=0 \Longleftrightarrow  \big(\Phi({\bf x})-\Phi({\bf y}),\Phi({\bf x})-\Phi({\bf y})\big)=0
\end{equation*}
if and only if it is of the form
\begin{align}
\label{e7} \Phi({\bf x})&=a L {\bf x}^{\tau}+{\bf x}_0\\
\nonumber \textrm{or}&\\
\label{e7a} \Phi({\bf x})&=a K {\bf x}^{\tau}+{\bf x}_0\qquad (n\ \textrm{is even}),
\end{align}
where $0\neq a\in\FF_{q}$ and ${\bf x}_0\in\FF_q^n$ are fixed, $\tau$ is a field automorphism of $\FF_q$ that is applied entry-wise to ${\bf x}$,  while $L$ and~$K$ are Lorentz and anti-Lorentz matrices respectively.
\end{lemma}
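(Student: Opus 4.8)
The plan is to prove the two implications of the equivalence separately, treating the substantive direction as a consequence of the main theorem of \cite{lester1977} followed by an elementary normalization. For the ``if'' direction I would take $\Phi$ of the form (\ref{e7}) or (\ref{e7a}), say $\Phi({\bf x})=aR{\bf x}^{\tau}+{\bf x}_0$ with $R$ Lorentz or anti-Lorentz, and check the rule directly. Since $\tau$ is additive and commutes with matrix multiplication, $\Phi({\bf x})-\Phi({\bf y})=aR({\bf x}-{\bf y})^{\tau}$, so by (\ref{e11}) one obtains $\big(\Phi({\bf x})-\Phi({\bf y}),\Phi({\bf x})-\Phi({\bf y})\big)=\pm a^{2}\,\tau\big(({\bf x}-{\bf y},{\bf x}-{\bf y})\big)$, with sign $+$ in the Lorentz case and $-$ in the anti-Lorentz case. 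As $a\neq 0$ and $\tau$ is a bijection fixing $0$, this vanishes exactly when $({\bf x}-{\bf y},{\bf x}-{\bf y})=0$; bijectivity of $\Phi$ is clear since $R\in GL_n(\FF_q)$ and $\tau$ is an automorphism.

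For the ``only if'' direction I would invoke \cite{lester1977}: a finite Minkowski space of dimension $n\geq 3$ is a nonsingular, non-anisotropic metric space with a symmetric bilinear form, so that theorem applies and gives $\Phi({\bf x})={\cal L}({\bf x})+\Phi(0)$, where ${\cal L}$ is additive, satisfies ${\cal L}(a{\bf x})=\tau(a){\cal L}({\bf x})$ for some field automorphism $\tau$, and obeys $\big({\cal L}({\bf x}),{\cal L}({\bf y})\big)=d_{0}\,\tau\big(({\bf x},{\bf y})\big)$ for a fixed nonzero $d_{0}$. Then ${\bf x}\mapsto{\cal L}({\bf x})^{\tau^{-1}}$ is $\FF_q$-linear, hence equals ${\bf x}\mapsto P{\bf x}$ for some $P\in GL_n(\FF_q)$, so ${\cal L}({\bf x})=(P{\bf x})^{\tau}=P^{\tau}{\bf x}^{\tau}$. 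Substituting this into the bilinear-form identity and using $M^{\tau}=M$ (the entries of $M$ are $0$ and $\pm1$) yields $(P^{\tau})^{\tr}MP^{\tau}=d_{0}M$. Since $-1$ is a non-square, equation (\ref{e45}) gives $d_{0}=\varepsilon x_{0}^{2}$ with $\varepsilon\in\{1,-1\}$ and $x_{0}\neq 0$; putting $R:=x_{0}^{-1}P^{\tau}$ we get $R^{\tr}MR=\varepsilon M$, so $R$ is Lorentz when $\varepsilon=1$ and anti-Lorentz when $\varepsilon=-1$, the latter forcing $n$ to be even as in the discussion preceding the lemma. Writing $a:=x_{0}$ and ${\bf x}_{0}:=\Phi(0)$ then recovers exactly (\ref{e7}) or (\ref{e7a}).

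The main difficulty is not of a deep nature, since the hard structural input is packaged in \cite{lester1977}; it lies in the bookkeeping with the field automorphism $\tau$ — checking that $\tau$ applied entrywise commutes with matrix--vector products, that $P^{\tau}\in GL_n(\FF_q)$, and that $M^{\tau}=M$ — together with the small number-theoretic observation that the non-square hypothesis forces $d_{0}$ to be $\pm$ a square, which is precisely what separates the Lorentz from the anti-Lorentz alternative and confines the latter to even $n$.
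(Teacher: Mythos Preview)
Your proposal is correct and follows essentially the same route as the paper: invoke Lester's theorem to obtain $\Phi({\bf x})={\cal L}({\bf x})+\Phi(0)$ with ${\cal L}$ $\tau$-semilinear and $({\cal L}({\bf x}),{\cal L}({\bf y}))=d_0\,\tau(({\bf x},{\bf y}))$, linearize via ${\bf x}\mapsto{\cal L}({\bf x})^{\tau^{-1}}$ to get ${\cal L}({\bf x})=P^{\tau}{\bf x}^{\tau}$, use $M^{\tau}=M$ to deduce $(P^{\tau})^{\tr}MP^{\tau}=d_0M$, and then write $d_0=\pm x_0^2$ from~(\ref{e45}) to extract a Lorentz or anti-Lorentz factor. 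Your added verification of the ``if'' direction is a welcome complement, and the bookkeeping observations you flag (commutation of $\tau$ with matrix products, $M^{\tau}=M$, the parity restriction on anti-Lorentz matrices) are exactly the small points the paper uses as well.
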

\begin{remark}
Recall that in the case $q$ is a prime, the identity map is the unique automorphism $\tau$. More generally, if $q=p^k$, where $p$ is a prime, then the field automorphisms are precisely the maps $a\mapsto a^{p^j}$ for $0\leq j\leq k-1$ (cf.~\cite[Theorem~2.21]{finite-fields-LN}).
\end{remark}

We define a \emph{finite Minkowski graph} $M_n(q)$ as the affine polar graph with $M$ as the defining matrix $A$ in (\ref{e3}). Since $\det M=(-1)^{n-1}$, we deduce from Lemma~\ref{lema1} that
$$M_n(q)=\left\{
\begin{array}{ll}VO_n(q)\ &\textrm{if}\ n\ \textrm{is odd},\\
VO_n^{+}(q)\ &\textrm{if}\ n\equiv 2\, (\textrm{mod}\, 4),\\
VO_n^{-}(q)\ &\textrm{if}\ n\equiv 0\, (\textrm{mod}\, 4).
\end{array}
\right.$$
Observe that maps that are characterized in Lemma~\ref{lester_auto} are precisely the automorphisms of $M_n(q)$. We are now able to state and prove the main theorem of this paper.
\begin{theorem}\label{glavni}
Let $q \equiv 3\, (\textrm{mod}~4)$ and $n\geq 4$. Assume that one of the following conditions is satisfied:
\begin{align*}
& n\equiv 0\, (\textrm{mod}~4),\\
& n\geq 9\  \textrm{is odd},\\
& n=7\ \textrm{and}\ q>3\ \textrm{is a prime},\\
& n=7\ \textrm{and}\ Q_{n-1}(q)\ \textrm{does not have an ovoid},\\
& n\equiv 2\ (\textrm{mod}~4)\ \textrm{and}\ Q_{n-1}^{+}(q)\ \textrm{does not have an ovoid}.
\end{align*}
A map $\Phi: \FF_{q}^n\to \FF_q^n$ satisfies the rule
\begin{equation}\label{rule}
({\bf x}-{\bf y}, {\bf x}-{\bf y})=0, {\bf x}\neq {\bf y} \Longrightarrow  \big(\Phi({\bf x})-\Phi({\bf y}),\Phi({\bf x})-\Phi({\bf y})\big)=0, \Phi({\bf x})\neq \Phi({\bf y})
\end{equation}
if and only if it is of the form~(\ref{e7}) or~(\ref{e7a}), where the form~(\ref{e7a}) is possible only for even $n$.

For general $n\geq 4$, if there is a map $\Phi$ that satisfies the rule (\ref{rule}) and is neither of the form (\ref{e7}) nor (\ref{e7a}), then there exists a map that satisfies the rule (\ref{rule}) and has $\omega\big(M_n(q)\big)$ pairwise light-like events as its image.
\end{theorem}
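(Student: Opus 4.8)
The plan is to recognize that the rule~(\ref{rule}) says exactly that $\Phi$ is a graph endomorphism of the finite Minkowski graph $M_n(q)$. Indeed, a pair $\{{\bf x},{\bf y}\}$ is an edge of $M_n(q)$ precisely when ${\bf x}\neq {\bf y}$ and $({\bf x}-{\bf y},{\bf x}-{\bf y})=({\bf x}-{\bf y})^{\tr}M({\bf x}-{\bf y})=0$, so the hypothesis and conclusion of~(\ref{rule}) read `$\{{\bf x},{\bf y}\}$ is an edge' and `$\{\Phi({\bf x}),\Phi({\bf y})\}$ is an edge', which is exactly implication~(\ref{i6}). By the observation preceding the theorem, the maps of the form~(\ref{e7}) or~(\ref{e7a}) are precisely the automorphisms of $M_n(q)$. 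Hence the asserted equivalence amounts to saying that, under the listed hypotheses, every endomorphism of $M_n(q)$ is an automorphism, i.e.\ that $M_n(q)$ is a core. The direction `$\Leftarrow$' is then immediate, since an automorphism is in particular an endomorphism and so satisfies~(\ref{rule}).

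For the direction `$\Rightarrow$' I would establish that $M_n(q)$ is a core in each listed case, using the dictionary $M_n(q)=VO_n(q)$ ($n$ odd), $M_n(q)=VO_n^{+}(q)$ ($n\equiv 2\ (\textrm{mod}~4)$), $M_n(q)=VO_n^{-}(q)$ ($n\equiv 0\ (\textrm{mod}~4)$). If $n\equiv 0\ (\textrm{mod}~4)$ then $M_n(q)=VO_n^{-}(q)$ with $n\geq 4$ even, which is a core by Theorem~\ref{noncomplete-core-elliptic}. In the remaining parabolic and hyperbolic cases the Witt index is $\geq 2$, so I would use the contrapositive of Theorem~\ref{noncomplete-core-par-hyper}: if $Q_{n-1}^{\varepsilon}(q)$ has no ovoid then statement $(vi)$ fails, hence $(i)$--$(v)$ all fail, and in particular $(iv)$ yields that $VO_n^{\varepsilon}(q)$ is a core. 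The missing nonexistence of an ovoid is supplied case by case: for $n\geq 9$ odd by~\cite{GunaMoor1997}; for $n=7$ with $q>3$ prime by~\cite{okeefe, ball2006}; while for $n=7$ and for $n\equiv 2\ (\textrm{mod}~4)$ it is a hypothesis of the theorem. Once $M_n(q)$ is known to be a core, any $\Phi$ satisfying~(\ref{rule}) is an endomorphism, hence an automorphism, hence of the form~(\ref{e7}) or~(\ref{e7a}) by Lemma~\ref{lester_auto}; note that bijectivity of $\Phi$ is obtained automatically, since an endomorphism of a core is bijective.

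For the final (general $n\geq 4$) assertion I would argue through Lemma~\ref{lema5}. Suppose some $\Phi$ satisfies~(\ref{rule}) but is neither of the form~(\ref{e7}) nor~(\ref{e7a}); then $\Phi$ is an endomorphism of $M_n(q)$ that is not an automorphism, so $M_n(q)$ is not a core. By Lemma~\ref{lema5} the core of $M_n(q)$ is then a complete graph on $\omega\big(M_n(q)\big)$ vertices, realized inside $M_n(q)$ as a maximum clique $\mathcal{K}$, that is, a set of $\omega\big(M_n(q)\big)$ pairwise light-like events. The associated retraction $\Psi:M_n(q)\to\mathcal{K}$ is an endomorphism, hence satisfies~(\ref{rule}), and fixes $\mathcal{K}$ pointwise, so its image is exactly $\mathcal{K}$; this is the required map. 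Alternatively one may invoke the explicit homomorphism~(\ref{coloring}) onto a maximum clique constructed in Proposition~\ref{klikaneod2}.

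The main obstacle is concentrated entirely in the core property of $M_n(q)$, which is not elementary: it is precisely what Theorem~\ref{noncomplete-core-elliptic} and Theorem~\ref{noncomplete-core-par-hyper} provide, and in the parabolic and hyperbolic cases it genuinely rests on deep finite-geometric nonexistence results for ovoids. The delicate point is therefore organizational, namely to split the hypotheses so that each listed condition feeds the correct ovoid-nonexistence input (or the elliptic spectral bound) before the core criterion is applied; the translation between~(\ref{rule}) and endomorphisms, and between~(\ref{e7})/(\ref{e7a}) and automorphisms, together with Lemma~\ref{lema5}, then assembles everything routinely.
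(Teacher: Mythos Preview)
Your proposal is correct and follows essentially the same approach as the paper: translate~(\ref{rule}) into the statement that $\Phi$ is an endomorphism of $M_n(q)$, identify the forms~(\ref{e7})/(\ref{e7a}) with automorphisms via Lemma~\ref{lester_auto}, and then show $M_n(q)$ is a core in each listed case using Theorem~\ref{noncomplete-core-elliptic} (elliptic) or the contrapositive of Theorem~\ref{noncomplete-core-par-hyper} together with the cited ovoid nonexistence results (parabolic/hyperbolic), with the final clause handled by Lemma~\ref{lema5}. Your write-up is in fact more explicit than the paper's own proof, but the logical structure is identical.
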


\begin{remark}
For $n=4$ a different proof of Theorem~\ref{glavni} was already presented by the author in the arXiv version of~\cite{drugi_del}. For many dimensions, which include $n=4$, the finite and the real Minkowski space differ in an essential way, since in the real case there always exist non-bijective maps $\Phi : \RR^n\to\RR^n$ that satisfy the condition (\ref{rule}). In fact, since the field $\RR$ is infinite, there is an injection $f : \RR^n\to\RR$, so the map $\Phi({\bf x})=\big(f({\bf x}),f({\bf x}),0,\ldots,0\big)^{\tr}$ satisfies (\ref{rule}), since its image consists of pairwise light-like events.
\end{remark}
\begin{bew}{of Theorem~\ref{glavni}}
Maps that satisfy the rule (\ref{lester_auto}) are precisely the endomorphisms of graph $M_n(q)$. If $n\equiv 0\, (\textrm{mod}~4)$, then $M_n(q)=VO_n^{-}(q)$ is a core by Theorem~\ref{noncomplete-core-elliptic}, so any its endomorphism is an automorphism, and hence characterized in Lemma~\ref{lester_auto}. If $n$ is odd or $n\equiv 2\, (\textrm{mod}~4)$, then $M_n(q)$ equals $VO_n(q)$ or $VO_n^{+}(q)$, respectively. If the corresponding orthogonal polar space does not have an ovoid, then $M_n(q)$ is a core by Theorem~\ref{noncomplete-core-par-hyper} and the result follows as in the elliptic case. Recall that there are no ovoids in $Q_{n-1}(q)$ if $n\geq 9$ or if $n=7$ and $q>3$ is a prime.

For general $n\geq 4$, if there is a map obeying the rule (\ref{rule}) and not of the form (\ref{e7})-(\ref{e7a}), then it is a graph endomorphisms but not an automorphism, so, by Lemma~\ref{lema5}, the core of $M_n(q)$ is a complete graph on $\omega\big(M_n(q)\big)$ vertices. Consequently there exists an endomorphism with this complete graph  as its image.\cqfd
\end{bew}

Recall from Proposition~\ref{klikaneod2} that in the case that
\begin{equation*}\label{minkowski-neodvisna}
\alpha\big(M_n(q)\big)=\frac{V\big(M_n(q)\big)}{\omega\big(M_n(q)\big)}
\end{equation*}
holds, then there is a map that satisfies the rule (\ref{rule}) and has image formed by $\omega\big(M_n(q)\big)$ pairwise light-like events. In fact, if ${\cal K}$ and ${\cal I}$ are maximum clique and maximum independent set, respectively, then the map (\ref{coloring}) has this property. This map is of the form $\Phi({\bf x})={\bf x}_1$, where ${\bf x}_1\in {\cal K}$ and ${\bf x}_2\in {\cal I}$ are the unique vectors from these sets that satisfy ${\bf x}={\bf x}_1+{\bf x}_2$. Below we construct few examples of such maps. The cases $n=2$ and $n=3$ are trivial, while other maps are constructed from the independent sets in Examples~\ref{primer1},~\ref{primer2},~\ref{primer3}.

\begin{example}
If $n=2$, we can choose ${\cal K}=\{(x_1,x_1)^{\tr} : x_1\in\FF_q\}$ and ${\cal I}=\{(0,x_2)^{\tr} : x_2\in\FF_q\}$ as the clique and independent set in  $M_2(q)$, respectively. The decomposition ${\bf x}=(x_1,x_2)^{\tr}=(x_1,x_1)^{\tr}+(0,x_2-x_1)^{\tr}\in {\cal K}+{\cal I}$ implies that the map $\Phi({\bf x})=(x_1,x_1)^{\tr}$ satisfies the rule (\ref{rule}).

If $n=3$, we can choose ${\cal K}=\{(x_1,x_1,0)^{\tr} : x_1\in\FF_q\}$ and ${\cal I}=\{(0,x_2,x_3)^{\tr} : x_2,x_3\in\FF_q\}$, which induce the map $\Phi({\bf x})=(x_1,x_1,0)^{\tr}$, where ${\bf x}=(x_1,x_2,x_3)^{\tr}$.
\end{example}
\begin{example}\label{exa5}
Let $n=5$ and $q=3^k$, where $k$ is odd, so that $-1$ is not a square. Then $P^{\tr}MP=A$, where $A$ is the matrix (\ref{5x5}) and
\begin{equation}\label{pinverzp}
P=\left(
\begin{array}{rrrrr}
 1 & 0 & 0 & 0 & 1 \\
 0 & 1 & 1 & 1 & 0 \\
 0 & 1 & 0 & -1 & 0 \\
 0 & 1 & -1 & 1 & 0 \\
 1 & 0 & 0 & 0 & -1
\end{array}
\right)\quad \textrm{with}\quad P^{-1}=\left(
\begin{array}{rrrrr}
 -1 & 0 & 0 & 0 & -1 \\
 0 & 1 & -1 & 1 & 0 \\
 0 & -1 & 0 & 1 & 0 \\
 0 & 1 & 1 & 1 & 0 \\
 -1 & 0 & 0 & 0 & 1
\end{array}
\right).
\end{equation}
It follows from Example~\ref{primer1} that
$$P\Big\{\big(x_1,x_2,x_3,f(x_1,x_2,x_3),g(x_1,x_2,x_3)\big)^{\tr} : x_1,x_2,x_3\in\FF_q\Big\},$$
with $f(x_1,x_2,x_3)=x_1^2x_2+x_2^3-x_1x_3$ and $g(x_1,x_2,x_3)=x_1^3+x_1x_2^2+x_2x_3$, is an independent set in $M_5(q)$. Similarly, $P\{(0,0,0,x_4,x_5)^{\tr} : x_4,x_5\in\FF_q\}$ is a clique. From the decomposition
\begin{align*}
P{\bf x}=P(x_1,x_2,x_3,x_4,x_5)^{\tr}&=P(0,0,0,x_4-f(x_1,x_2,x_3),x_5-g(x_1,x_2,x_3))^{\tr}\\
&+P(x_1,x_2,x_3,f(x_1,x_2,x_3),g(x_1,x_2,x_3))^{\tr},
\end{align*}
we deduce the map $\Phi(P{\bf x})=P(0,0,0,x_4-f(x_1,x_2,x_3),x_5-g(x_1,x_2,x_3))^{\tr}$. If $(y_1,y_2,y_3,y_4,y_5)^{\tr}:=P{\bf x}$, that is, $${\bf x}=P^{-1}{\bf y}=(-y_1-y_5,y_2-y_3+y_4,-y_2+y_4,y_2+y_3+y_4,-y_1+y_5)^{\tr},$$ then we obtain map
$$\Phi({\bf y})=P\left(\begin{array}{c}
 0\\
 0\\
 0\\
 y_2+y_3+y_4-f(-y_1-y_5,y_2-y_3+y_4,-y_2+y_4)\\
 -y_1+y_5-g(-y_1-y_5,y_2-y_3+y_4,-y_2+y_4)
\end{array}\right)$$
that satisfies the rule (\ref{rule}) in explicit form.
\end{example}
\begin{example}
Let $n=6$ and $q=3^k$, where $k$ is odd. Then the matrix (\ref{primer2eq}) equals $Q^{\tr}MQ$, where
$$Q=\left(
\begin{array}{cc}
 P & 0 \\
 0 & 1
\end{array}
\right)$$
and $P$ is the matrix in (\ref{pinverzp}).
It follows from Example~\ref{primer2} that
$$Q\Big\{\big(x_1,x_2,x_3,f(x_1,x_2,x_3),g(x_1,x_2,x_3),0\big)^{\tr} : x_1,x_2,x_3\in\FF_q\Big\},$$
where $f, g$ are defined in Example~\ref{exa5}, is an independent set in $M_6(q)$. The set $Q\{(0,0,x_3,x_4,x_5,x_3)^{\tr} : x_3,x_4,x_5\in\FF_q\}$ is a clique. We can use the decomposition
\begin{align*}
Q{\bf x}&=Q(0,0,x_6,x_4-f(x_1,x_2,x_3-x_6),x_5-g(x_1,x_2,x_3-x_6),x_6)^{\tr}\\
&+Q(x_1,x_2,x_3-x_6,f(x_1,x_2,x_3-x_6),g(x_1,x_2,x_3-x_6),0)^{\tr},
\end{align*}
and proceed in the same way as in Example~\ref{exa5}, to obtain map
$$\Phi({\bf y})=Q\left(\begin{array}{c}
 0\\
 0\\
 y_6\\
 y_2+y_3+y_4-f(-y_1-y_5,y_2-y_3+y_4,-y_2+y_4-y_6)\\
 -y_1+y_5-g(-y_1-y_5,y_2-y_3+y_4,-y_2+y_4-y_6)\\
 y_6
\end{array}\right)$$
that satisfies the rule (\ref{rule}).
\end{example}
\begin{example}
Let $n=6$ and $q=p^k$, where $k$ is odd and $p\equiv 11$ $(\textrm{mod}~12)$. Note that, by Chinese remainder theorem and Law of quadratic reciprocity, this is just a compact way of saying that $-1$ is non-square, $3$ is a square, $k$ is odd, and $p\equiv 2$ $(\textrm{mod}~3)$, so that the constructions in Example~\ref{primer3} are valid.  Let $a_0, b_0, c_0$ be such that $a_0^2+b_0^2=-1$ and $c_0^2=3$.
Then $P^{\tr}MP=A$, where $A$ is the matrix (\ref{6x6p2mod3}) and
\begin{equation*}
P=\left(
\begin{array}{cccccc}
 0 & 1 & 0 & 0 & 0 & 0 \\
 0 & 0 & 0 & 0 & \frac{1}{2} & -1 \\
 0 & 0 & \frac{1}{2} & 1 & 0 & 0 \\
 0 & 0 & \frac{a_0}{2} & -a_0 & \frac{b_0}{2} & b_0 \\
 0 & 0 & \frac{b_0}{2} & -b_0 & -\frac{a_0}{2} & -a_0 \\
 c_0 & 0 & 0 & 0 & 0 & 0
\end{array}
\right)
\end{equation*}
with
\begin{equation*}
\quad P^{-1}=\left(
\begin{array}{cccccc}
 0 & 0 & 0 & 0 & 0 & \frac{1}{c_0} \\
 1 & 0 & 0 & 0 & 0 & 0 \\
 0 & 0 & 1 & -a_0 & -b_0 & 0 \\
 0 & 0 & \frac{1}{2} & \frac{a_0}{2} & \frac{b_0}{2} & 0 \\
 0 & 1 & 0 & -b_0 & a_0 & 0 \\
 0 & -\frac{1}{2} & 0 & -\frac{b_0}{2} & \frac{a_0}{2} & 0
\end{array}
\right).
\end{equation*}
It follows from Example~\ref{primer3} that
$$P\Big\{\big(f(x_4,x_6), x_2, g(x_2,x_4,x_6), x_4, h(x_2,x_4,x_6), x_6\big)^{\tr} : x_2,x_4,x_6\in\FF_q\Big\},$$
where
\begin{align*}
f(x_4,x_6)&=-(x_4^2+x_6^2)/2,\\
g(x_2,x_4,x_6)&=f(x_4,x_6)x_4+x_2x_6,\\
h(x_2,x_4,x_6)&=x_2x_4-f(x_4,x_6)x_6,
\end{align*}
is an independent set in $M_6(q)$. Similarly, $P\{(x_1, c_0 x_1,x_3,0,x_5,0)^{\tr} : x_1, x_3, x_5\in\FF_q\}$ is a clique. We can use the decomposition
\begin{align*}
P{\bf x}&=P\left(\begin{array}{c}x_1-f(x_4,x_6)\\
c_0x_1-c_0f(x_4,x_6)\\
x_3 - g\big(x_2-c_0x_1+c_0f(x_4,x_6),x_4,x_6\big)\\
0\\
x_5- h\big(x_2-c_0x_1+c_0f(x_4,x_6),x_4,x_6\big)\\
0
\end{array}\right)\\
&+P\left(\begin{array}{c}f(x_4,x_6)\\
x_2-c_0x_1+c_0f(x_4,x_6)\\
g\big(x_2-c_0x_1+c_0f(x_4,x_6),x_4,x_6\big)\\
x_4\\
h\big(x_2-c_0x_1+c_0f(x_4,x_6),x_4,x_6\big)\\
x_6
\end{array}\right),
\end{align*}
and proceed in the same way as above, to obtain map
$$\Phi({\bf y})=P\left(\begin{array}{c}\frac{y_6}{c_0}-f\big(s,t\big)\\
y_6-c_0f\big(s,t\big)\\
y_3-a_0y_4-b_0y_5- g\Big(y_1-y_6+c_0 f\big(s,t\big),s,t\Big)\\
0\\
y_2-b_0y_4+a_0y_5- h\Big(y_1-y_6+c_0 f\big(s,t\big),s,t\Big)\\
0
\end{array}\right)$$
that satisfies the rule (\ref{rule}). Here,
\begin{align*}
s=s(y_3,y_4,y_5)&:=\frac{y_3+a_0y_4+b_0y_5}{2},\\
t=t(y_2,y_4,y_5)&:=\frac{-y_2-b_0y_4+a_0y_5}{2}.
\end{align*}
\end{example}

Recall that there exist a map
$\Phi : \FF_q^6\to\FF_q^6$ that satisfy the rule (\ref{rule}) and has the image formed by pairwise light-like events,  whenever $q$ is a prime of the form  $q\equiv 3$ $(\textrm{mod}~4)$. In fact, the image of such a map is a 3 dimensional (totally isotropic) vector subspace or its translation. Unfortunately, for arbitrary $q$ of this form, we were not able to construct such a map explicitly, since the Conway et al/Moorhouse ovoids in $Q_{7}^{+}(q)$~\cite{conway, moorhouse1993}, which guarantee the existence of such a map, are not parameterized.\bigskip

\noindent{\bf Acknowledgements.}
The author is thankful
to Andries E. Brouwer for pointing him to reference~\cite{hubaut}. He would also like to thank Gabriel Verret for pointing him to reference~\cite{roberson}, which subsequently led to the observation that Proposition~\ref{klikaneod2} has already been proved in~\cite{godsil_notes}.

\end{document}